\def\qed{\hfill$\Box$\par\vskip1em}
\begin{document}
\title{An Asynchronous Maximum Independent Set Algorithm by Myopic Luminous Robots on Grids\thanks{Supported by project ESTATE (Ref. ANR-16-CE25-0009-03), JSPS KAKENHI No. 19K11828, and Israel \& Japan Science and Technology Agency (JST) SICORP (Grant\#JPMJSC1806).}}
\titlerunning{A Maximum Independent Set Algorithm by Myopic Luminous Robots on Grids}
%
\author{Sayaka Kamei \inst{1} \and S\'ebastien Tixeuil\inst{2}}

\authorrunning{S. Kamei and S. Tixeuil}
%
\institute{Graduate School of Advanced Science and Engineering, Hiroshima University, Japan.
\email{s10kamei@hiroshima-u.ac.jp}   \and
Sorbonne University, CNRS, LIP6, Paris, France. \email{Sebastien.Tixeuil@lip6.fr}}
\maketitle              
\begin{abstract}
We consider the problem of constructing a maximum independent set with mobile myopic luminous robots on a grid network whose size is finite but unknown to the robots.
In this setting, the robots enter the grid network one-by-one from a corner of the grid, and they eventually have to be disseminated on the grid nodes so that the occupied positions form a maximum independent set of the network.
We assume that robots are asynchronous, anonymous, silent, and they execute the same distributed algorithm.
In this paper, we propose two algorithms: 
The first one assumes the number of light colors of each robot is three and the visible range is two, but uses additional strong assumptions of port-numbering for each node.
To delete this assumption, the second one assumes the number of light colors of each robot is seven and the visible range is three.
In both algorithms, the number of movements is $O(n(L+l))$ steps where $n$ is the number of nodes and $L$ and $l$ are the grid dimensions.

\keywords{LCM robot systems \and maximum independent set.}
\end{abstract}
\section{Introduction}
Swarm robotics envisions groups of mobile robots self-organizing and cooperating toward the resolution of common objectives, such as patrolling, exploring and mapping disaster areas, constructing ad hoc mobile communication infrastructures to enable communication with rescue teams, etc.
Our focus in this paper is the autonomous deployment of mobile robots in an unknown size rectangular area, \emph{e.g.} for the purpose of establishing a communication infrastructure (if robots carry antennas) or a surveillance device (if robots carry intrusion sensors). 
When considering the rectangular area as a discrete structure (\emph{i.e.}, a graph, that depends on the antenna/sensor range: two nodes in the graph are adjacent if and only if they are within the range of the antenna/sensor), one can consider several placement strategies. Given that every location in the area must be covered by an antenna/sensor, there are two competing metrics:
\begin{enumerate}
\item The \emph{number} of deployed robots: The cost of the deployment obviously depends linearly from the number of robots deployed. 
\item The \emph{resilience} of the infrastructure in the case robots fail unpredictably: This amounts to the number of locations that are left uncovered when a robot (or a set of robots) ceases to perform its algorithm.
\end{enumerate}

Assuming full coverage is necessary, two extreme placement strategies are possible: A complete filling of each location by a robot enables maximum resilience (uncovering one location, say $C$ in Fig.~\ref{fig:placements}(d), requires to disable five robots, at positions $A$, $B$, $C$, $D$ and $E$ in Fig.~\ref{fig:placements}(d)) but also requires deploying one robot per location (so, the cost is highest), while a minimum dominating set strategy yields minimum cost, but poor resilience (disabling a single robot, say at $C$ in Fig.~\ref{fig:placements}(c), uncovers five locations, $A$, $B$, $C$, $D$, and $E$ in Fig.~\ref{fig:placements}(c)).
Maximal and maximum independent set placements are somewhat more balanced, despite the fact that any robot failure will uncover its location: a maximal independent set may use as little as one-third of the robots required for a complete filling, while retaining decent resilience (\emph{e.g.} in Fig.~\ref{fig:placements}(a), at least two robots failures $C$ and $D$ are required to disconnect locations $A$ and $B$ beyond those initially hosting a robot); finally, a maximum independent set placement policy yields a resilience that is close to optimal (\emph{e.g.} four robot failures, $A$, $B$, $C$, and $D$ disconnect only one additional location, $E$ in Fig.~\ref{fig:placements}(b)) while using half of the robots required for a complete filling.
In this paper, we concentrate on placing the robots according to a maximum independent set organization.

\begin{figure}

\subfigure[Maximal Independent Set Placement]{
\centering
\includegraphics[width=.47\textwidth]{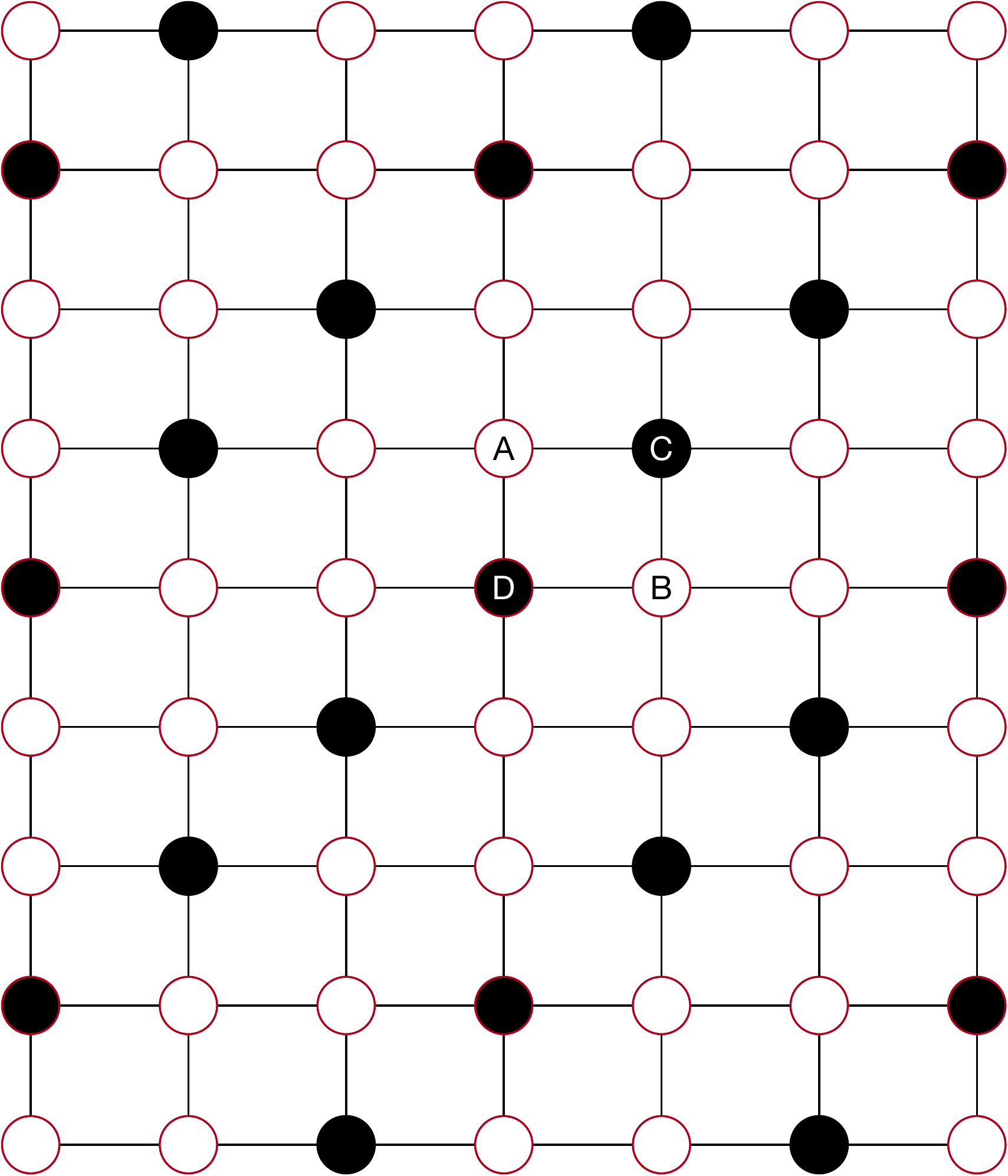}
}
\subfigure[Maximum Independent Set Placement]{
\centering
\includegraphics[width=.47\textwidth]{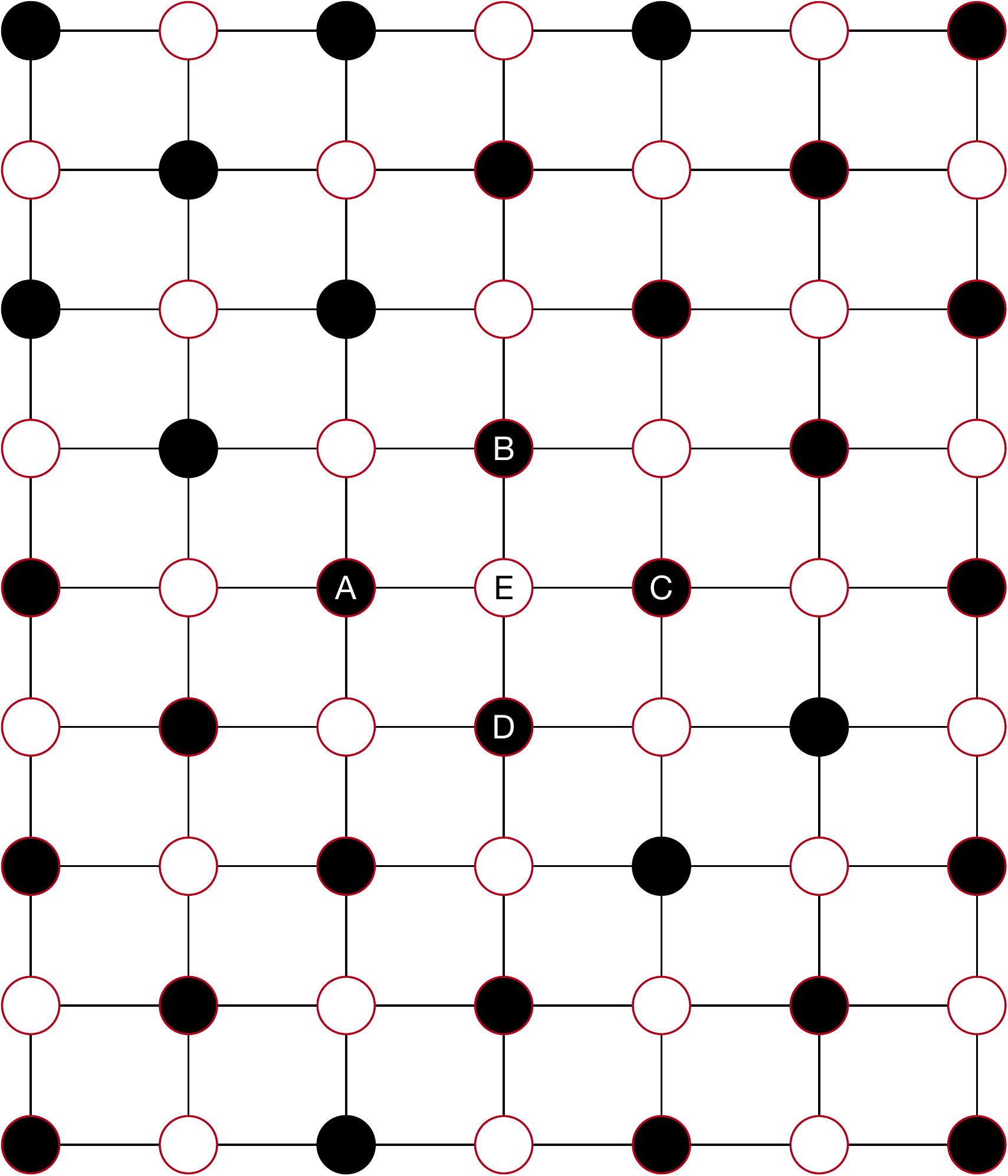}
}

\subfigure[Minimum Dominating Set Placement]{ 
\centering
\includegraphics[width=.47\textwidth]{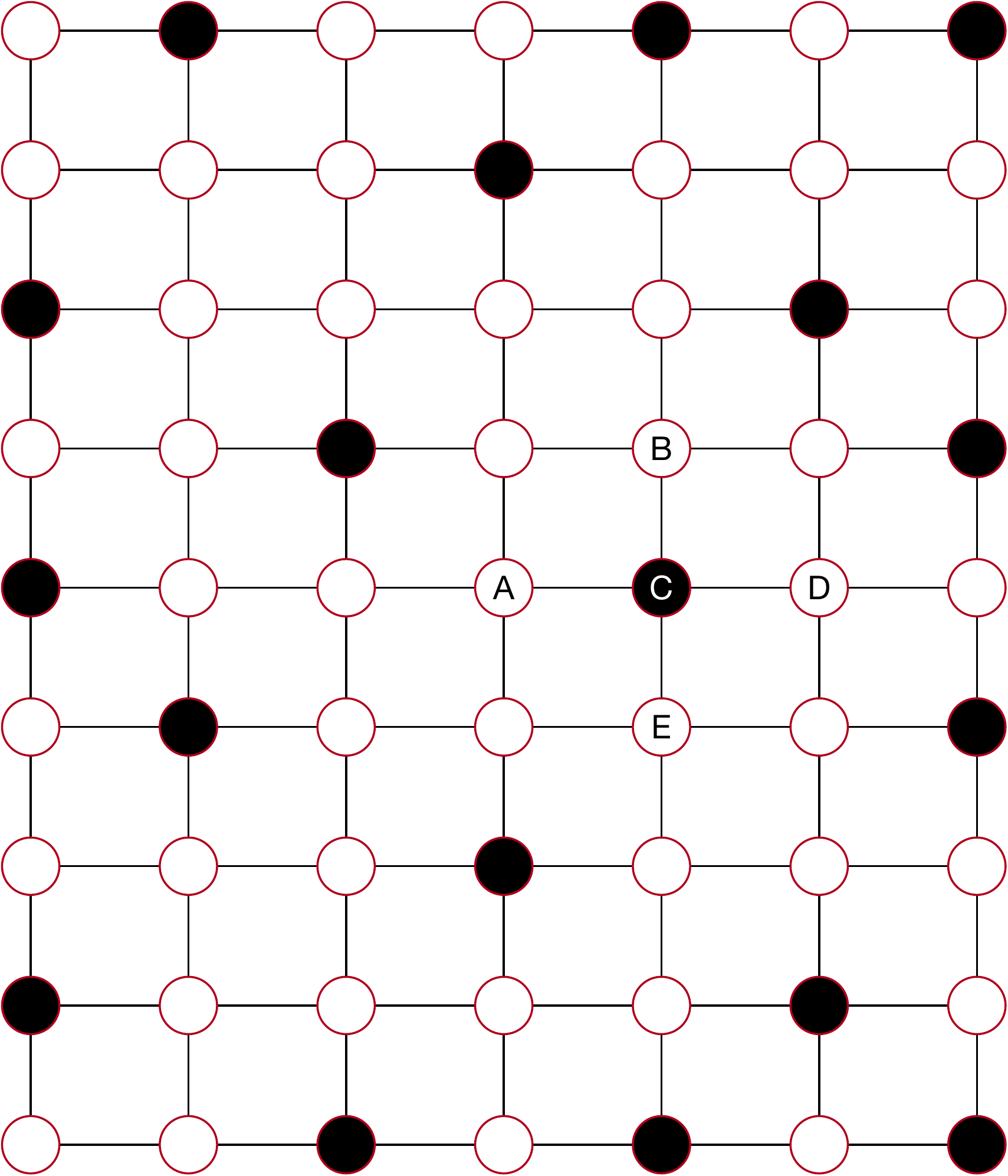}
}
\subfigure[Fill Placement]{
\centering
\includegraphics[width=.47\textwidth]{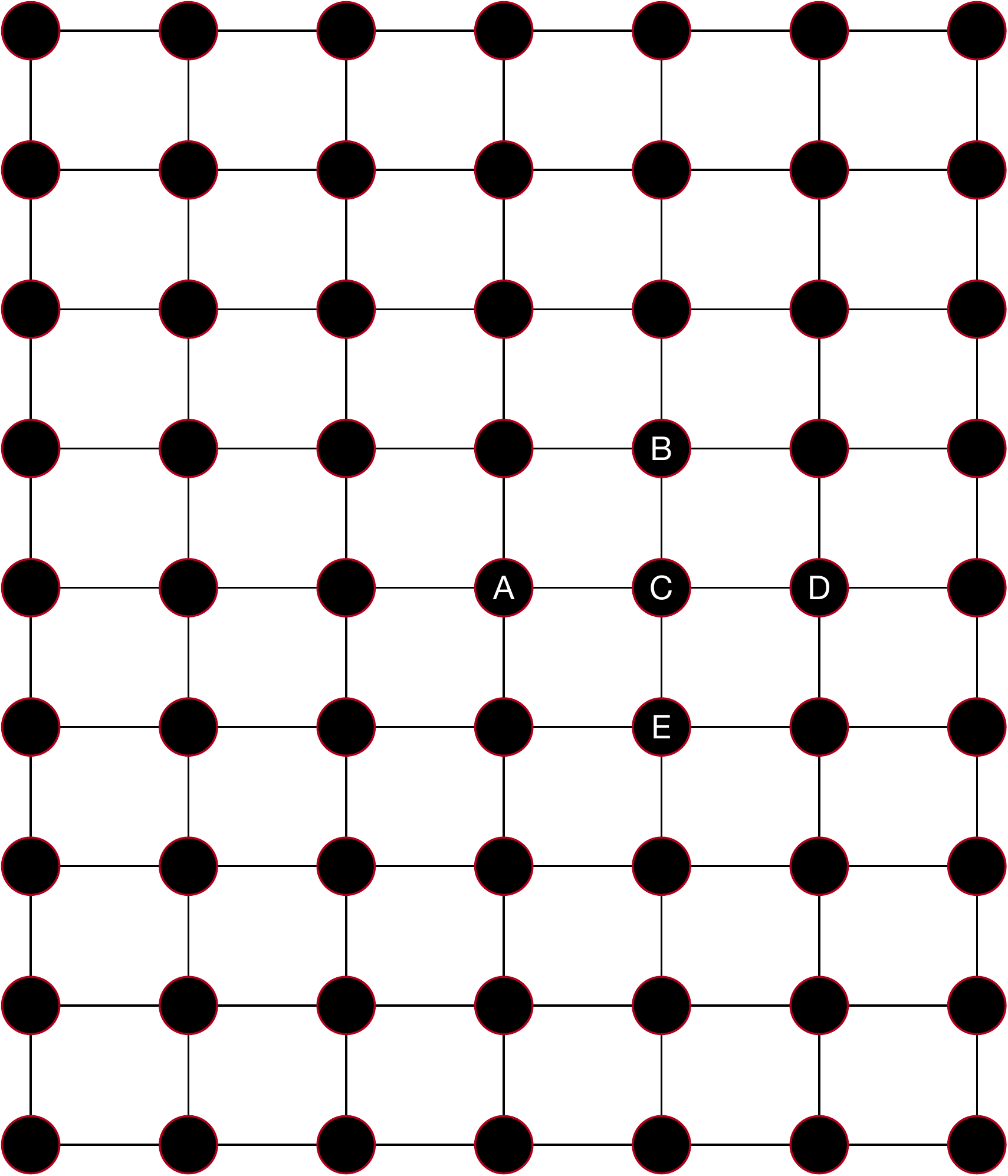}
}

\caption{Possible mobile robot placements on a $7\times 9$ grid.}
\label{fig:placements}
\end{figure}

\paragraph{Related Works.}

The seminal paper for studying robotic swarms from a distributed computing perspective is due to Suzuki and Yamashita~\cite{SY99}.
In the initial model, robots are represented as dimensionless points evolving in a bidimensional Euclidean space, and operate in \emph{Look-Compute-Move} cycles.
In each cycle, a robot “Looks” at its surroundings and obtains (in its own coordinate system) a snapshot containing some information about the locations of all robots.
Based on this visual information, the robot “Computes” a destination location (still in its own coordinate system), and then “Moves” towards the computed location.
When the robots are \emph{oblivious}, the computed destination in each cycle only depends on the snapshot obtained in the current cycle (and not on the past history of actions).
Then, an \emph{execution} of a distributed algorithm by a robotic swarm consists in having every robot repeatedly execute its LCM cycle.
Although this mathematical model is perfectly precise, it allows a great number of variants (developed over a period of 20 years by different research teams~\cite{FPS19}), according to various dimensions, namely: sensors, memory, actuators, synchronization, and faults.

Although the seminal paper~\cite{SY99} focused on continuous spaces, many recent papers~\cite{FPS19} consider robots evolving on a discrete graph (that is, robots are located on a discrete set of locations, the nodes of the graph, and may move from one location to the next if an edge exists between the two locations), as it was recently observed that discrete observations model better actual sensing devices~\cite{balabonski19netys}. For the particular topology we consider, the grid, many problems were previously investigated, e.g., exploration
~\cite{DLPRT12,SAFPS20}, perpetual exploration~\cite{bonnet11opodis}, scattering~\cite{barriere11ijfcs}, dispersion~\cite{WALCOM20}, gathering~\cite{DSKN16}, mutual visibility~\cite{ABKS2018}, pattern formation~\cite{BAKS2020}, and convex hull formation~\cite{HVST2020}. 
Similarly, the initial model considers unlimited visibility range, but actual sensors have a limited range, which makes solutions that only assume limited visibility more practical. When the evolving space is discrete, robots that can only see at a constant (in the locations graph) are called \emph{myopic}. Myopic robots have successfully solved ring exploration~\cite{DLLP13}, gathering in bipartite graphs~\cite{GP13}, and gathering in ring networks~\cite{KLO14}.
Finally, another characteristic of the initial model, obliviousness, was recently dropped out in favor of a more realistic setting: \emph{luminous} robots. Oblivious robots were not able to remember past actions (each new Look-Compute-Move cycle reset the local memory of the robot), while luminous robots are able to remember and communicate\footnote{In the literature, this is refereed to as the Full Light model.} a finite value between two consecutive LCM cycles, using a visible light that is maintained by the robot. The number of values a robot is able to remember is tantamount to the number of different colors its light is able to show. Luminous robots were used to circumvent classical impossibility results in the oblivious model, mainly for gathering~\cite{DFPSY16,G13,heriban18icdcn}.
In this paper, we consider the particular combination of myopic and luminous robot model, that was previously used for ring exploration~\cite{SF-Ex,NOI19}, infinite grid exploration~\cite{BDL19}, and gathering on rings~\cite{OPODIS19}. 

The maximum independent set placement we consider in this paper is related to the benchmarking problem of geometric pattern formation initially proposed by Suzuki and Yamashita~\cite{SY99}. A key difference is that the target pattern is usually given explicitly to all robots (see the recent survey by Yamauchi~\cite{formation-survey}), while the maximum independent set pattern we target is only given as a constraint (as the dimensions of the grid are unknown to the robots, the exact pattern cannot be given to the robots).
Unconstrained placement of robots is also known as \emph{scattering} (in a continuous bidimensional Euclidean space~\cite{DP09j,CDPIM10,BT17j}, robots simply have to eventually occupy distinct positions). Evenly spreading robots in a unidimensional Euclidean space was previously investigated by Cohen and Peleg~\cite{CP08j} and by Flocchini~\cite{F16bc} and Flocchini et al.~\cite{FPS08j}. The bidimensional case was tackled mostly by means of simulation by Cohen and Peleg~\cite{CP08j} and by Casteigts et al.~\cite{casteigts12cc}. 
Most related to our setting is the barrier coverage problem investigated by Hesari et al.~\cite{HFNOS14c}: robots have to move on a continuous line so that each portion of the line is covered by robot sensors (whose range is a fixed value) despite the robots having limited vision (whose range is twice the range of the sensor). A key difference besides the robots evolving space (continuous segment \emph{versus} discrete grid) with our approach is that they consider oblivious robots and a common orientation, while we assume luminous robots and no orientation.
Another closely related problem was studied by Barri\`{e}re et al.~\cite{barriere11ijfcs}: uniform scattering on square grids. For uniform scattering to be solved, robots, initially at random positions, must reach a configuration where they are evenly spaced on a grid. Similarly to Hesari et al.~\cite{HFNOS14c}, Barri\`{e}re et al.~\cite{barriere11ijfcs} assume a common orientation (on both axes), that the size of the grid is $(k\times d+1)\times(k\times d+1)$, where $k\geq 2$, $d\geq 2$, the number of robots is $(k+1)^2$, and that each robot knows $k$ and $d$.
They also assume that each robot has internal lights with six colors and that their visible radius is $2d$. 
Under the same assumptions as Barri\`{e}re et al.~\cite{barriere11ijfcs}, Poudel et al.~\cite{PS2020} proposed an algorithm 
needing $O(1)$ bit memory per robot, assuming a visibility radius of $2\max\{d,k\}$.
By contrast, we don't assume a common orientation, we use seven or three full lights colors, and the size of the grid is arbitrary and unknown. 
Finally, the placement method we describe as the fill placement (see Fig.~\ref{fig:placements}(d)) was investigated by Hsiang et al.~\cite{Hsiang}, and by Hideg et al.~\cite{filling2}. 

\paragraph{Our contribution.}
We propose the first two solutions to the maximum independent set placement of mobile myopic luminous robots on a grid of unknown size. Robots enter at a corner of the grid, and do not share a common orientation nor chirality. 
In the first algorithm, each robot light can take $3$ different colors, and the visibility range of each robot is two. Similarly to previous work~\cite{filling2}, the first algorithm assumes "local" port numbers\footnote{The port numbers are local in the sense that there is no coordination between adjacent nodes to label their common edge.} are available at each node, so that each robot can recognize its previous node. 
The second algorithm gets rid of the port number assumption, and executes in a completely anonymous graph. It turns out that weakening this assumption has a cost on the number of colors ($7$ instead of $3$) and on the visibility radius ($3$ instead of $2$).
In both cases,
the placement process takes $O(n(L+l))$ steps of computation, where $n$ is the number of nodes and $L$ and $l$ are the grid dimensions.

As pointed out in the above, a maximum independent set placement yields good resilience in case of robot failures for the purpose of the target application, yet makes use of half of the robots needed for a complete filling of the grid. 

\section{Model}\label{model}
We consider an anonymous, undirected connected network $G^\prime=(V, E)$, where $V$ is a finite set of $n$ nodes $v_1, v_2, \cdots v_n$, and a specific node $v^\prime$ (discussed below), and $E$ is a finite set of edges.
We assume that the induced subgraph $G$ of $G^\prime$ derived from the nodes except $v^\prime$ is a $(l,L)$-grid, where $l\geq 3$ and $L\geq 3$ are two positive integers such that $l\times L=n$.
Then, $G$ satisfies the following conditions:
$\forall x\in [1..n], (x \bmod l)\neq 0 \Rightarrow \{v_x, v_{x+1}\}\in E$, and $\forall y\in [1..l\times(L-1)],\{v_y,v_{y+l}\}\in E$.
We assume that these sizes $l$, $L$ and $n$ are unknown to the robots.
Let $\delta(v)$ be the degree of node $v$ in $G^\prime$.

The specific node $v^\prime$ is called a \emph{Door node}.
Each robot enters the grid $G$ one-by-one through the Door node.
We assume that $\delta(v^\prime)=1$, and the Door node is connected to a corner node of the grid (the particular corner $v^\prime$ is connected to is decided by an adversary). We refer to this corner as the \emph{Door corner}.
A robot at the Door node has to disperse through the grid while avoiding collisions. That is, two or more robots cannot occupy the same node.
When the Door node becomes empty, a new robot can be placed there immediately.
We use 
${\it Enter\_Grid}(r_i)$ to denote an operation that makes robot $r_i$ move from the Door node to the Door corner, and ${\it Move}(r_i)$ to denote an operation that makes $r_i$ move to an adjacent node in its direction.
We assume that each robot has no orientation, i.e., each robot does not know axes $x$ and $y$ of the grid in the above definition.

The distance between two nodes $v$ and $u$ is the number of edges in a shortest path connecting them.
The distance between two robots $r_1$ and $r_2$ is the distance between two nodes occupied by $r_1$ and $r_2$, respectively.
Two robots or two nodes are adjacent if the distance between them is one.


We assume that robots have limited visibility: an observing robot $r_i$ at node $u$ can only sense the robots that occupy nodes within a certain distance, denoted by $\phi$, from $u$.
When we assume $\phi=2$ (resp. $\phi=3$), because we assume the network is a grid, the view of a robot is like Fig.~\ref{view} (resp. \ref{view2}) for a robot not on a border nor a corner node. 
In each of these figures, the view is from a robot on the center node. 
\begin{figure}[t]
\centering
\subfigure[$\phi=2$]{\includegraphics[height=1.3cm]{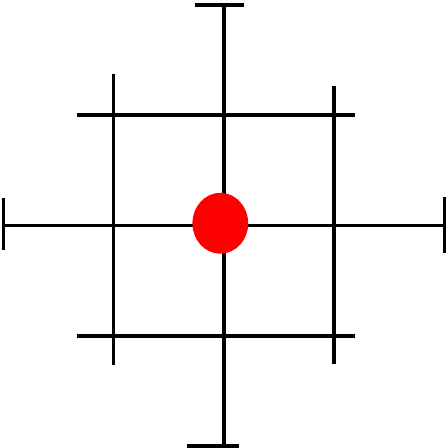}\label{view}}\hspace{2cm}
\subfigure[$\phi=3$]{\includegraphics[height=1.8cm]{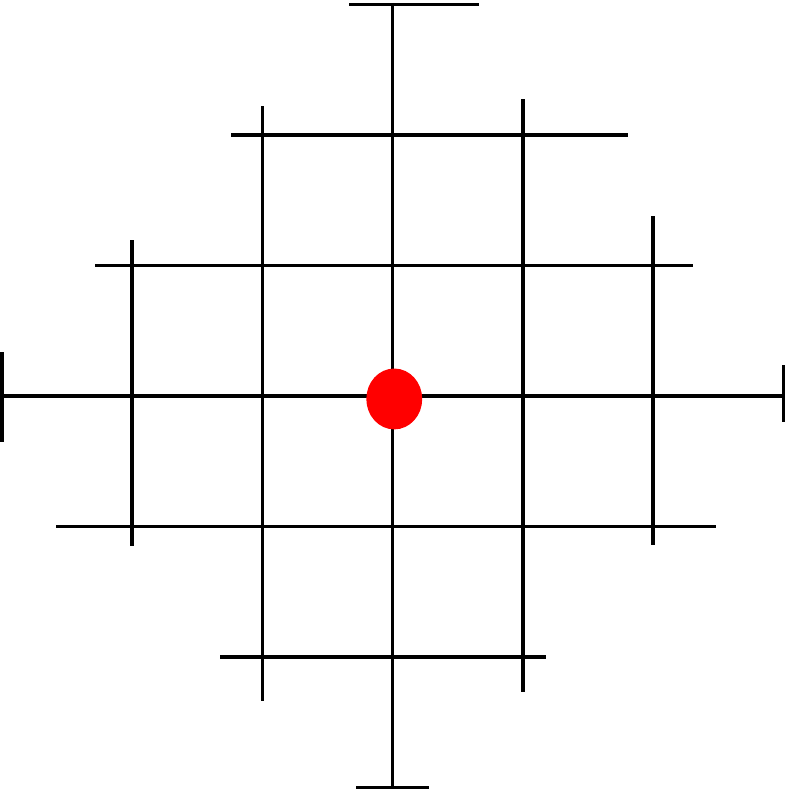}\label{view2}}
\caption{View of a robot.}
\end{figure}
For each robot $r_i$, we use ${\it view}(r_i)$ to denote the view of $r_i$.
Then, we call each robot $r_j$ in ${\it view}(r_i)$ a \emph{neighboring robot} of $r_i$. 

Each robot $r_i$ maintains a variable $c(r_i)$ called \emph{light}, which spans a finite set of states called \emph{colors}.
A light is persistent from one computational cycle to the next: the color is not automatically reset at the end of the cycle (see below how cycles drive the life of robots).
Robot $r_i$ knows its own current color of light and can detect colors of lights of other robots in the visibility range.
Robots are unable to communicate with each other explicitly (\emph{e.g.}, by sending messages), however, they can observe their environment, including the positions and colors of other robots, in their visibility range.

Each robot $r_i$ executes Look-Compute-Move cycles infinitely many times: (i) first, $r_i$ takes a snapshot of the environment and obtains an ego-centered view of the current configuration (Look phase), (ii) according to its view, $r_i$ decides to move or to stay idle and possibly changes its light color (Compute phase), (iii) if $r_i$ decided to move,
it moves to one of its adjacent nodes depending on the choice made in the Compute phase (Move phase). At each time instant $t$, a subset of robots is activated by an entity known as the scheduler. This scheduler is assumed to be fair, \emph{i.e.}, all robots are activated infinitely many times in any infinite execution. In this paper, we consider
the most general \emph{asynchronous} model: the time between Look, Compute, and Move phases is finite but unbounded. We assume however that the move operation is atomic, that is, when a robot takes a snapshot, it sees robots' colors on nodes and not on edges\footnote{The assumption that moves are atomic was show equivalent~\cite{balabonski19netys} to the assumption that moves are not atomic but the sensors see the robot either at the starting node or at the destination node, and no inversion of the observations is possible. For the sake of proof readability, we retain the former hypothesis.}.
Since the scheduler is allowed to interleave the different phases between robots, some robots may decide to move according to a view that is different from the current configuration. Indeed, during the Compute phase, other robots may move. We call a view that is different from the current configuration an \emph{outdated view}, and a robot with an outdated view an \emph{outdated robot}.

In this paper, the set of robots that enter the grid $G$ from a Door node constructs a maximum independent set of $G$. 

\begin{definition}
An independent set $I$ of $G$ is a subset of $V\setminus\{v^\prime\}$ such that no two nodes in $I$ are adjacent on $G$. 
A maximum independent set is an independent set containing the largest possible number of nodes for $G$.
\end{definition}

\section{Proposed Algorithms}

In this section, we present two algorithms to construct a maximum independent set when the Door node is connected to a corner node.
The first algorithm makes the assumption that outgoing edges are labeled ``locally'' (that is, the labels may be inconsistent for the two adjacent nodes of the edge, however, a node must assign distinct labels to different outgoing edges), and assumes that each robot is endowed with a light enabling $3$ colors and has visibility radius $2$.
To remove the edge labeling assumption, the second algorithm makes use of more colors (7 colors are needed) and a larger visibility radius (i.e., $3$). As a result, it operates in the ``vanilla'' Look-Compute-Move model (no labeling of nodes or edges, etc.). 
In both algorithms, we assume no agreement on the grid axes or directions.

\subsection{Algorithm with 3 colors lights, $\phi=2$, and port numbering}\label{sec:3colors}
First, we propose an algorithm that assumes three light colors are available (and referred to as $F$, $p_1$, and $p_2$), and that $\phi=2$.
The color $F$ means that the robot finished the execution of the algorithm, and stops its execution. Colors $p_1$ and $p_2$ are used when the robot still did not finish its execution.
We say that if the light color of a robot $r_i$ is $F$, then $r_i$ is Finished.
Initially, the color of the light $c(r_i)$ for each robot $r_i$ is $p_1$. 

For this algorithm, we add the following assumptions to the model in Section~\ref{model}:
\begin{itemize}
\item For each node of the grid, adjacent nodes (except the Door node) are arranged in a fixed order, and this order is only visible for robots on the node as \emph{port numbers}. The order does not change during the execution.  
\item Each robot can recognize the node it came from when at its current node. 
\end{itemize}
These assumptions are those considered in related work for the filling problem~\cite{filling2}.
Note that, the latter assumption can be implemented using four additional colors to remember the port number of the previous node.

The strategy of the routing to construct a maximum independent set is as Fig.~\ref{fig:st1}. 
In this figure, the thick white circle represents a Finished robot, and the diagonal (resp. horizontal) striped circle represents a robot with $p_1$ (resp. $p_2$).
First, each robot $r_i$ starts with $c(r_i)=p_1$ from the Door node (Fig.~\ref{fig:st1}(a)).
On the Door corner, each robot chooses an adjacent node according to the edge with the maximal port number.
Each robot moves on the first border to keep the distance from its predecessor two or more hops.
Each robot arrives at the first corner, then changes $c(r_i)$ to $p_2$.
After that, the first robot $r_1$ goes through the second border, eventually arrives at the second corner (Fig.~\ref{fig:st1}(b)), and changes $c(r_1)$ to $F$.
We call this second corner \emph{the diagonal corner}.
The successor $r_i$ follows its predecessor $r_j$ while striving to keep a distance of at least two.
If $r_i$ has $c(r_i)=p_2$, and $r_j$ is Finished two hops away, then $r_i$ changes $c(r_i)$ to $F$ (Fig.~\ref{fig:st1}(c)).
If $r_i$ with $c(r_i)=p_1$ observes that $r_j$ is Finished two hops away, $r_i$ changes $c(r_i)$ to $p_2$ and makes the next line (Fig.~\ref{fig:st1}(d)).
By repeating such elementary steps, eventually, a maximum independent set can be constructed (Fig.~\ref{fig:st1}(f)).
Because each robot can recognize its previous node by the assumption, it can recognize its predecessor and its successor if there are two neighboring non-Finished robots.
\begin{figure}[t]
\centering
    \includegraphics[width=0.9\textwidth]{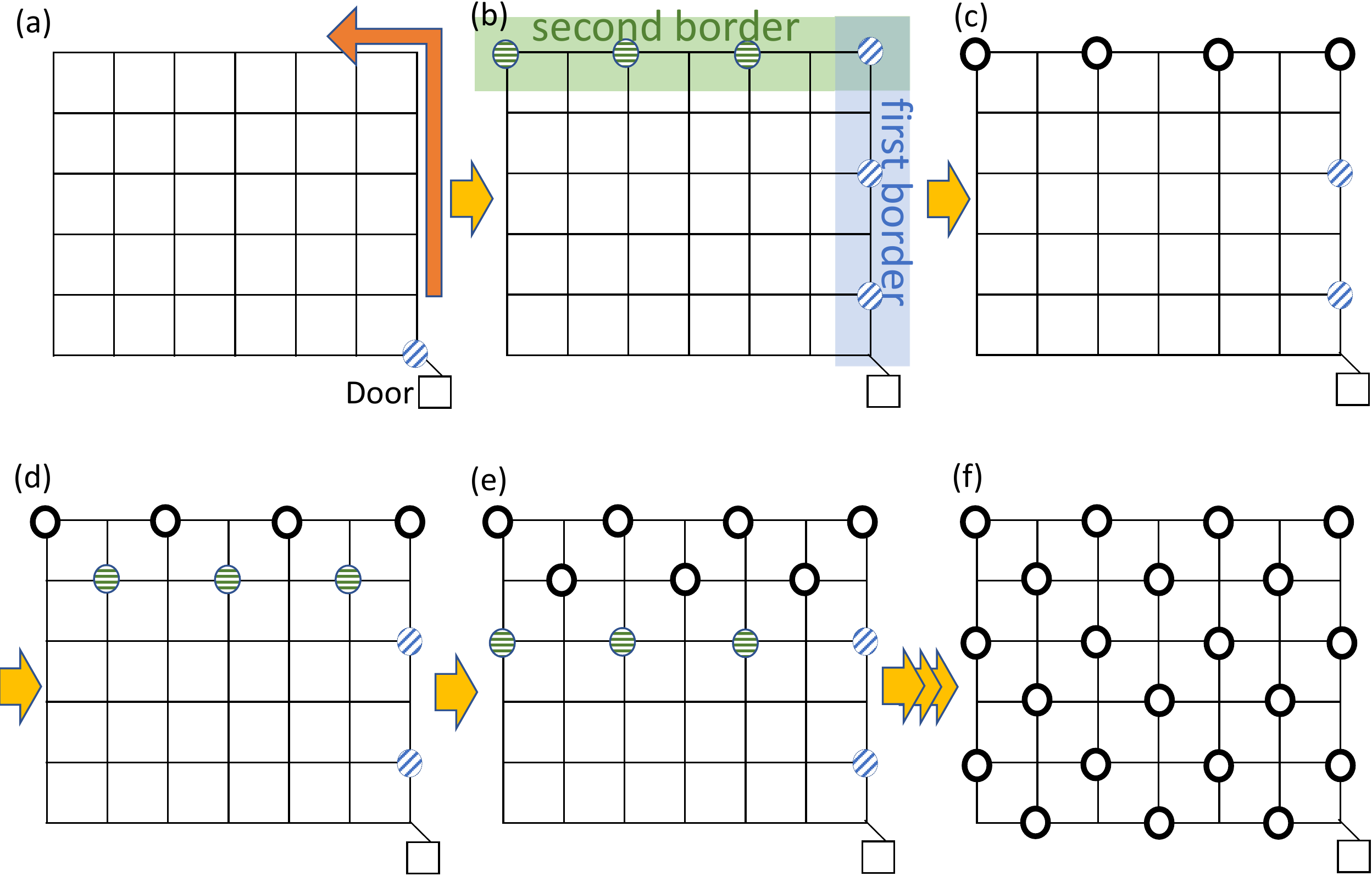}
    \caption{Strategy of the maximum independent set placement from a Door node on a corner.}
    \label{fig:st1}
\end{figure}

The algorithm description is in Algorithm~\ref{code:alg1}.
The number of each rule represents its priority, a smaller number denoting a higher priority.
In this algorithm, we use the definitions of view types in Fig.~\ref{fig:p1D}--\ref{fig:p2M}. 
In these figures, each view ${\it view}(r_i)$ is from robot $r_i$ represented by the center filled circle.
The dotted circle without frame border represents the previous node wherefrom $r_i$ moved to the current node. 
If ${\it view}(r_i)$ is a view with an arrow from $r_i$ not on the Door node (like {\sf OnCP1} in ${\cal C}$ (Fig.~\ref{fig:p1C})), the arrow represents the direction of ${\it Move}(r_i)$ operation.
If $r_i$ is on the Door node (like {\sf Door1} in ${\cal D}$ (Fig.~\ref{fig:p1D})), the arrow represents the operation ${\it Enter\_Grid}(r_i)$.
If the previous node is not the Door node, the circle with diagonal stripes or horizontal stripes (which is adjacent to the previous node) represents the node where $r_i$'s successor robot may be hosted.
So, such a node may: \emph{(i)} host the successor of $r_i$, \emph{(ii)} be empty, or \emph{(iii)} do not exist. 
If the successor robot is on the diagonal (resp. horizontal) striped node, it has $p_1$ (resp. $p_1$ or $p_2$). 
The diagonal striped node can be the Door node under the grid size hypothesis, i.e., the diagonal striped node in {\sf P1Stop} can be the Door node but in {\sf OnCP1F} cannot be the Door node due the grid size hypothesis.
If the previous node is the Door node, the successor robot can be the previous node by assumption.
The thick white circle represents a Finished robot that must be on the node.
The circle with vertical stripes represents either a node hosting a Finished robot, or no node.
A node with a dashed white square represents an empty node, if the node exists on the grid. All other empty nodes must exist on the grid and host no robot.
In our classification, each type of views may include several possible views. 
For example, in {\sf P1Stop} (Fig.~\ref{fig:p1F}), the upper adjacent empty node may be a corner, then the top node with vertical stripes does not exist. Thus, in the type {\sf P1Stop}, there are five possible views, depending on whether the top node exists or not, the bottom node exists or not, and whether the successor is in the view or not. Note that all combinations are not feasible, since e.g. if the bottom node does not exist, then the previous node is the Door node, and the successor is on the Door node, which in turn implies that the top Finished robot must exist (due to the grid size hypothesis).

\begin{algorithm}[t]
{\bf Colors}~ $F$, $p_1$, $p_2$\\
{\bf Iinitialization}~ $c(r_i)=p_1$\\
{\bf Rules on node $v$ of robot $r_i$}\\
0: $c(r_i)=p_1 \land {\it view}(r_i)\in {\cal D}$ $\rightarrow$ ${\it Enter\_Grid}(r_i)$;\\
1: $c(r_i)=p_1\land {\it view}(r_i)\in {\cal F}_1$ $\rightarrow$ $c(r_i):=F$;\\
2: $c(r_i)=p_1\land {\it view}(r_i)\in{\cal C}$ $\rightarrow$ $c(r_i):=p_2$; ${\it Move}(r_i)$;\\
3: $c(r_i)=p_1\land {\it view}(r_i)\in {\cal M}_1$ $\rightarrow$ ${\it Move}(r_i)$;\\
4: $c(r_i)=p_2\land {\it view}(r_i)\in {\cal F}_2$ $\rightarrow$ $c(r_i):=F$;\\
5: $c(r_i)=p_2\land {\it view}(r_i)\in {\cal M}_2$ $\rightarrow$ ${\it Move}(r_i)$.
\caption{Algorithm for a maximum independent set placement with 3 colors light.}
\label{code:alg1}
\end{algorithm}

\begin{figure}[t]
\begin{minipage}[t]{0.5\textwidth}
    \centering
    \subfigure[{\sf Door1}]{\includegraphics[height=1.4cm]{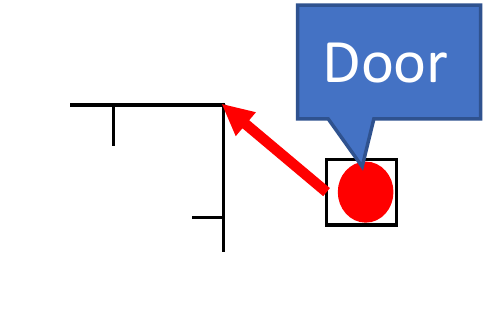}}\hspace{1cm}
    \subfigure[{\sf Door2}]{\includegraphics[height=1.4cm]{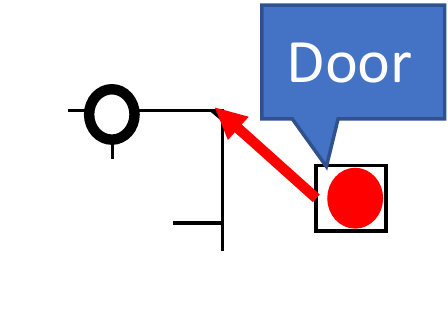}}
    \caption{Definition of views in ${\cal D}$ while $p_1$.}
    \label{fig:p1D}
\end{minipage}
\begin{minipage}[t]{0.5\textwidth}
    \centering
    \subfigure[{\sf P1Stop}]{\includegraphics[height=2.0cm]{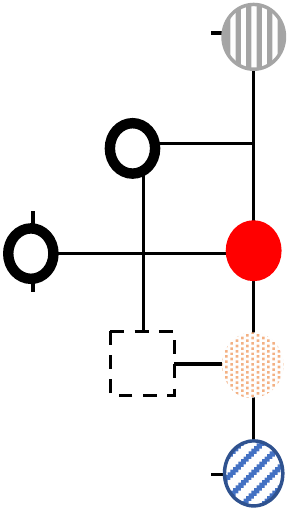}}\hspace{1cm}
    \subfigure[{\sf OnCP1F}]{\includegraphics[height=1.7cm]{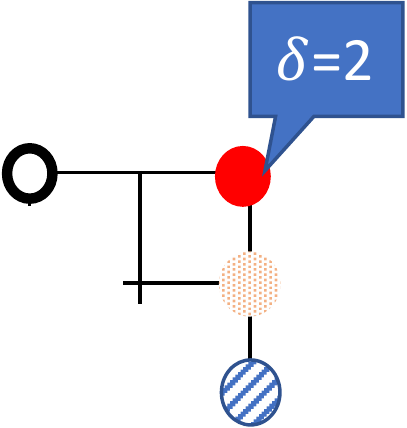}}
    \caption{Definition of views in ${\cal F}_1$ while $p_1$.}
\label{fig:p1F}
\end{minipage}
\end{figure}

\begin{figure}[t]
\centering
    \subfigure[{\sf OnCP1}]{\includegraphics[height=1.7cm]{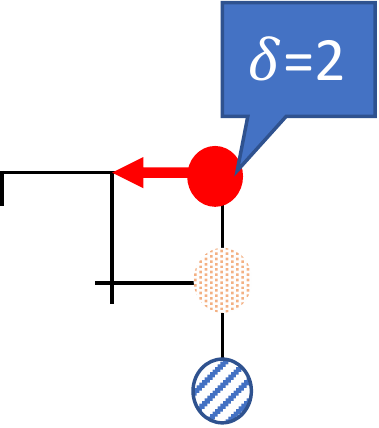}}\hspace{1cm}
    \subfigure[{\sf ColP1A}]{\includegraphics[height=2.0cm]{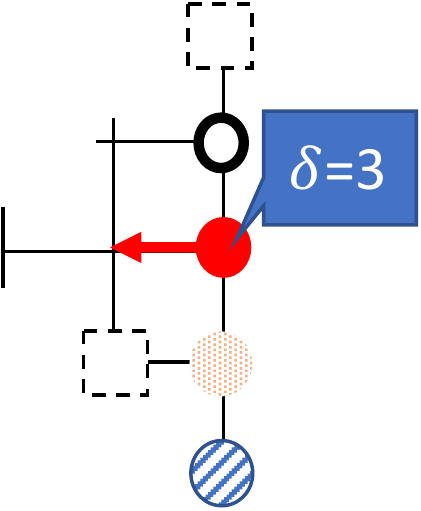}}\hspace{1cm}
    \subfigure[{\sf ColP1B}]{\includegraphics[height=2.0cm]{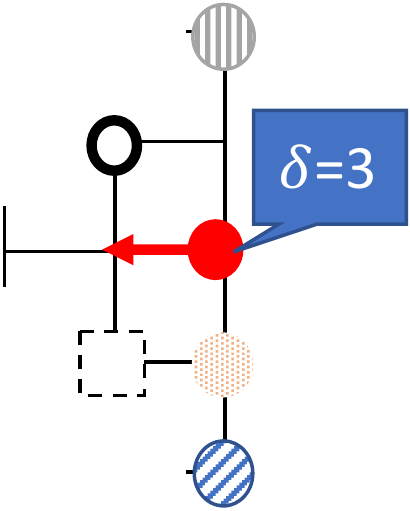}}
\caption{Definition of views in ${\cal C}$ while $p_1$.}
    \label{fig:p1C}
\end{figure}
\begin{figure}[t]
\centering
    \subfigure[{\sf StartP1}]{\includegraphics[height=1.5cm]{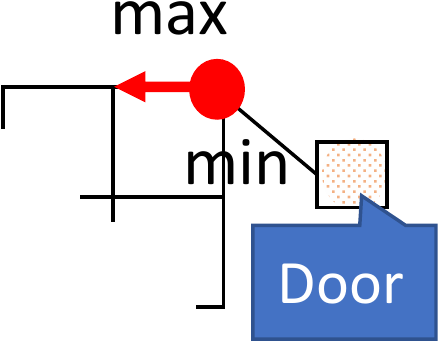}}\hspace{1cm}
    \subfigure[{\sf MovP1}]{\includegraphics[height=2.0cm]{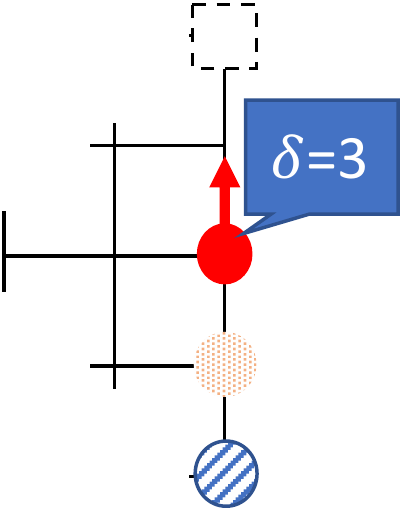}}\hspace{1cm}
    \subfigure[{\sf GoCo}]{\includegraphics[height=2.0cm]{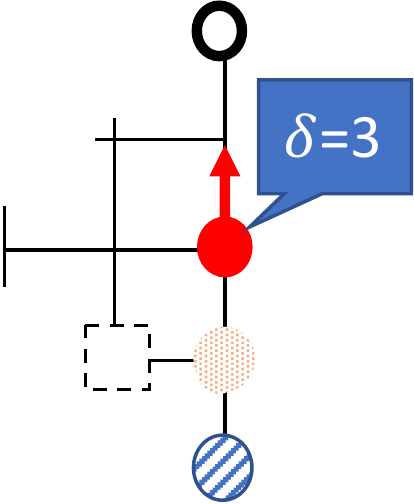}}
    \caption{Definition of views in ${\cal M}_1$ while $p_1$.}
    \label{fig:p1M}
\end{figure}

\begin{figure}[t]
    \centering
    \subfigure[{\sf OnCP2}]{\includegraphics[height=1.9cm]{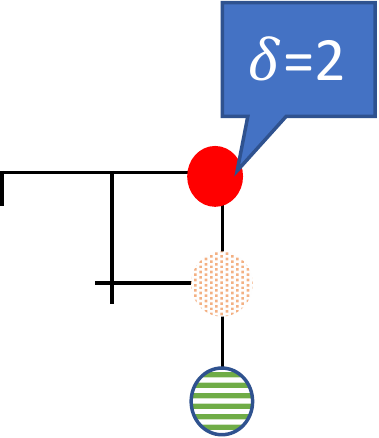}}\hspace{0.4cm}
    \subfigure[{\sf P2Stop}]{\includegraphics[height=2.0cm]{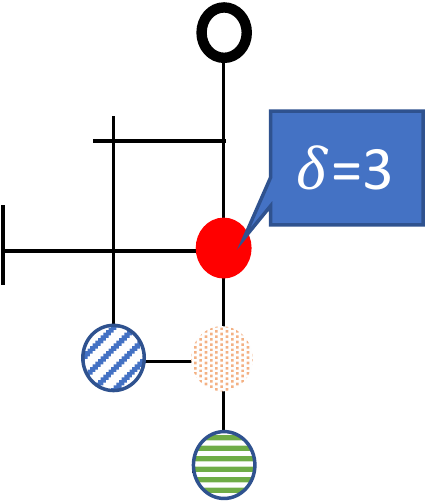}}\hspace{0.4cm}
    \subfigure[{\sf P2StopA}]{\includegraphics[height=2cm]{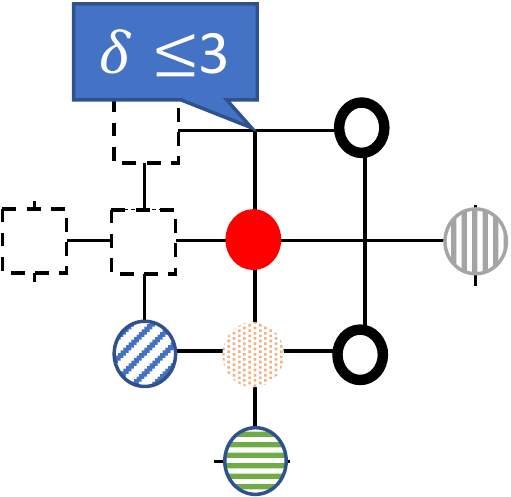}}\hspace{0.4cm}
    \subfigure[{\sf P2StopB}]{\includegraphics[height=1.6cm]{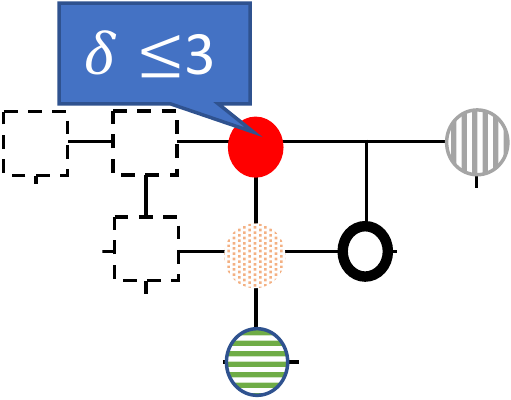}}\hspace{0.4cm}
    \subfigure[{\sf P2StopC}]{\includegraphics[height=2.0cm]{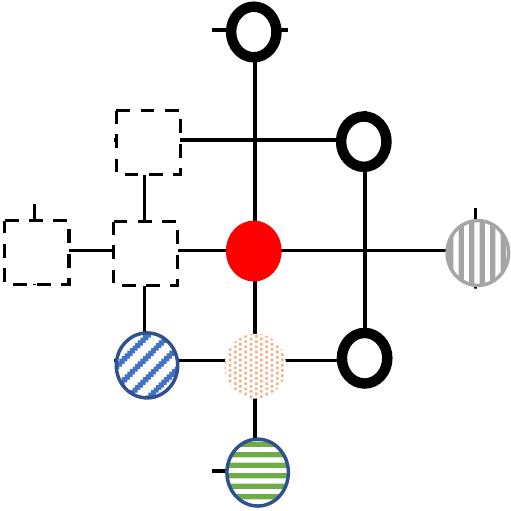}}
    \caption{Definition of views in ${\cal F}_2$ while $p_2$.}
    \label{fig:p2F}
\end{figure}
\begin{figure}[t]\centering
    \subfigure[{\sf MovP2}]{\includegraphics[height=2.0cm]{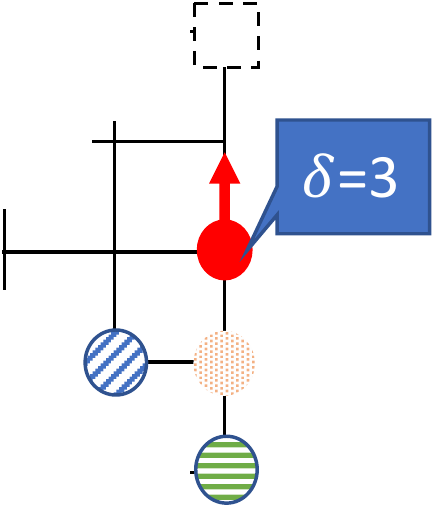}}\hspace{1cm}
    \subfigure[{\sf MovP2A}]{\includegraphics[height=2.0cm]{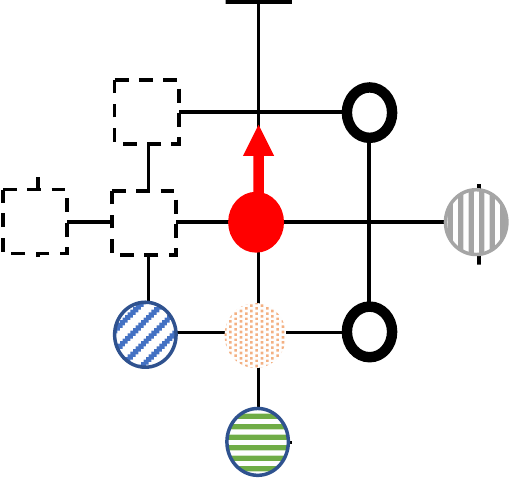}}\hspace{1cm}
    \subfigure[{\sf MovP2B}]{\includegraphics[height=2.0cm]{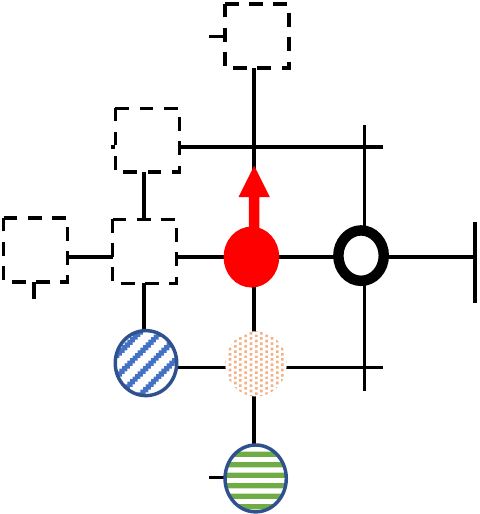}}
    \caption{Definition of views in ${\cal M}_2$ while $p_2$.}
    \label{fig:p2M}
\end{figure}

\subsubsection{Proof of Correctness}

Without loss of generality, let $L$ be the size of the border that is connected to the Door corner by an edge with a maximal port number among two edges of the Door corner.
Let $l$ be the other size of the border.
We call the $L$-size border connected to the Door corner ``the first border", and the $l$-size border not-connected to the Door corner ``the second border", like Fig.~\ref{fig:st1}(b).
Additionally, we call the second border \emph{0-line}, and count the lines in the following way: the line that is adjacent and parallel to 0-line is \emph{1-line}, and the border that is connected to the Door corner but not the first border is \emph{$(L-1)$-line}.

First, we show that robots cannot collide.
\begin{lemma}\label{collide}
Robots cannot collide when executing Algorithm 1.
\end{lemma}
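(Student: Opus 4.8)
The plan is to show that no two robots ever occupy the same node by arguing that every move preserves a \emph{separation invariant}: at all times, any two non-Finished robots that are "on the same track" (on the first border, on the second border, or on the same line) are at distance at least two, and a robot only moves into a node that is currently empty and will remain empty throughout its move. Since moves are atomic by the model assumption, it suffices to verify that whenever a robot $r_i$ is enabled to execute ${\it Move}(r_i)$ or ${\it Enter\_Grid}(r_i)$, the target node is unoccupied in $r_i$'s (possibly outdated) view, \emph{and} that no other robot can be simultaneously moving into that same target. I would carry this out as a case analysis over the four movement-triggering rules (rules $0$, $2$, $3$, $5$) and the corresponding view types in ${\cal D}$, ${\cal C}$, ${\cal M}_1$, and ${\cal M}_2$.

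First I would treat the movement rules one by one. For rule $0$ (${\it Enter\_Grid}$ from view {\sf Door1}/{\sf Door2}), the target is the Door corner; the view types in ${\cal D}$ require the Door corner to be empty, so entry is only triggered when that node is free, and since only the single robot on the Door node can execute this rule, no collision there. For the $p_1$ movement rules (rule $2$ on ${\cal C}$ and rule $3$ on ${\cal M}_1$), and the $p_2$ movement rule (rule $5$ on ${\cal M}_2$), the target node of the arrow in each figure is drawn as empty (no filled circle), so by the semantics of the view definitions each rule only fires when the forward node is empty within the robot's visibility. The core of the proof is then to rule out two robots targeting the same empty node at the same time.

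The key structural fact I would exploit is that robots form an ordered chain along a snake-like route (first border, then alternating lines), and that consecutive robots are kept at distance $\geq 2$ by the strategy: a successor $r_i$ only advances when its predecessor $r_j$ is far enough ahead, as encoded in the ${\cal M}_1$ and ${\cal M}_2$ views, while a predecessor never moves backward. Concretely, I would argue that if $r_i$ and another robot $r_k$ both targeted node $u$, then before their moves they would have to be at the two nodes adjacent to $u$ along the route, i.e. at distance two from each other with $u$ between them; but the view definitions forbid a non-Finished robot from advancing toward an occupied-or-about-to-be-occupied cell (the successor node in each moving view is shown striped/empty, not as another mover's target), and the $\phi = 2$ visibility guarantees $r_i$ actually sees the relevant cells. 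I would also handle outdated views here: even if $r_i$ moves according to a stale snapshot, the only robot that could have changed position in the interim is its predecessor moving \emph{forward} (away from $u$) or its successor, neither of which can have jumped into $u$, because each individual move advances exactly one hop along the monotone route.

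The main obstacle I expect is the outdated-view reasoning combined with the turning points (the corners and the line transitions), where the geometry of "same track" changes and the direction of travel reverses between consecutive lines. At a corner, a robot switching from advancing along one border to the next (rule $2$, changing $p_1 \to p_2$), and at a line transition (the {\sf GoCo} / next-line views), the notions of predecessor and successor must be re-established, and one must check that an outdated robot cannot move into a cell that an adjacent robot has just vacated-and-will-be-re-entered or just entered. I would resolve this by observing that each move changes a robot's position by exactly one edge and never toward the Door, so the set of occupied nodes along the route can be totally ordered and every enabled move strictly increases the mover's rank in this order into a gap of size $\geq 1$; since two distinct robots occupy distinct ranks and each advances into the gap immediately ahead of itself (which is distinct from the gap ahead of any other robot by the distance-$\geq 2$ invariant), their targets are distinct, and the invariant is re-established after the move.
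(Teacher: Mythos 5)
Your proposal is correct and follows essentially the same route as the paper's proof: a case analysis over the movement-triggering rules and view types (${\cal D}$, ${\cal C}$, ${\cal M}_1$, ${\cal M}_2$), the observation that every movement view requires the target to be empty, the one-way monotone routes (first border, then the assigned line) with the distance-$\geq 2$ chain invariant, and the outdated-view argument that only the successor can approach a mover and only from behind, so the target stays free. Your rank/gap formulation at the end is just a slightly more abstract packaging of the paper's statement that an outdated view still belongs to the same view type because predecessors only move away and successors cannot overtake.
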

\begin{proof}
If there exists an outdated robot $r_o$ that is to move using its outdated view, the outdated view is one of the types in ${\cal D}$, ${\cal C}$, ${\cal M}_1$, or ${\cal M}_2$ by the definition of the algorithm.
Thus, if a collision with $r_o$ occurs, then $r_o$'s view is one of the types in ${\cal D}$, ${\cal C}$, ${\cal M}_1$, or ${\cal M}_2$.
In that case, because a Finished robot does not move forever, a non-Finished robot in $r_o$'s view may have moved, 
or other non-Finished robot came into the visible region of $r_o$.

On the Door corner, if a robot $r_i$ cannot see other robots, then its view is {\sf StartP1} in ${\cal M}_1$. 
Then, because $r_i$ has $c(r_i)=p_1$ initially, its view becomes {\sf MovP1} in ${\cal M}_1$.
By the definition of {\sf MovP1}, $r_i$ moves only on the first border according to the degree of nodes until it arrives at the end of the first border, or it can see Finished robots.
Then, the first border is one-way because each robot can recognize its previous node.
Additionally, if $r_i$ can see other non-Finished robots than its successor, then $r_i$ cannot move because there is no such rule. 
That is, $r_i$ keeps the distance from its predecessor (if it exists) two or more hops on the first border.
Thus, on the first border (while $c(r_i)=p_1$), if $r_i$ has an outdated view in ${\cal D}$, ${\cal C}$, or ${\cal M}_1$, the current configuration can be the same view type as its outdated view, because only $r_i$'s successor is allowed to move toward $r_i$.
Thus, $r_i$ cannot collide with other robots while $c(r_i)=p_1$.

Consider when $r_i$ arrives at the end of the first border, or can see Finished robots. 
\begin{itemize}
\item If ${\it view}(r_i)$ becomes {\sf P1Stop} or {\sf OnCP1F} in ${\cal F}_1$, then $r_i$ changes its color to $F$ by Rule 1. 
\item If ${\it view}(r_i)$ becomes {\sf OnCP1} in ${\cal C}$, then $r_i$ changes its color to $p_2$, and moves to the second border by Rule 2.
After that, ${\it view}(r_i)$ becomes {\sf MovP2} in ${\cal M}_2$ until it arrives at the diagonal corner (\emph{i.e.}, {\sf OnCP2} in ${\cal F}_2$), or it can see Finished robots on the second border (\emph{i.e.}, {\sf P2Stop} in ${\cal F}_2$).
By the definition of {\sf MovP2}, each robot moves on the second border according to the degree of nodes, and the second border is one-way.
By the definition of the algorithm, there is no rule to make $r_i$ stray from the second border.
Then, $r_i$ keeps the distance from its predecessor (if it exists) two or more hops on the second border. 
\item If ${\it view}(r_i)$ becomes {\sf GoCo} in ${\cal M}_1$, then $r_i$ moves to the adjacent node occupied by a Finished robot on the first border and ${\it view}(r_i)$ becomes {\sf ColP1A} in ${\cal C}$.
\item If ${\it view}(r_i)$ becomes {\sf ColP1A} or {\sf ColP1B} in ${\cal C}$, then $r_i$ changes its color to $p_2$, and moves to one of the lines. Without loss of generality, let the line be $m$-line where $m>0$. 
Then, robots on $(m-1)$-line are Finished and $(m+1)$-line is empty (if it exists on the grid) by the definition of {\sf ColP1A} or {\sf ColP1B}. 
Thus, after that, ${\it view}(r_i)$ becomes {\sf MovP2A} or {\sf MovP2B} in ${\cal M}_2$ until it arrives at the end of $m$-line (\emph{i.e.}, {\sf P2StopA} or {\sf P2StopB} in ${\cal F}_2$), or it can see a Finished robot on $m$-line (\emph{i.e.}, {\sf P2StopC} in ${\cal F}_2$).
By the definition of the algorithm, there is no rule to make $r_i$ stray from $m$-line.
By the definitions of {\sf MovP2A} and {\sf MovP2B}, $m$-line is also one-way, and $r_i$ keeps the distance from its predecessor (if it exists on the line) two or more hops.
\end{itemize}
In any case, on each line (while $c(r_i)=p_2$), if $r_i$ has an outdated view in ${\cal C}$ or ${\cal M}_2$, then the current configuration is the same view type as its outdated view, because only $r_i$'s successor is allowed to move toward $r_i$.
Thus, $r_i$ cannot collide with other robots while $c(r_i)=p_2$.

Thus, the lemma holds.\qed
\end{proof}


Next, we show that Algorithm 1 constructs a maximum independent set.
\begin{lemma}\label{1st}
The first robot $r_1$ moves to the diagonal corner, and $c(r_1)$ becomes $F$ on the corner. 
\end{lemma}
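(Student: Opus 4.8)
The plan is to trace the trajectory of $r_1$ rule by rule, exploiting that $r_1$ is the very first robot to enter: it starts on an empty grid and, crucially, never observes a Finished robot until it itself becomes Finished. First I would establish the invariant that \emph{until $r_1$ becomes Finished, no robot in the grid has color $F$}. Consider the first instant at which some robot turns Finished; it applies Rule~1 or~4, and just before that instant no robot is Finished, so the only view that can be in force is {\sf OnCP2} (every other view in ${\cal F}_1$ or ${\cal F}_2$, namely {\sf P1Stop}, {\sf OnCP1F}, {\sf P2Stop}, {\sf P2StopA}, {\sf P2StopB}, {\sf P2StopC}, is defined in Figs.~\ref{fig:p1F} and~\ref{fig:p2F} to contain a Finished robot). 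Thus the first finisher sits on the diagonal corner, having traversed the entire first border and the entire $0$-line. Because both the first border and each line are one-way and successors stay at distance $\ge 2$ behind their predecessor (established in the proof of Lemma~\ref{collide}), robots keep their entry order along these paths; since $r_1$ entered first it leads, so the first robot to reach the diagonal corner, hence the first finisher, is $r_1$ itself. This yields the invariant over all of $r_1$'s journey.

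Granting the invariant, I would then follow $r_1$ phase by phase. At the Door node its view lies in ${\cal D}$, so Rule~0 performs ${\it Enter\_Grid}(r_1)$ and places it on the Door corner. Alone on the Door corner it has view {\sf StartP1} $\in {\cal M}_1$, so Rule~3 moves it onto the first border along the edge of maximal port number. While on the first border its view is {\sf MovP1} $\in {\cal M}_1$, and Rule~3 keeps it moving one hop at a time toward the first corner; the competing views {\sf GoCo} (in ${\cal M}_1$), {\sf P1Stop}, and {\sf OnCP1F} (in ${\cal F}_1$) are all excluded by the invariant because each contains a Finished robot. Reaching the end of the first border, and still seeing no Finished robot, $r_1$ has view {\sf OnCP1} $\in {\cal C}$, so Rule~2 recolors it to $p_2$ and moves it onto the second border (the $0$-line).

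On the $0$-line with color $p_2$, $r_1$ has view {\sf MovP2} $\in {\cal M}_2$, and Rule~5 drives it one hop at a time toward the diagonal corner; the premature-stop views {\sf P2Stop}, {\sf P2StopA}, {\sf P2StopB}, {\sf P2StopC} are again ruled out by the invariant. When $r_1$ reaches the end of the $0$-line it is on the diagonal corner, its view is {\sf OnCP2} $\in {\cal F}_2$, and Rule~4 sets $c(r_1) := F$, which is exactly the claim.

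The main obstacle, and the step I would write most carefully, is verifying that $r_1$'s view is \emph{always} one of the straight-line moving types or the two corner types used above, and never a turning or premature-finishing type. The whole argument hinges on the observation that every view used for turning into an inner line or for stopping ({\sf GoCo}, {\sf ColP1A}, {\sf ColP1B}, {\sf MovP2A}, {\sf MovP2B}, and the ${\cal F}_1/{\cal F}_2$ views other than {\sf OnCP2}) is defined to contain a Finished robot, so the no-$F$ invariant discards all of them at once, leaving precisely the views $r_1$ actually uses. I would also have to confirm that the possible presence of the successor $r_2$ does not perturb $r_1$'s view type: since $r_2$ can only occupy the striped ``successor'' node behind $r_1$ (it cannot be ahead, by the one-way and distance-$\ge 2$ facts of Lemma~\ref{collide}), this is exactly what the striped-node convention in the view definitions permits, so $r_1$'s view remains {\sf MovP1} or {\sf MovP2} throughout, as required.
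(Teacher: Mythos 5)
Your proposal is correct, and its second half coincides with the paper's own proof: the paper likewise walks $r_1$ through {\sf Door1}, {\sf StartP1}, {\sf MovP1}, {\sf OnCP1}, {\sf MovP2}, and finally {\sf OnCP2}, invoking Lemma~\ref{collide} for the one-way and no-passing properties of the two borders. Where you genuinely differ is your preliminary invariant that no robot becomes Finished before $r_1$ does. The paper handles this point implicitly: it writes ``because it is the first robot, its view becomes {\sf MovP1}'' and relies on the fact that nobody can pass $r_1$, leaving unstated why no Finished robot can ever enter $r_1$'s view (in particular, why $r_1$'s successor cannot finish behind it and thereby push $r_1$'s view out of the {\sf MovP1}/{\sf MovP2} types, which specify a non-Finished successor on the striped node). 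Your first-finisher argument --- the first robot ever to apply Rule 1 or 4 must do so with view {\sf OnCP2}, since every other view in ${\cal F}_1\cup{\cal F}_2$ requires an already-Finished robot, so the first finisher sits on the diagonal corner, and by order preservation along the single one-way path that all robots follow before any Finished robot exists, that robot is $r_1$ --- discharges this concern uniformly for all the diverting views ({\sf GoCo}, {\sf ColP1A}, {\sf ColP1B}, {\sf MovP2A}, {\sf MovP2B}, and the stopping views) at once. What the paper's shortcut buys is brevity; what your invariant buys is rigor, since it turns the paper's per-view implicit claim into a single structural observation about which views permit finishing, and it is in fact the cleanest way to justify the step the paper asserts without proof.
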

\begin{proof}
When the first robot $r_1$ is in the Door node, then its view is {\sf Door1} in ${\cal D}$.
Thus, it moves to the Door corner by ${\it Enter\_Grid}(r_1)$, and its view becomes {\sf StartP1} in ${\cal M}_1$.
Then, $r_1$ moves to the adjacent node through the edge with the maximal port number by Rule 3.
Then, because it is the first robot, its view becomes {\sf MovP1} in ${\cal M}_1$ and moves to the adjacent node on the first border by Rule 3.


By the proof of Lemma~\ref{collide}, the first border and second border are one-way, and any other robots cannot pass $r_1$ on these borders.
Thus, ${\it view}(r_1)$ remains {\sf MovP1}, and $r_1$ moves on the first border according to the node degree.
Therefore, $r_1$ arrives at the end of the first border eventually, and then ${\it view}(r_1)$ becomes {\sf OnCP1} in ${\cal C}$.
After that, by Rule 2, $r_1$ changes its color to $p_2$ and moves to the adjacent node on the second border.
${\it view}(r_1)$ becomes {\sf MovP2} in ${\cal M}_2$, $r_1$ moves towards the next (diagonal) corner through the second border according to the node degree by Rule 5, eventually ${\it view}(r_1)$ becomes {\sf OnCP2} in ${\cal F}_2$.
By Rule 4, because $c(r_1)=p_2$, $r_1$ changes its color to $F$ on the corner.

Thus, the lemma holds.\qed
\end{proof}

\begin{lemma}\label{1stB}
The first $\lceil l/2 \rceil$ robots move to the second border, and their color becomes $F$.
Additionally, nodes on the second border are occupied by a robot or empty alternately from the diagonal corner. 
\end{lemma}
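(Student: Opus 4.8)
The plan is to argue by induction on the robot index $k$, proving the invariant that for each $k\in\{1,\ldots,\lceil l/2\rceil\}$ robot $r_k$ eventually becomes Finished on the node of the second border lying at distance $2(k-1)$ from the diagonal corner. Because these are exactly the nodes at even distance from the diagonal corner, the invariant at once yields both conclusions of the lemma: there are precisely $\lceil l/2\rceil$ of them, and occupied and empty nodes alternate starting from the diagonal corner. Throughout I would rely on what was already established in the proof of Lemma~\ref{collide}: the first and second borders are one-way, robots never collide, and a non-Finished robot maintains a distance of at least two from its predecessor, since no move rule fires while a non-Finished robot is visible ahead (and with $\phi=2$, ``visible'' means ``within two hops'').

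The base case $k=1$ is exactly Lemma~\ref{1st}. For the inductive step, assume $r_1,\ldots,r_k$ are Finished at distances $0,2,\ldots,2(k-1)$ and take $r_{k+1}$ with $k+1\le\lceil l/2\rceil$. As in Lemma~\ref{1st}, $r_{k+1}$ walks the first border with color $p_1$ under view {\sf MovP1} (Rule~3), keeping distance at least two, until it reaches the first corner. The crucial point is that at that instant the second-border node adjacent to the first corner is still empty: a short parity check shows it is occupied only by the very last of the first $\lceil l/2\rceil$ robots, and only when $l$ is even. Hence $r_{k+1}$ does not trigger the new-line views {\sf ColP1A}/{\sf ColP1B} but reads {\sf OnCP1}, so by Rule~2 it turns to $p_2$ and enters the second border. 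Thereafter its view is {\sf MovP2} and, by Rule~5, it climbs the (one-way) second border, still keeping distance at least two, until it reaches the node two hops below the Finished robot $r_k$. There its view becomes {\sf P2Stop} in ${\cal F}_2$, and since Rule~4 outranks Rule~5 it stops and becomes Finished at distance $2k$, as required.

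Two end situations at the first corner must be separated according to the parity of $l$. When $l$ is even, the last robot reaches the first corner with its Finished predecessor three hops away (hence invisible), turns onto the second border via {\sf OnCP1}, advances one step, and stops via {\sf P2Stop} at the node adjacent to the first corner; the first corner itself stays empty, as the alternating pattern demands. When $l$ is odd, the last robot reaches the first corner and already sees its Finished predecessor exactly two hops away on the second border, so its view is {\sf OnCP1F} in ${\cal F}_1$ and, by Rule~1, it becomes Finished in place on the first corner (which is at the even distance $l-1$). Either way the invariant holds at the final index, closing the induction and both claims.

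The main obstacle, as always in the asynchronous model, is to exclude damaging interleavings and outdated views. The argument must make explicit that the Finished color propagates strictly front-to-back: on the second border a robot can turn Finished only through {\sf P2Stop} or {\sf OnCP1F}, both of which require a Finished robot two hops ahead, so $r_{k+1}$ cannot finish before $r_k$, and by the keep-distance property it stays at least two hops behind $r_k$ and thus never overshoots its target node nor collides. One also has to observe that while $r_k$ is still colored $p_2$, a trailing $r_{k+1}$ that has caught up to two hops behind it simply idles (no rule applies to a non-Finished robot seen two hops ahead), and only after $r_k$ finishes and a subsequent fair activation occurs does $r_{k+1}$ read the final configuration and stop. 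Verifying that the corner views {\sf OnCP1}, {\sf OnCP1F}, {\sf ColP1A}, and {\sf ColP1B} are told apart correctly is where most of the careful case analysis resides.
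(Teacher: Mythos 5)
Your proposal is correct and takes essentially the same route as the paper's proof: the paper analyzes $r_2$ as the model case and then notes that all successors behave identically (which is exactly your induction on $k$, made explicit), relies on the one-way and keep-distance-two properties established in Lemma~\ref{collide}, follows the same view sequence ({\sf OnCP1} $\to$ $p_2$ $\to$ {\sf MovP2} $\to$ {\sf P2Stop}), and ends with the same parity split ({\sf OnCP1F} at the first corner when $l$ is odd, one extra step onto the second border followed by {\sf P2Stop} when $l$ is even). The only differences are presentational: you state the position invariant ($r_k$ finishes at distance $2(k-1)$ from the diagonal corner) and the front-to-back propagation of the Finished color explicitly, where the paper leaves these implicit.
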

\begin{proof}
By Lemma~\ref{1st}, the first robot $r_1$ eventually becomes Finished on the diagonal corner.

First, we consider the second robot $r_2$, which follows $r_1$ in the case that $l>3$.
By the assumption, $r_2$ appears to the Door node just after $r_1$ enters into the grid.
Because $r_1$ does not become Finished before it arrives at the diagonal corner, $r_2$ can move from the Door node only when ${\it view}(r_2)={\sf Door1}$ in ${\cal D}$ holds.
After that, by the definition of the algorithm, $r_2$ moves in the same way as $r_1$, $c(r_2)$ becomes $p_2$ on the end of the first border eventually and $r_2$ moves on the second border.
Finally, $r_2$ can see $r_1$ on the diagonal corner two hops away.
Then, there is no rule such that $r_2$ executes before $c(r_1)$ becomes $F$.
Because of Lemma~\ref{1st}, ${\it view}(r_2)$ eventually becomes {\sf P2Stop} in ${\cal F}_2$.
Then, by Rule 4, $c(r_2)$ becomes $F$.

In the case that $l=3$, when $r_2$ arrives at the end of the first border, ${\it view}(r_2)$ becomes {\sf OnCP1F} in ${\cal F}_1$ and $c(r_2)$ becomes $F$ by Rule 1.
Note that, in any case, the distance between $r_1$ and $r_2$ is two hops when they are Finished.

For the successors of $r_2$, we can discuss their movements in the same way as $r_2$.
Thus, by the definitions of {\sf OnCP1F} and {\sf P2Stop}, 
the distance between a robot and its successor is two hops when they are Finished on the second border.
Therefore, on the second border, beginning with the diagonal corner, every even node is occupied, and the number of robots is $\lceil l/2 \rceil$.
If $l$ is odd, when the $\lceil l/2 \rceil$-th robot arrives at the end of the first border, its view becomes {\sf OnCP1F} and it changes its color to $F$ by Rule 1.
Otherwise, it changes its color to $p_2$ and moves to the second border.

Thus, the lemma holds.\qed
\end{proof}

\begin{lemma}\label{1-line}
From the $(\lceil l/2 \rceil+1)$-th to the $l$-th robots, each robot moves to the $1$-line, and its color becomes $F$.
Additionally, nodes on the $1$-line are empty or occupied by a robot alternately, beginning with an empty node.
\end{lemma}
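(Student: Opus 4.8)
The plan is to mirror the structure of the proofs of Lemmas~\ref{1st} and~\ref{1stB}: I track the batch of robots $r_{\lceil l/2\rceil+1},\dots,r_l$ one after another, handling the leading robot in detail and then arguing that the successors behave identically. By Lemma~\ref{1stB} I may assume that, before any robot of this batch turns onto the $1$-line, the whole $0$-line is already Finished and occupied exactly at the even offsets from the diagonal corner, so that the first-corner node is occupied if and only if $l$ is odd. Throughout, Lemma~\ref{collide} supplies the two structural facts I need: there are no collisions, and both the first border and the $1$-line are traversed one-way, so the robots of the batch preserve their relative order and a pairwise distance of at least two.

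First I would treat the leading robot $r_{\lceil l/2\rceil+1}$. As in Lemma~\ref{1st} it enters and walks along the first border with color $p_1$ using the views of ${\cal M}_1$ (\textsf{StartP1}, \textsf{MovP1}). The first substantive step is to show that it turns onto the $1$-line exactly at the first-border end of that line and nowhere earlier: depending on the parity of $l$ it reaches that node either directly or after one extra \textsf{GoCo} step toward the Finished first corner, and there its view is \textsf{ColP1A} or \textsf{ColP1B}, so by Rule~2 it switches to $p_2$ and steps onto the $1$-line. Here I must verify that the neighbourhood seen at the preceding first-border node does \emph{not} already match a \textsf{ColP1} view (which would make the robot turn one line too early for $l$ odd); this is exactly where the distinction between a Finished robot seen \emph{ahead along the border} and one seen \emph{in the line direction} is used.

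Next I would follow $r_{\lceil l/2\rceil+1}$ up the $1$-line with color $p_2$ (views \textsf{MovP2A}, \textsf{MovP2B}, Rule~5) and show that it halts, by Rule~4, one node short of the geometric end of the line, i.e.\ at offset~$1$ from the diagonal corner, because the last node of the $1$-line is adjacent to the Finished diagonal-corner robot and is therefore recognised as a \textsf{P2StopA}/\textsf{P2StopB} configuration. This single fact is the crux of the lemma: it forces the node of the $1$-line nearest the diagonal corner to stay empty (``beginning with an empty node'') and fixes the occupied nodes of the $1$-line to sit at the odd offsets from the diagonal corner, i.e.\ shifted by one relative to the $0$-line, which is precisely what keeps the union independent. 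I expect this parity-sensitive stopping argument, together with checking that the relevant \textsf{P2Stop} views genuinely match the observed neighbourhood, to be the main obstacle.

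For the successors I would argue that each later robot $r_i$ of the batch repeats the same first-border walk and turn, then climbs the $1$-line until it sees its Finished predecessor two hops ahead on the line (view \textsf{P2StopC}, Rule~4) and Finishes two nodes behind it, producing the claimed spacing of two; order and distance are preserved by Lemma~\ref{collide}. The only case needing separate treatment is the last robot $r_l$, split on the parity of $l$: when $l$ is even the bottom node of the $1$-line must be occupied, so $r_l$ already sees a Finished robot two hops away in the line direction upon reaching the first-border end and Finishes there by Rule~1 (view \textsf{P1Stop}) without stepping onto the line, whereas when $l$ is odd it turns as usual and stops one node up. Finally I would count: the occupied nodes are exactly those at odd offsets from the diagonal corner, there are $\lfloor l/2\rfloor$ of them, and since $\lceil l/2\rceil+\lfloor l/2\rfloor=l$ these are precisely robots $r_{\lceil l/2\rceil+1},\dots,r_l$, which yields the alternating pattern beginning with an empty node and completes the proof.
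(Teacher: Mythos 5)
Your proposal is correct and follows essentially the same route as the paper's proof: the parity split on $l$ (one extra \textsf{GoCo} step then \textsf{ColP1A} for odd $l$, \textsf{ColP1B} directly for even $l$), the climb of the $1$-line via \textsf{MovP2A}/\textsf{MovP2B} ending in \textsf{P2StopA} one node short of the diagonal-corner end, the \textsf{P2StopC} spacing of the successors, and the final \textsf{P1Stop} robot at the first-border/$1$-line intersection when $l$ is even. The differences are purely organizational (you defer the parity split for the last robot to the end, and you explicitly flag the no-premature-turn check that the paper leaves implicit in its view definitions).
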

\begin{proof}
By Lemma~\ref{1stB}, $\lceil l/2 \rceil$ robots on $0$-line eventually become Finished.
Let $r_i$ be the $(\lceil l/2 \rceil+1)$-th robot, $r_j$ be the $(\lceil l/2 \rceil)$-th robot that is the predecessor of $r_i$.
$r_i$ moves from the Door node in the same way as $r_j$ while $c(r_i)=p_1$.
Because robots on $0$-line become Finished eventually, one of the following two
cases occurs: \emph{(1)} if $l$ is odd, ${\it view}(r_i)$ becomes {\sf GoCo} in ${\cal M}_1$, because the end of the first border is occupied by $r_j$, or \emph{(2)} if $l$ is even, ${\it view}(r_i)$ becomes {\sf ColP1B} in ${\cal C}$, because the end of the first border is empty but its adjacent node on the $0$-line is occupied by $r_j$.

In case \emph{(1)}, by Rule 3, $r_i$ moves to the node in front of the end of the first border, ${\it view}(r_i)$ becomes {\sf ColP1A} in ${\cal C}$.
Then, by Rule 2, $c(r_i)$ becomes $p_2$ and $r_i$ moves to $1$-line.
After that, if $l=3$, ${\it view}(r_i)$ becomes {\sf P2StopA} in ${\cal F}_2$ and $r_i$ changes its color to $F$ by Rule 4.
Otherwise, because $r_i$ can see Finished robots on $0$-line, ${\it view}(r_i)$ becomes {\sf MovP2A} in ${\cal M}_2$.
Then, because the nodes on $0$-line are occupied alternately by Lemma~\ref{1stB}, ${\it view}(r_i)$ becomes {\sf MovP2B} and {\sf MovP2A} in ${\cal M}_2$ alternately by the execution of Rule 5.
Thus, $r_i$ moves toward the other side border that is parallel to the first border by Rule 5, and ${\it view}(r_i)$ eventually becomes {\sf P2StopA} because the diagonal corner is occupied by a Finished robot (Lemma~\ref{1st}). 
Then, by Rule 4, $c(r_i)$ eventually becomes $F$.
Because $l$ is odd, $\lfloor l/2 \rfloor-1$ successors of $r_i$ follow $r_i$, and eventually, their views become {\sf P2StopC} in ${\cal F}_2$, and they change their colors to $F$ by Rule 4 on $1$-line.

In case \emph{(2)}, $r_i$ also changes its color to $p_2$, and moves to $1$-line by Rule 2.
After that, because $r_i$ can see Finished robots on $0$-line, ${\it view}(r_i)$ becomes {\sf MovP2B} in ${\cal M}_2$. 
Then, in the same way as for case \emph{(1)}, $l/2-1$ robots including $r_i$ become Finished on $1$-line.
After that, the view of the next robot $r_l$ ($l$-th robot) becomes {\sf P1Stop} in ${\cal F}_1$ on the intersection between the first border and $1$-line, and $r_l$ becomes Finished by Rule 1. 

Thus, the lemma holds.\qed
\end{proof}

\begin{lemma}\label{2hops}
The distance between any two robots on the grid is two hops after every robot becomes Finished.
\end{lemma}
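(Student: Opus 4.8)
The plan is to reduce the statement to a property of the final, static configuration and then establish that configuration by induction on the lines. Since no rule ever moves a Finished robot (Rules 3 and 5, the only moving rules, require $c(r_i)\in\{p_1,p_2\}$), once a robot adopts color $F$ its node is fixed forever. Hence it suffices to determine the node at which each robot halts and to show that these nodes form the ``checkerboard'' pattern in which the nearest occupied nodes lie exactly two hops apart. Concretely, I would prove by induction on the line index $m$ the invariant $P(m)$: on $m$-line the occupied nodes are exactly those whose offset from the first border has a fixed parity (alternating with $m$), consecutive occupied nodes on the line are two hops apart, and the occupancy parity of $m$-line is complementary to that of $(m-1)$-line, so that no occupied node on $m$-line is vertically adjacent to an occupied node on $(m-1)$-line. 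The base cases $P(0)$ and $P(1)$ are exactly Lemmas~\ref{1stB} and~\ref{1-line}: the former places robots every two hops on $0$-line starting at the diagonal corner, the latter places them every two hops on $1$-line starting one node away, which is precisely the phase shift the checkerboard requires.

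For the inductive step, assume lines $0,\dots,m$ satisfy the invariant. As in the proof of Lemma~\ref{collide}, the robots that have not yet turned climb the first border with color $p_1$ while keeping distance at least two from one another, until the leading such robot sees a Finished robot of $m$-line two hops away. Whether its view is {\sf GoCo} followed by {\sf ColP1A}, or {\sf ColP1B}, is dictated by whether the first-border node of $m$-line is occupied or empty, which is determined by the parity recorded in $P(m)$; in either case the robot recolors to $p_2$ and turns onto $(m+1)$-line at the column of complementary parity. Thereafter its view alternates between {\sf MovP2A} and {\sf MovP2B} --- this alternation being justified exactly by the fact that $m$-line is occupied alternately --- so it walks along $(m+1)$-line always two hops behind its predecessor and never adjacent to a robot of $m$-line, halting (view {\sf P2StopA}, {\sf P2StopB}, or {\sf P2StopC}) at an occupied node of the new parity. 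This yields $P(m+1)$. The induction runs until $(L-1)$-line, with the boundary views {\sf P1Stop} and {\sf OnCP1F} terminating the last robot of each line and of the last line; combined with Lemma~\ref{collide} (no collisions) and the fact that every entering robot is eventually routed onto some line, this shows that the entire grid receives the checkerboard pattern.

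Finally, from the pattern I would read off the distance claim: two occupied nodes on the same line are consecutive in the alternation and hence two hops apart; two occupied nodes on adjacent lines have complementary column parity, so they never share a column (which would give distance one) but are placed diagonally at distance two; and all remaining pairs are strictly farther. Thus the nearest robots sit exactly two hops apart and, in particular, no two robots are adjacent once every robot is Finished. The main obstacle I anticipate is the parity bookkeeping: I must track, line by line and through the case split on whether $l$ is odd or even, which column each new line starts being occupied at, and verify that the trigger between {\sf ColP1A}/{\sf GoCo} and {\sf ColP1B} --- together with the {\sf MovP2A}/{\sf MovP2B} alternation and the boundary {\sf P2Stop}$\ast$ views --- always reproduces the complementary parity, so that the checkerboard phase shifts by exactly one each line without drifting at the first-border column or at the last line.
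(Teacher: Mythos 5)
Your proposal is correct and is essentially the paper's own argument: the paper likewise reduces the claim to the line-by-line parity alternation established in Lemmas~\ref{1stB} and~\ref{1-line} (invoking them with ``by the same argument as''), differing only in packaging --- it states the same-line spacing directly from the definitions of ${\cal F}_1$ and ${\cal F}_2$ and phrases the cross-line step as a contradiction (two vertically adjacent Finished robots would force a stuck robot $r_r$, which by the parity invariant would have view {\sf MovP2B} and could still move), whereas you run the induction explicitly and read the distances off the checkerboard. The one item you defer but the paper treats explicitly is the end-game on the $(L-1)$-line, where the turn point coincides with the Door corner: the paper resolves it by a case split on the Door-node view ({\sf Door1} when $l$ and $L$ have equal parity, {\sf Door2} otherwise), showing that the last robot Finishes at, or adjacent to, the Door corner and that no further robot can enter --- your induction's final step must include that case analysis rather than the generic ``climb the first border and turn'' step, which does not literally apply there.
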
  
\begin{proof}
By 
the definitions of ${\cal F}_1$ and ${\cal F}_2$, the distance between a robot $r_i$ and its predecessor is two hops after $r_i$ becomes Finished if the predecessor is on the same line as $r_i$.
Thus, when the robots on $m$-line ($0<m<L-1$) become Finished, if there are two adjacent Finished robots, then there is a robot $r_r$ on $m$-line that cannot move from the node that is adjacent to a node occupied by a Finished robot on $(m-1)$-line.
However, by the same argument as in Lemmas~\ref{1stB} and \ref{1-line}, if $m$ is odd (resp. even), the nodes on $m$-line are occupied alternately beginning with an empty node (resp. occupied node) because the nodes on $(m-1)$-line are also occupied alternately beginning with an occupied node (resp. empty node).
Thus, before such $r_r$ becomes Finished, $r_r$ has a view of type {\sf MovP2B} and can move by Rule 5, i.e., it cannot exist.

Now, to consider the end of the execution of the algorithm, we consider $(L-1)$-line when nodes on $(L-2)$-line are occupied by Finished robots. 
The $(L-1)$-line is a border connected to the Door node.
Then, if both $l$ and $L$ are odd or both are even, the view from the Door node becomes {\sf Door1}, otherwise {\sf Door2} (See Fig.~\ref{fig:checkers}).
\begin{itemize}
\item If the view from the Door node is {\sf Door1}, the robot on the Door node moves to the Door corner by Rule 0.
Then, the view from the Door corner is {\sf ColP1B} in ${\cal C}$.
By the above discussion, the view from the Door corner eventually becomes {\sf P1Stop} in ${\cal F}_1$, thus the final robot on the Door corner becomes Finished by Rule 1.
Then, any other robots cannot enter into the grid because there is no such rule.
\item If the view from the Door node is {\sf Door2}, the view from the Door corner is {\sf ColP1A} in ${\cal C}$.
Then, the empty node $v$ that is adjacent to the Door corner is eventually occupied by a Finished robot on $(L-1)$-line.
After that, any other robots on the Door node cannot enter into the grid because there is no such rule. 
\end{itemize}
Thus, the lemma holds.\qed
\end{proof}


\begin{lemma} 
Every robot on the grid is eventually Finished.
\end{lemma}
\begin{proof}
By the proofs of Lemmas~\ref{collide}-\ref{2hops}, the transitions of the view type of each robot are shown as Fig.~\ref{fig:Alg1}. 
Thus, the lemma holds.\qed
\end{proof}

\begin{theorem}\label{1MIS}
Algorithm 1 constructs a maximum independent set of occupied locations on the grid.
\end{theorem}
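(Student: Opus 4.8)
The plan is to assemble the structural facts already established: by the termination lemma (every robot on the grid is eventually Finished) together with Lemma~\ref{collide}, the execution reaches a configuration in which every robot has stopped and no two robots share a node. Writing $I$ for the set of nodes occupied in this terminal configuration, the theorem reduces to two checks on $I$: that it is \emph{independent} (no two of its nodes are adjacent in $G$), and that it has \emph{maximum} cardinality among all independent sets of $G$.

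Independence I would read off directly from Lemma~\ref{2hops}: since in the terminal configuration every pair of neighboring occupied nodes is exactly two hops apart, no two occupied nodes are adjacent, so $I$ is an independent set by definition.

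For maximality I would first identify $I$ with one color class of the canonical $2$-coloring of the grid. The $(l,L)$-grid $G$ is bipartite; fixing the diagonal corner as origin and coloring node $(i,j)$ by the parity of $i+j$, I would show by induction on the line index $m$ that the $m$-line is occupied exactly at the nodes of one fixed parity, the starting parity alternating from line to line. The base cases $m=0$ and $m=1$ are Lemmas~\ref{1stB} and~\ref{1-line}; the inductive step is precisely the argument used inside the proof of Lemma~\ref{2hops}, namely that if $(m-1)$-line is occupied alternately then $m$-line becomes occupied alternately with the complementary starting parity. Hence $I$ equals the color class containing the diagonal corner.

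Finally I would invoke the standard bound for bipartite graphs (K\"onig's theorem): every independent set of $G$ has size at most $n$ minus the size of a maximum matching, and since the $(l,L)$-grid admits a matching of size $\lfloor n/2\rfloor$ (a perfect one when $n$ is even, a near-perfect one when $n$ is odd), the maximum independent set has size exactly $\lceil n/2\rceil$. It then remains to count $|I|$: the occupied color class contains the diagonal corner and, by Lemma~\ref{2hops}, fills each line alternately, and a short case analysis over the parities of $l$ and $L$ shows $|I|=\lceil n/2\rceil$ in every case. I expect this parity bookkeeping --- verifying that the class the algorithm fills is always the \emph{larger} of the two classes when $n$ is odd (both $l$ and $L$ odd), so that $|I|$ attains $\lceil n/2\rceil$ rather than $\lfloor n/2\rfloor$ --- to be the only delicate point. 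Once it is settled, $I$ is an independent set whose size equals the maximum, and the theorem follows. \qed
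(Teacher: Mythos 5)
Your proposal is correct, and its skeleton matches the paper's own proof: both reduce the theorem to Lemma~\ref{2hops}, observe that the terminal configuration is a checkers pattern, and single out the odd--odd case as the one where only the pattern containing the corners is maximum. Where you genuinely diverge is in justifying \emph{maximality}. The paper's argument is essentially pictorial: it asserts that only checkers patterns can satisfy the distance constraint, then compares the two possible checkers patterns (Fig.~\ref{fig:checkers}), implicitly taking for granted that no independent set of the grid can exceed $\lceil n/2\rceil$ nodes. You prove that upper bound explicitly: the grid is bipartite and has a matching of size $\lfloor n/2\rfloor$ (it has a Hamiltonian path, hence such a matching), so by K\"onig's theorem --- or just the elementary fact that each matching edge contains at most one node of any independent set --- every independent set has at most $n-\lfloor n/2\rfloor=\lceil n/2\rceil$ nodes. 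Your subsequent count $|I|=\lceil n/2\rceil$ essentially reproves Lemma~\ref{number}, which the paper states only \emph{after} the theorem and never feeds back into it. The benefit of your route is that it closes the logical gap in the paper's ``so any checkers pattern is a maximum independent set'' step and makes the proof self-contained; the cost is the extra line-by-line parity induction, though your appeal to the inductive step inside the proof of Lemma~\ref{2hops} (alternation on the $m$-line propagating from the $(m-1)$-line, with base cases Lemmas~\ref{1stB} and~\ref{1-line}) is legitimate since that proof does establish exactly this. Both arguments hinge on the same delicate point you flag: when both $l$ and $L$ are odd, the algorithm fills the larger color class, the one containing the diagonal corner.
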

\begin{proof}
By Lemma~\ref{2hops}, distances between any two occupied nodes are two. On a grid, only checkers patterns satisfy this constraint. 
When at least one dimension of even, there are as many occupied locations as non-occupied locations, so any checkers pattern is a maximum independent set (see Fig.~\ref{fig:checkers}(a) and Fig.~\ref{fig:checkers}(b)). When both dimensions are odd, there may be either one more occupied locations than non-occupied locations, or the contrary (See Fig.~\ref{fig:checkers}(c)). The situation that corresponds to the maximum independent set is the one with occupied locations in the corners, which is what our algorithm constructs. 
Hence, the theorem holds.\qed
\end{proof}

\begin{figure}[t]
\centering
\subfigure[even-even dimensions]{
\centering
\includegraphics[height=.12\textheight]{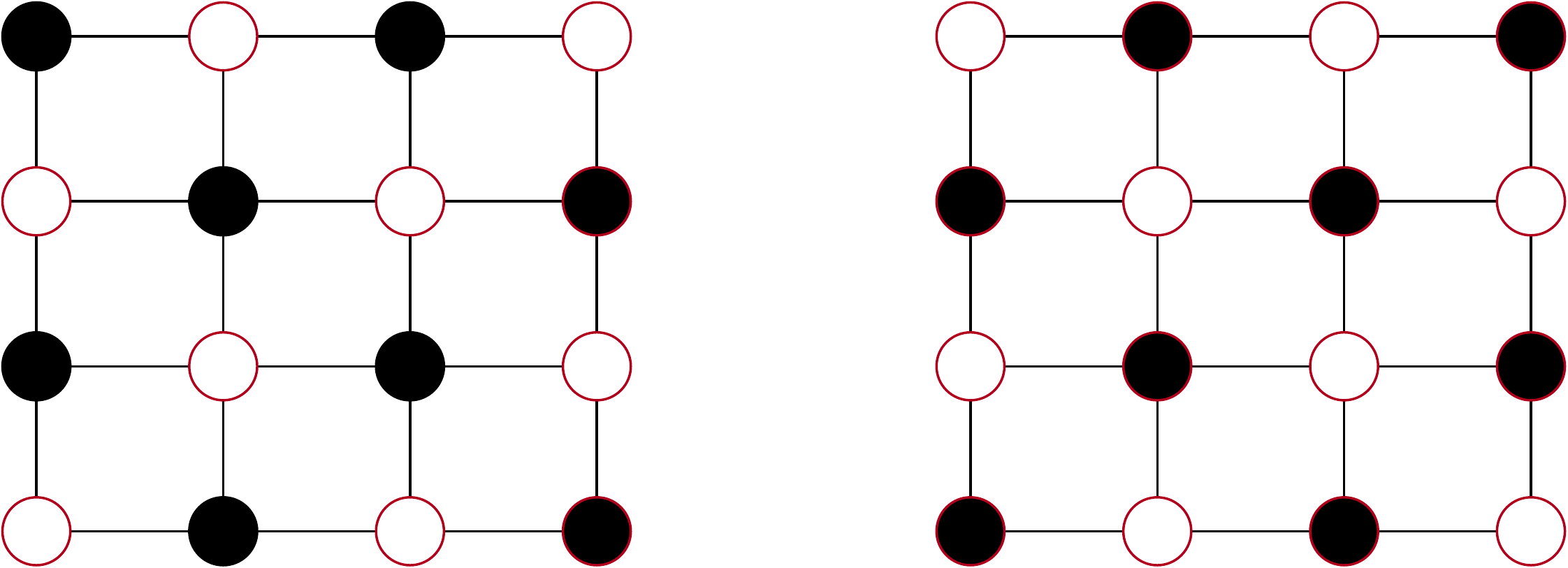}
}
\subfigure[even-odd dimensions]{
\centering
\includegraphics[height=.15\textheight]{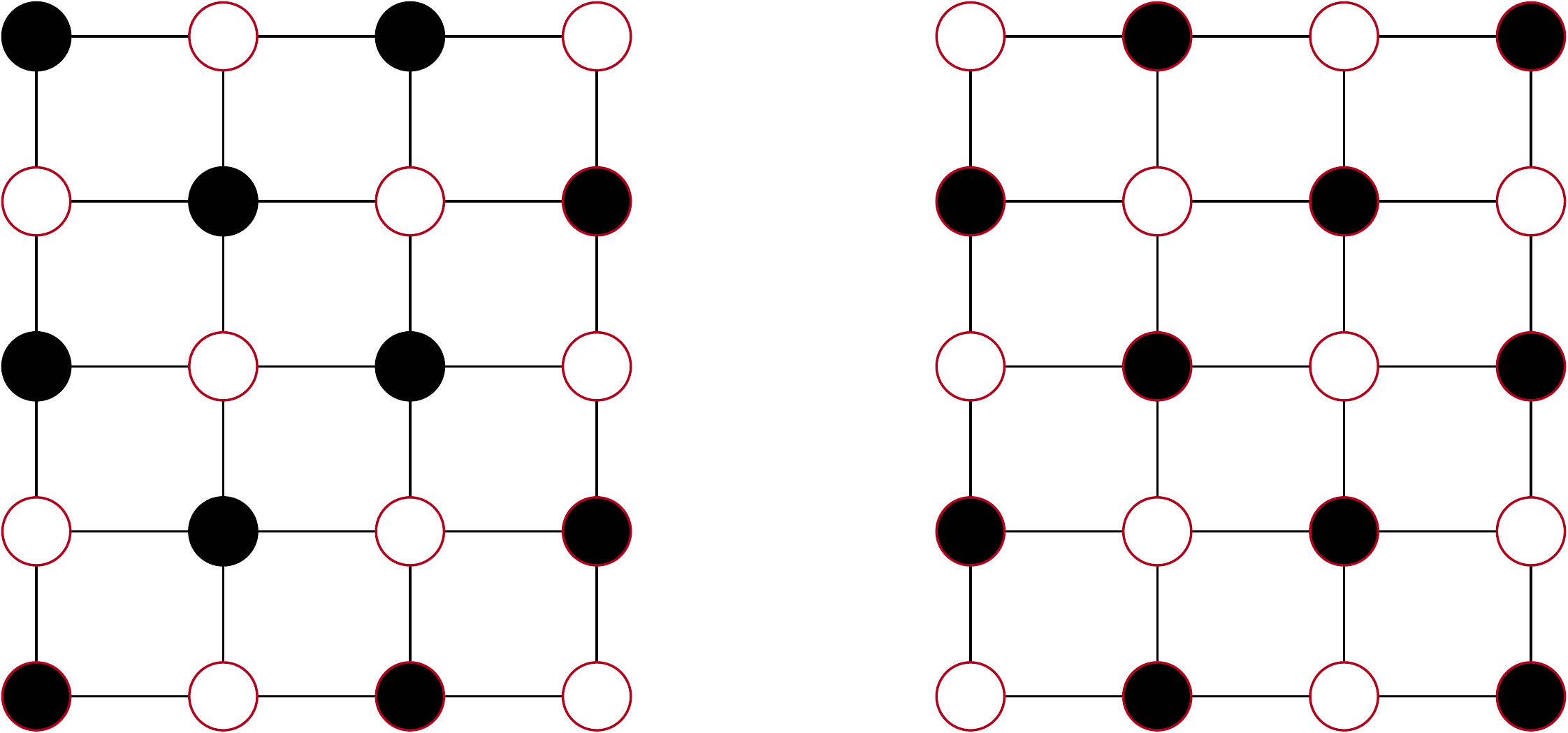}
}
\subfigure[odd-odd dimensions]{
\centering
\includegraphics[height=.15\textheight]{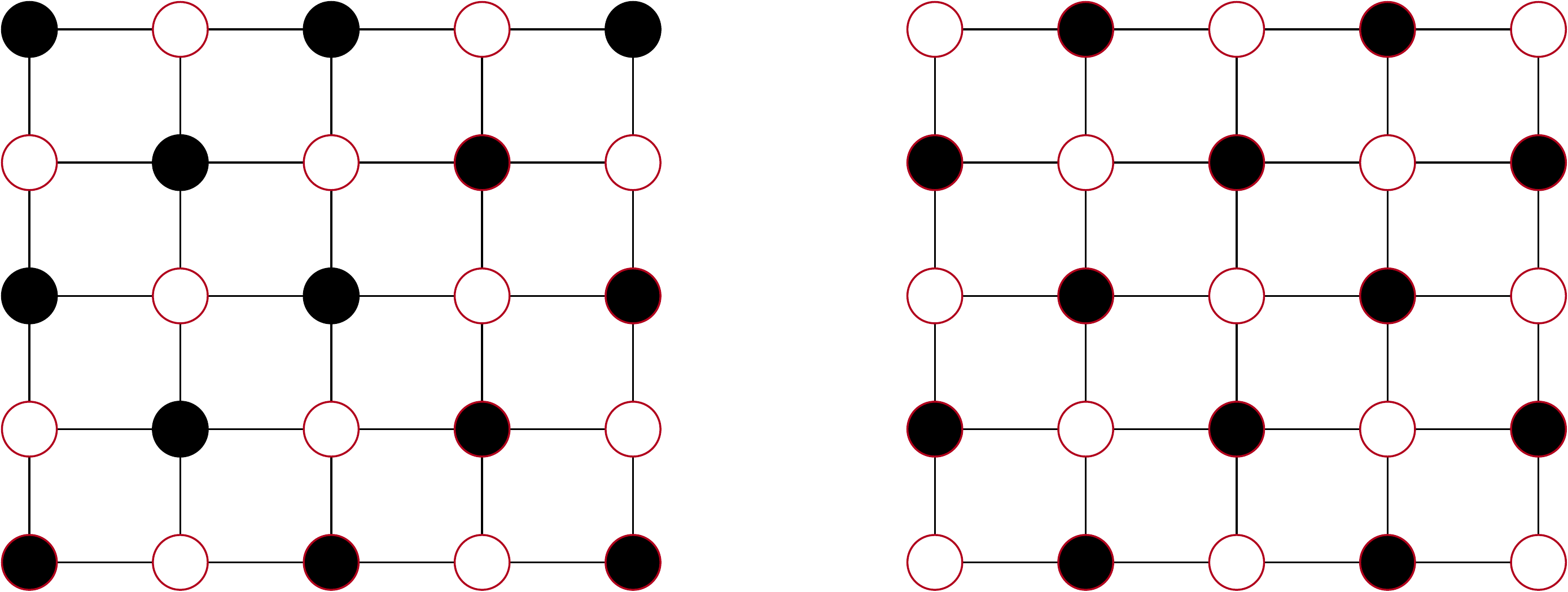}
}
\caption{Checkers patterns in grids.}
\label{fig:checkers}
\end{figure}

\begin{figure}[t]
    \centering
    \includegraphics[width=0.57\textwidth]{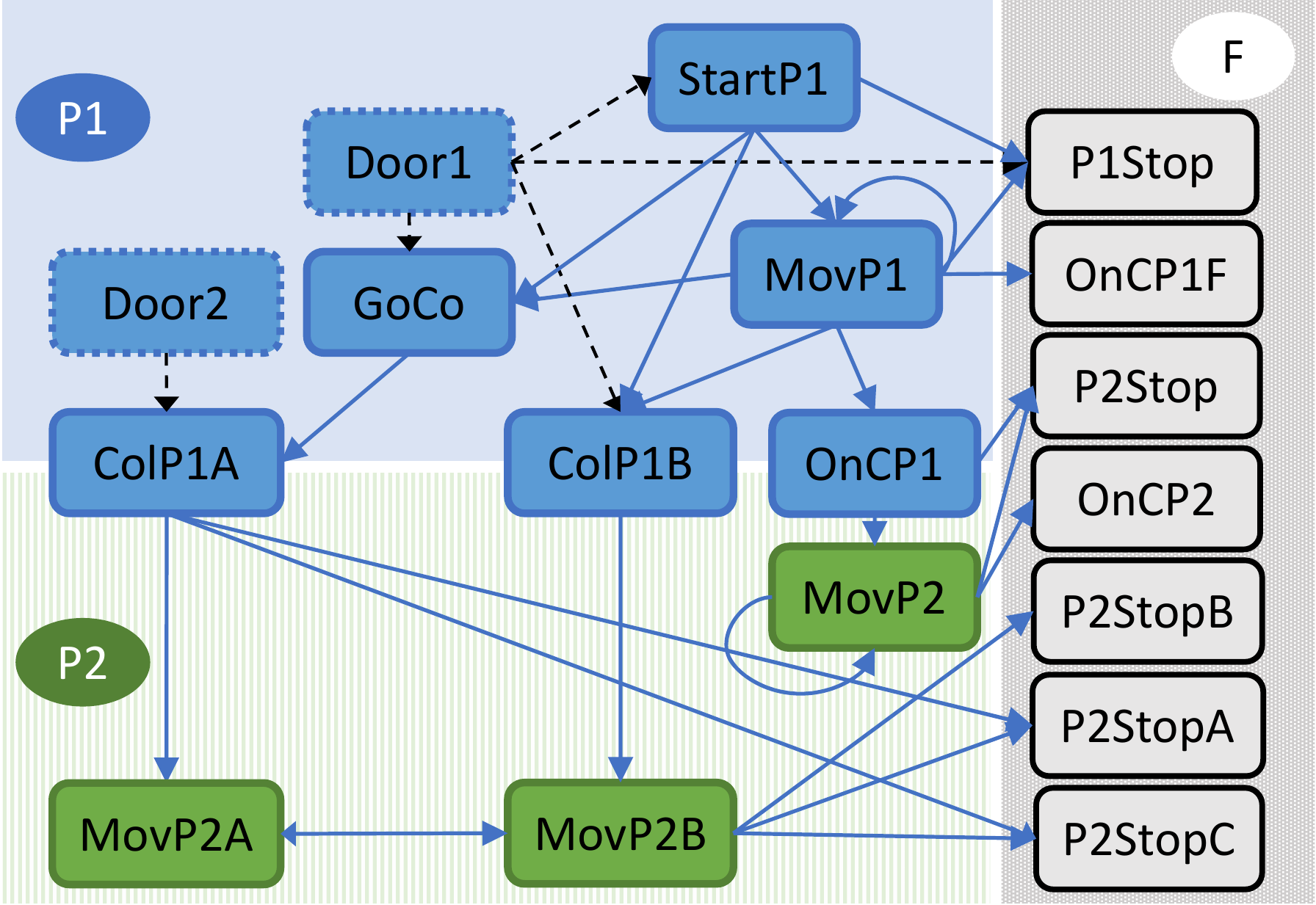}
    \caption{View type transitions of Algorithm 1. Each solid arrow (resp. dotted arrow) represents a transition by a {\it Move} (resp. {\it Enter\_Grid}) operation. }
    \label{fig:Alg1}
\end{figure}

\begin{lemma}\label{number}
When a maximum independent set is constructed, $\lceil n/2 \rceil$ robots are on the grid.
\end{lemma}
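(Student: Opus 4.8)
The plan is to count occupied nodes line by line and then reconcile the total with $\lceil n/2\rceil$ through a short parity case analysis. From Lemmas~\ref{1stB}, \ref{1-line} and the alternation argument in the proof of Lemma~\ref{2hops}, in the final configuration the restriction to each $m$-line ($0\le m\le L-1$) is a perfectly alternating sequence of occupied and empty nodes, whose starting occupancy is governed by the parity of $m$: the $0$-line (the second border) starts with an occupied node at the diagonal corner, the $1$-line starts with an empty node, and in general the occupancy of the $(m-1)$-line forces the complementary pattern on the $m$-line. Since every line has $l$ nodes, each even-indexed line then carries exactly $\lceil l/2\rceil$ robots and each odd-indexed line carries exactly $\lfloor l/2\rfloor$ robots.

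Next I would sum over the $L$ lines $0,1,\ldots,L-1$. Among these indices there are $\lceil L/2\rceil$ even values and $\lfloor L/2\rfloor$ odd values, so the total number of robots equals
\[
N=\left\lceil \tfrac{L}{2}\right\rceil\left\lceil \tfrac{l}{2}\right\rceil+\left\lfloor \tfrac{L}{2}\right\rfloor\left\lfloor \tfrac{l}{2}\right\rfloor .
\]
It then remains to verify the identity $N=\lceil n/2\rceil=\lceil lL/2\rceil$, which I would settle by cases on the parities of $l$ and $L$. If at least one of $l,L$ is even, then $lL$ is even; when $l$ is even, $\lceil l/2\rceil=\lfloor l/2\rfloor=l/2$ factors out of both products and $\lceil L/2\rceil+\lfloor L/2\rfloor=L$ yields $N=lL/2=\lceil lL/2\rceil$, and symmetrically when $L$ is even. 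When both $l$ and $L$ are odd, writing $l=2a+1$ and $L=2b+1$ gives $N=(a+1)(b+1)+ab=2ab+a+b+1$, while $n=(2a+1)(2b+1)=4ab+2a+2b+1$ is odd, so $\lceil n/2\rceil=2ab+a+b+1=N$.

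The arithmetic is routine; the main obstacle is justifying the first step uniformly for every line, since Lemmas~\ref{1stB} and \ref{1-line} only treat the $0$-line and $1$-line explicitly. I would therefore lean on the inductive alternation already contained in the proof of Lemma~\ref{2hops}: once the $(m-1)$-line is fully occupied in its alternating pattern, a robot on the $m$-line adjacent to an occupied node of the $(m-1)$-line still sees a {\sf MovP2B} view and must advance, so it cannot stop there; this forces the $m$-line to settle into the pattern with the opposite starting parity. Propagating this from $m=0$ up to $m=L-1$ extends the per-line counts $\lceil l/2\rceil$ (even lines) and $\lfloor l/2\rfloor$ (odd lines) to all $L$ lines, completing the count.
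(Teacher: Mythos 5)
Your proof is correct and follows essentially the same route as the paper: both count $\lceil l/2\rceil$ robots on each even-indexed line and $\lfloor l/2\rfloor$ on each odd-indexed line (citing the alternation established in Lemmas~\ref{1stB}--\ref{2hops}), sum over the $L$ lines, and verify by a parity case analysis that the total equals $\lceil lL/2\rceil=\lceil n/2\rceil$. Your version merely makes the closed-form sum $\lceil L/2\rceil\lceil l/2\rceil+\lfloor L/2\rfloor\lfloor l/2\rfloor$ and the odd-odd arithmetic explicit, where the paper splits only on the parity of $L$ and leaves the final identity implicit.
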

\begin{proof}
By the proofs of Lemmas~\ref{1stB}--\ref{2hops}, nodes on the even-numbers (resp. odd-numbers) lines are occupied by $\lceil l/2\rceil$ (resp. $\lfloor l/2 \rfloor$) robots.
Therefore, if $L$ is even, the number of robots in the maximum independent set is $lL/2$.
If $L$ is odd, the number of robots in the maximum independent set is $l\lfloor L/2\rfloor+\lceil l/2\rceil$.
Because $n=lL$, the lemma holds.\qed
\end{proof}

To analyze the time complexity of the algorithm, we count the sum of individual executions of rules.

\begin{theorem}\label{time}
The time complexity by Algorithm 1 is $O(n(L+l))$ steps.
\end{theorem}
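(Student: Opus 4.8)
The plan is to bound the total number of rule executions (each ``step'' being one application of a rule in Algorithm~1) by charging the cost to each robot and then summing over all $\lceil n/2\rceil$ robots established in Lemma~\ref{number}. The key observation is that the lifetime of a single robot, from the moment it enters the grid until its light turns $F$, consists of a bounded-length sequence of phases: one \emph{Enter\_Grid} operation, a walk along the first border (at most $L$ moves), possibly a turn onto the second border or onto some $m$-line (at most $l$ moves), and a constant number of color changes. Since each robot traverses at most one border of length $O(L)$ and at most one line of length $O(l)$ before becoming Finished, each robot executes $O(L+l)$ rules over its entire lifetime.

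First I would make precise the claim that every rule application corresponds either to a \emph{Move}/\emph{Enter\_Grid} operation (there are $O(L+l)$ of these per robot, since the robot's trajectory is a path of length $O(L+l)$ that never revisits a node, as the borders and lines are one-way by Lemma~\ref{collide}) or to a color change (of which there are at most two per robot: $p_1\to p_2$ and $p_2\to F$, or directly $p_1\to F$). So the \emph{productive} rule executions per robot number $O(L+l)$. Summing over the $\lceil n/2\rceil$ robots on the grid gives $O(n(L+l))$ productive steps, which matches the claimed bound.

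The subtle point — and what I expect to be the main obstacle — is accounting for \emph{unproductive} activations: the asynchronous fair scheduler may activate a robot whose guard matches none of the rules (so it stays idle), for instance while a robot with $p_1$ or $p_2$ is waiting for its predecessor to become Finished two hops away. If such idle activations were counted, the step count could be unbounded, since the scheduler can activate a blocked robot arbitrarily often. I would therefore argue that the time complexity is measured in terms of \emph{effective} steps, i.e.\ actual rule applications that change the configuration (a move or a color change), rather than raw scheduler activations; idle looks that fire no rule are not counted as steps. Under that convention, the argument above closes: each robot contributes $O(L+l)$ effective steps, and there are $\lceil n/2\rceil = O(n)$ robots, yielding the bound.

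To assemble the proof cleanly I would (i) invoke Lemma~\ref{collide} and the view-transition diagram of Fig.~\ref{fig:Alg1} to assert that each robot's trajectory through the view types is acyclic apart from the self-loops on the ``moving'' states {\sf MovP1}, {\sf MovP2}, {\sf MovP2A}, {\sf MovP2B}; (ii) bound the number of self-loop iterations on a border by $O(L)$ and on a line by $O(l)$, since each move advances one node along a one-way path; (iii) observe that every non-moving view type is visited at most a constant number of times, contributing $O(1)$ color-change steps; and (iv) conclude that each robot performs $O(L+l)$ effective steps and multiply by the robot count $\lceil n/2\rceil$ from Lemma~\ref{number}. The only real care needed is the distinction between scheduler activations and effective steps flagged above; once the metric is fixed, the remaining estimate is a routine summation.
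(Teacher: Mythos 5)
Your proposal is correct and follows essentially the same route as the paper: a per-robot bound of $O(L+l)$ rule executions (the first robot's trajectory of $L+l-1$ moves being the longest), multiplied by the $\lceil n/2\rceil$ robots from Lemma~\ref{number}. The ``subtle point'' you flag about idle scheduler activations is already settled by the paper's convention, stated just before the theorem, that the complexity measure is the sum of individual executions of rules, which is exactly the metric you adopt.
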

\begin{proof}
The first robot moves $L+l-1$ steps and becomes Finished, thus it executes $L+l$ steps. 
The first robot moves the longest way.
Therefore, by Lemma~\ref{number}, the sum of the number of steps 
is $O(n(L+l))$.
Thus, the theorem holds.\qed
\end{proof}

\subsection{Algorithm with 7 colors lights, and $\phi=3$}\label{sec:7colors}
In this section, we relax both additional hypotheses made in Section~\ref{sec:3colors}. So, there is no local labeling of edges, and robots cannot recognize the node they came from when at a particular node. 
Instead, we assume $\phi=3$, and that seven light colors are available for each robot $r_i$, whose colors are named $F$, $p_1(k_i)$, and $p_2(k_i)$ ($k_i\in\{0, 1, 2\}$).
The value of $k_i$ represents the \emph{order} of the robot (the notion of order is explained in detail in the sequel).
Initially, the color of light $c(r_i)$ for each robot $r_i$ is $p_1(0)$, that is, $k_i=0$.

The strategy to construct a maximum independent set is the same as Algorithm~\ref{code:alg1}. 
However, on the Door node, the first robot chooses an adjacent node on the grid arbitrarily (that is, the choice can be taken by an adversary), and the other robots just follow it.

The algorithm description is in Algorithm~\ref{code:alg2}.
In this algorithm, we use the definitions of view types in Fig.~\ref{fig:0}--\ref{fig:p2M3}. 
Unlike Algorithm~\ref{code:alg1}, we do not use dotted circles without frames, since the previous node can no longer be recognized by the robot.
The circle with diagonal stripes or horizontal stripes represents $r_i$'s successor robot, which must be there (We explain later in the text how to recognize predecessor and successor).
If the successor robot is on the diagonal (resp. horizontal) striped node, it has $p_1$ (resp. $p_1$ or $p_2$).
While there are two types of successor in each view type of Fig.~\ref{fig:p2F3}--\ref{fig:p2M3}, exactly one must be present.
The waffle circle represents $r_i$'s non-Finished predecessor robot.
If the waffle circle is with a thick border, the predecessor must be there.
Otherwise, it may be an empty node or non-existent node.
For example, in the type {\sf Door2} (Fig.~\ref{fig:p1D3}), when the predecessor has just become Finished on the upper node with the thick white circle, the waffle circle with the dotted border is actually an empty node.
The square with a question mark represents any node in {\sf Door0} (Fig.~\ref{fig:0}).

By the strategy of the routing described above, each robot enters the grid one-by-one and walks in line on the grid.
Therefore, each robot has a successor, and each robot except the first one has a predecessor. 
In this algorithm, each robot has a variable $k_i$ to distinguish them.
On the Door node, each robot $r_i$ sets its $k_i$ ({\sf Door0} in Fig.\ref{fig:0}).
If $r_i$ is the first robot, keeps $k_i=0$.
Otherwise, if its predecessor robot $r_j$ on the Door corner has $p_1(k_j)$, then $k_i$ is set to $(k_j+1)\mod 3$.
The value of $k_i$ is kept in $c(r_i)$ such that $p_1(k_i)$ and $p_2(k_i)$, and $k_i$ is not changed after that.
On the Door corner, each robot $r_j$ waits for its successor $r_i$ on the Door node to set its value $k_i$ before $r_j$ moves ({\sf ColP1A1} and {\sf ColP1B1} in Fig.~\ref{fig:p1C3}, and {\sf StartP10}, {\sf StartP11}, {\sf MovP13}, {\sf MovP14} and {\sf GoCo1} in Fig.~\ref{fig:p1M3}).
By this mechanism, each robot $r_i$ recognizes that its neighboring non-Finished robot $r_j$ with smaller (resp. larger)\footnote{If $k_i=2$ (resp. 0, 1), it is larger than 1 (resp. 2, 0), but smaller than 0 (resp. 1, 2).} $k_j$ value than $k_i$ is its predecessor (resp. successor).
Let ${\it SetC}(r_i)$ be the operation such that $c(r_i):=p_1((k_j+1)\mod 3)$, where $r_j$ is the robot on the Door corner and $c(r_j)=p_1(k_j)$.

Only the first robot selects its way arbitrarily from the Door corner ({\sf StartP10} in Fig.~\ref{fig:p1M3}).
After that, the other robot $r_i$ can move only when the distance from its predecessor is three (unless the predecessor becomes Finished) and the distance from its successor is two (unless $r_i$ is not on the Door corner).
By this mechanism, $r_i$ can recognize which border is the first border chosen by the first robot on the Door corner.

\begin{algorithm}[t]
{\bf Colors}~ $F$, $p_1(k_i)$, $p_2(k_i)$, where $k_i\in\{0, 1, 2\}$\\
{\bf Initialization}~ $c(r_i)=p_1(0)$\\
{\bf Rules on node $v$ of robot $r_i$}\\
0-1: $c(r_i)=p_1(0) \land {\it view}(r_i)={\sf Door0}$
$\rightarrow$ ${\it SetC}(r_i)$;\\
0-2: $c(r_i)=p_1(k_i) \land {\it view}(r_i)\in {\cal D}^\prime$ 
$\rightarrow$ ${\it Enter\_Grid}(r_i)$;\\
1: $c(r_i)=p_1(k_i)\land {\it view}(r_i)\in {\cal F}_1^\prime$ $\rightarrow$ $c(r_i):=F$;\\
2: $c(r_i)=p_1(k_i) \land {\it view}(r_i)\in{\cal C}^\prime$ $\rightarrow$ $c(r_i):=p_2(k_i)$; ${\it Move}(r_i)$;\\
3: $c(r_i)=p_1(k_i)\land {\it view}(r_i)\in{\cal M}_1^\prime$ $\rightarrow$ ${\it Move}(r_i)$;\\
4: $c(r_i)=p_2(k_i)\land {\it view}(r_i)\in{\cal F}_2^\prime$ $\rightarrow$ $c(r_i):=F$;\\
5: $c(r_i)=p_2(k_i)\land {\it view}(r_i)\in {\cal M}_2^\prime$ $\rightarrow$ ${\it Move}(r_i)$;
\caption{Algorithm for a maximum independent set placement with 7 colors light.}
\label{code:alg2}
\end{algorithm}

\begin{figure}[t]
\begin{minipage}[t]{0.3\textwidth}
    \centering
    \includegraphics[height=2cm]{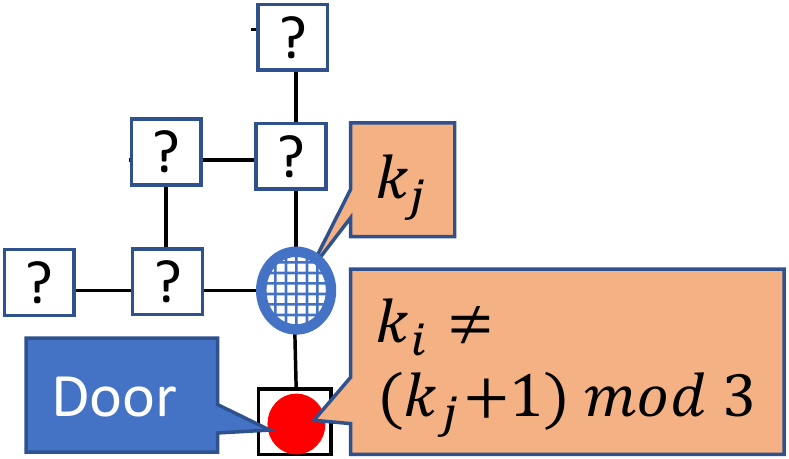}\caption{{\sf Door0}.}
    \label{fig:0}
\end{minipage}
\begin{minipage}[t]{0.7\textwidth}
    \centering
\subfigure[{\sf Door1}]{\includegraphics[height=1.7cm]{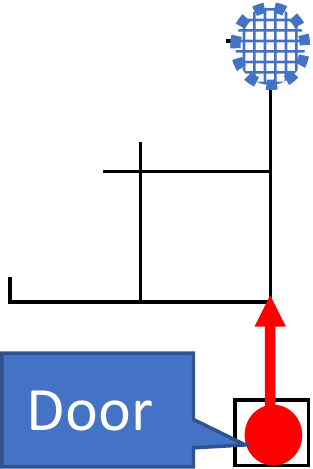}}
    \hspace{0.45cm}
\subfigure[{\sf Door2}]{\includegraphics[height=1.7cm]{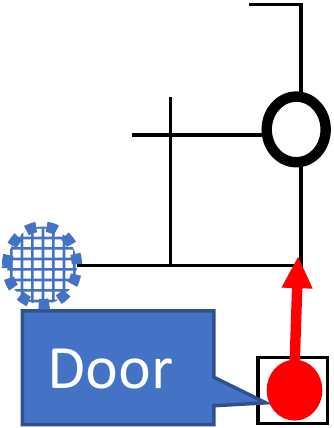}}
    \hspace{0.45cm}
\subfigure[{\sf Door3}]{\includegraphics[height=1.7cm]{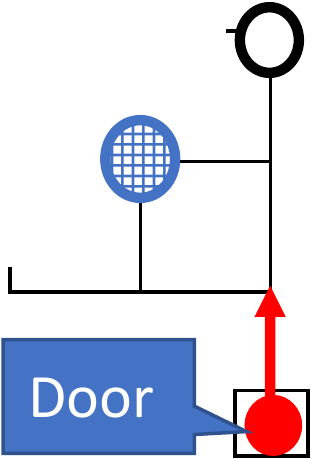}}
    \hspace{0.45cm}
\subfigure[{\sf Door4}]{\includegraphics[height=1.7cm]{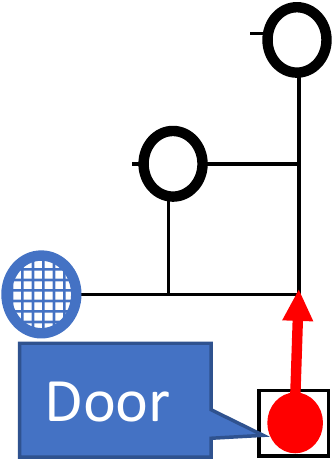}}
\caption{Definition of {\sf Door0} and views in ${\cal D}^\prime$ while $p_1$.}
\label{fig:p1D3}
\end{minipage}
\end{figure}
\begin{figure}[t]
\centering
\subfigure[{\sf P1Stop0}]{\includegraphics[height=1.8cm]{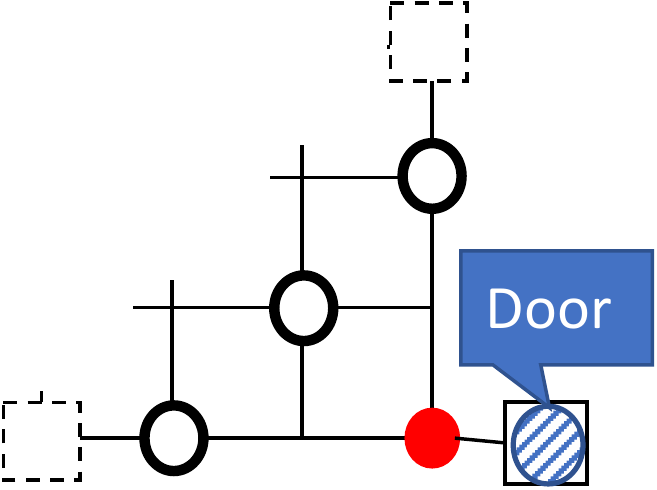}}\hspace{1cm}
\subfigure[{\sf P1Stop1}]{\includegraphics[height=2.9cm]{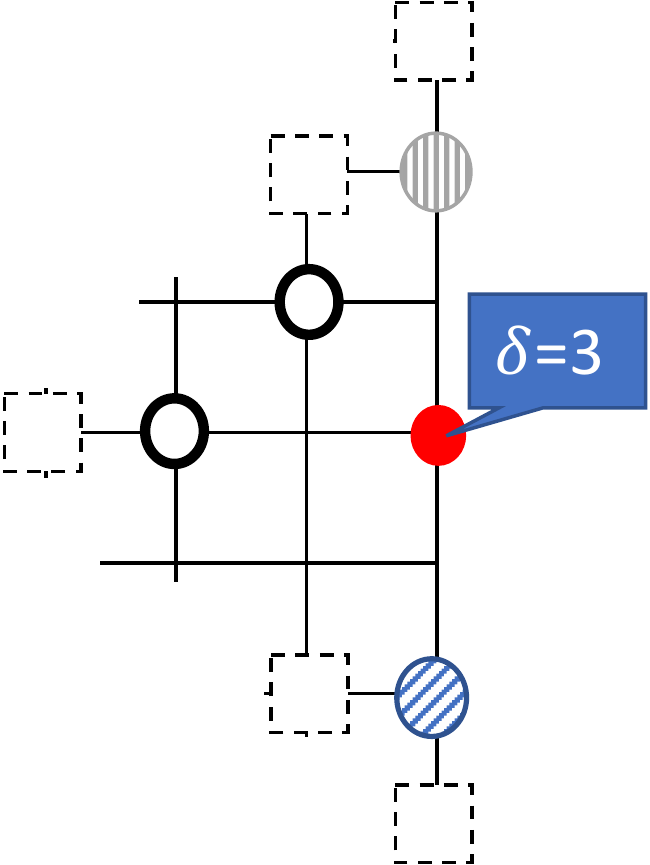}}\hspace{1cm}
\subfigure[{\sf OnCP1F}]{\includegraphics[height=1.7cm]{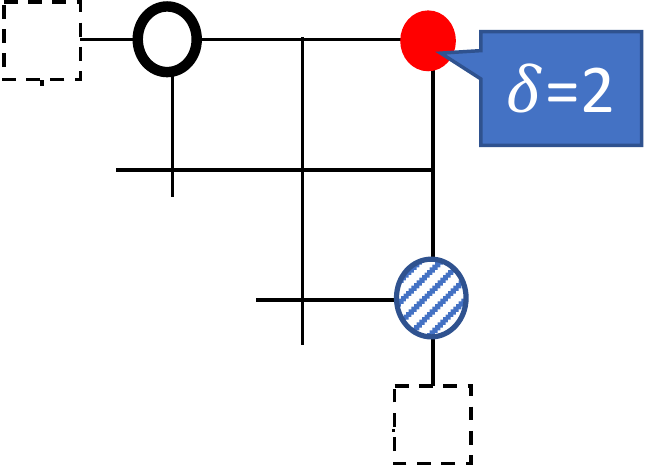}}\hspace{1cm}
    \caption{Definition of views in ${\cal F}_1^\prime$ while $p_1$.}
\label{fig:p1F3}
\end{figure}
\begin{figure}[t]
\centering
\subfigure[{\sf OnCP1}]{\includegraphics[height=1.7cm]{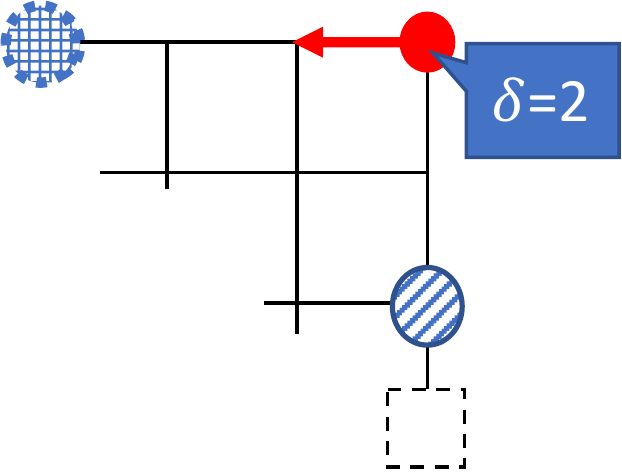}}\hspace{0.5cm}
\subfigure[{\sf ColP1A1}]{\includegraphics[height=2cm]{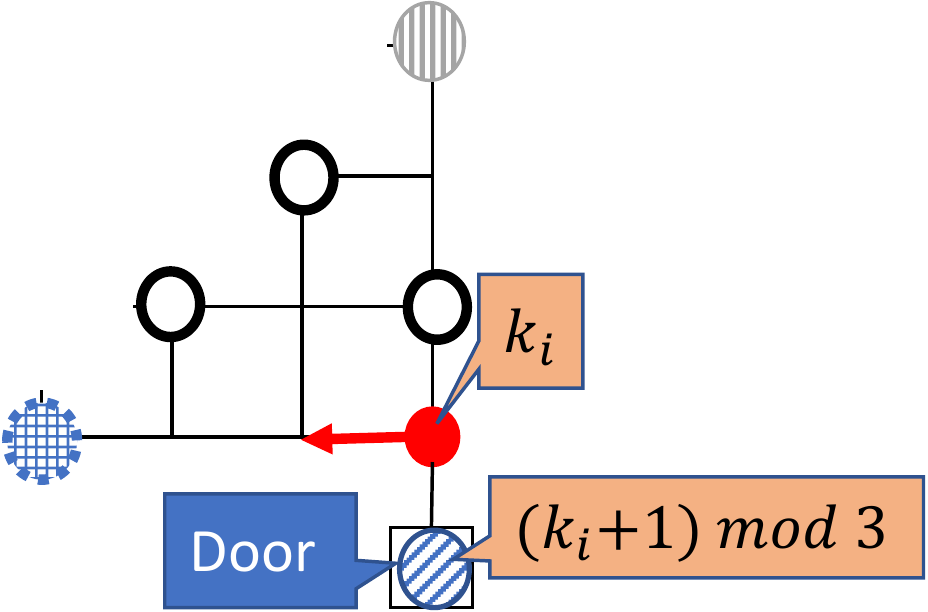}}\hspace{0.5cm}
\subfigure[{\sf ColP1B1}]{\includegraphics[height=2cm]{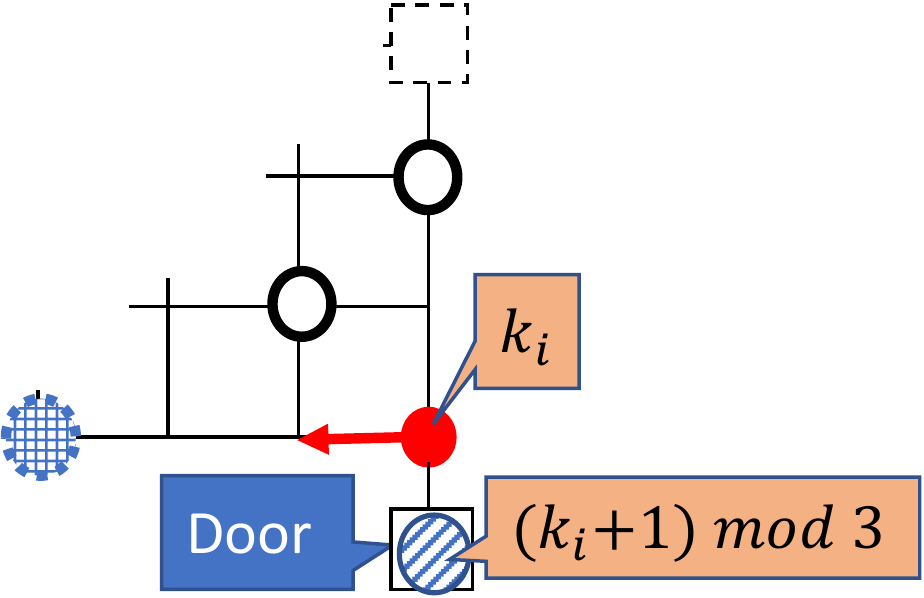}}\\
\subfigure[{\sf ColP1A0}]{\includegraphics[height=2.9cm]{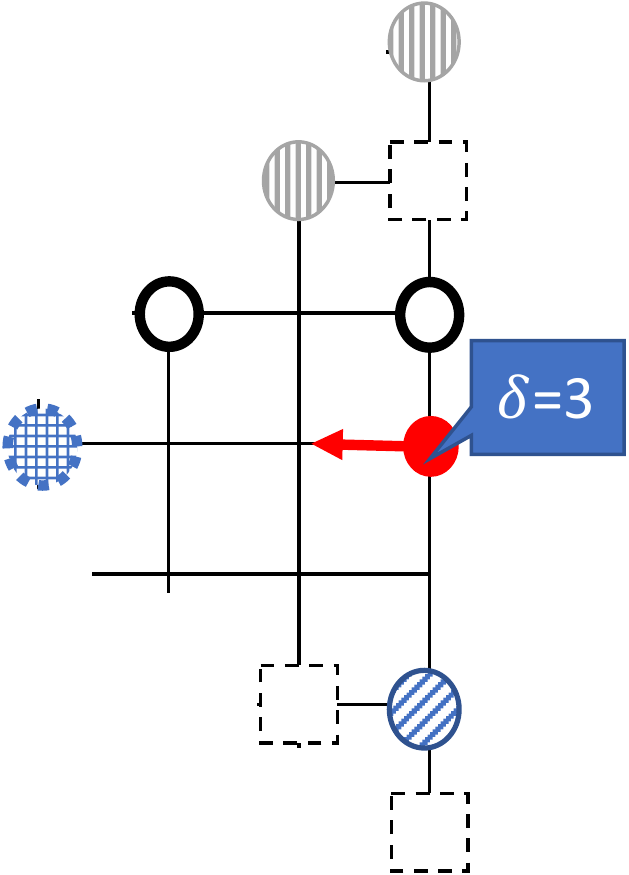}}\hspace{1cm}
\subfigure[{\sf ColP1B0}]{\includegraphics[height=2.9cm]{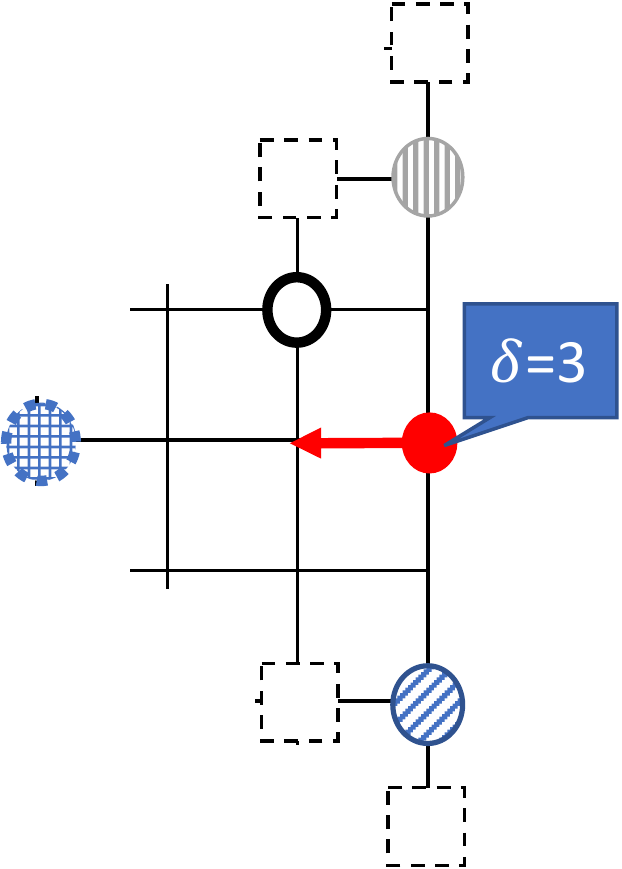}}
\caption{Definition of views in ${\cal C}^\prime$ while $p_1$.}
    \label{fig:p1C3}
\end{figure}
\begin{figure}[t]
\centering
    \subfigure[{\sf StartP10}]{\includegraphics[height=2.2cm]{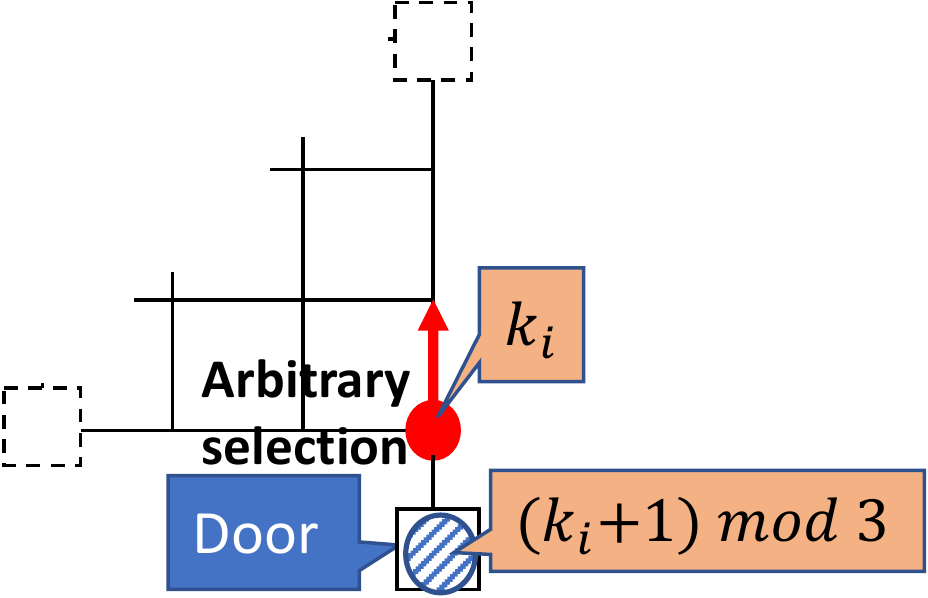}}\hspace{0.4cm}
    \subfigure[{\sf StartP11}]{\includegraphics[height=2.2cm]{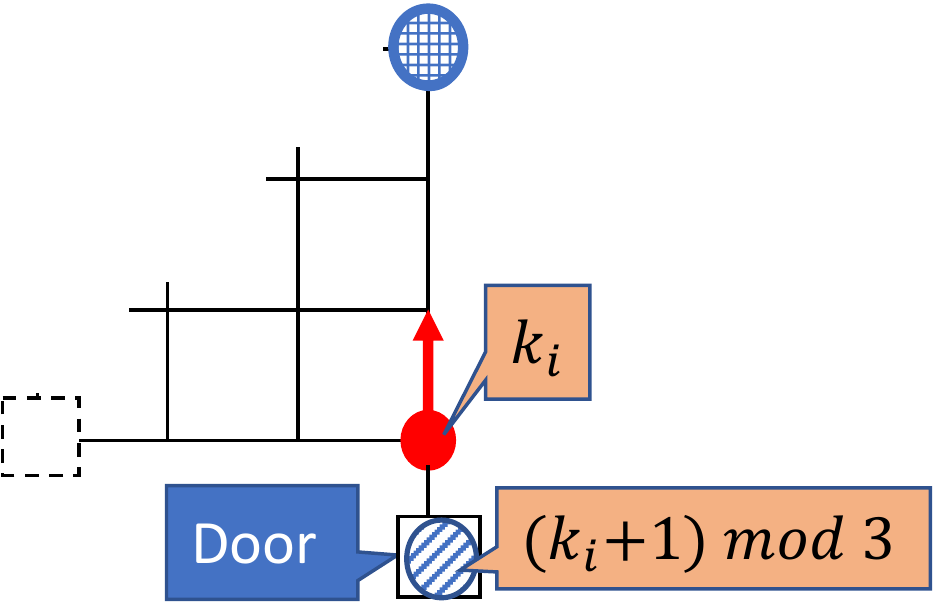}}\\
    \subfigure[{\sf MovP10}]{\includegraphics[height=2.9cm]{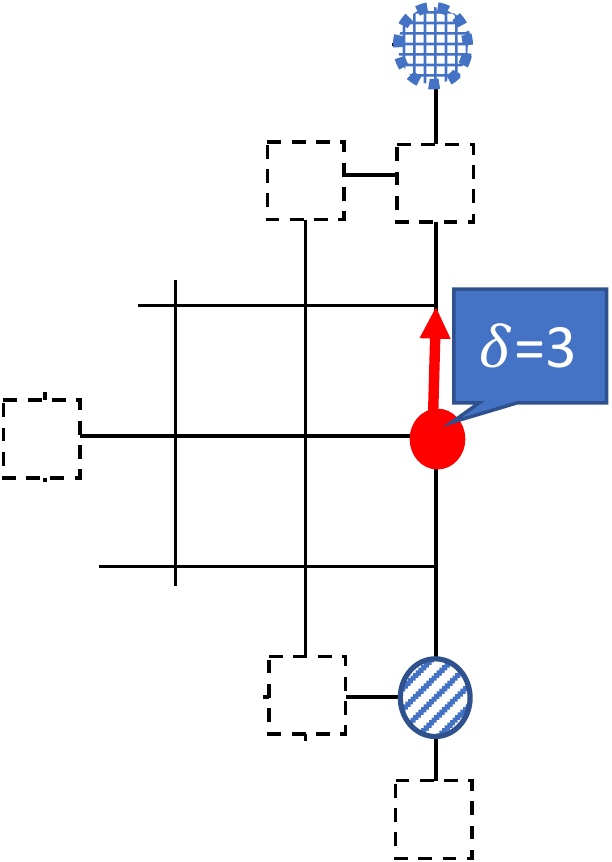}}\hspace{0.4cm}
    \subfigure[{\sf MovP11}]{\includegraphics[height=2.9cm]{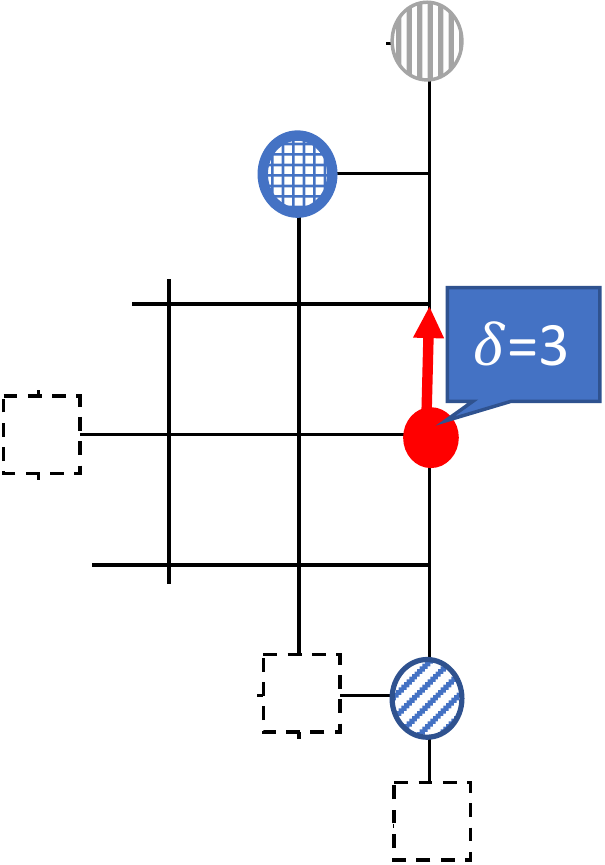}}\hspace{0.4cm}
    \subfigure[{\sf MovP12}]{\includegraphics[height=2.9cm]{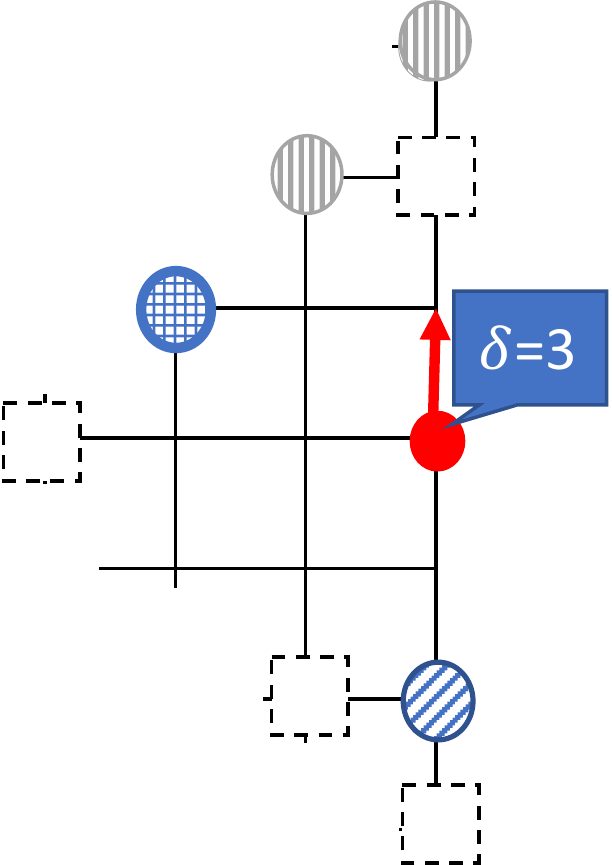}}\hspace{0.4cm}
    \subfigure[{\sf GoCo0}]{\includegraphics[height=2.9cm]{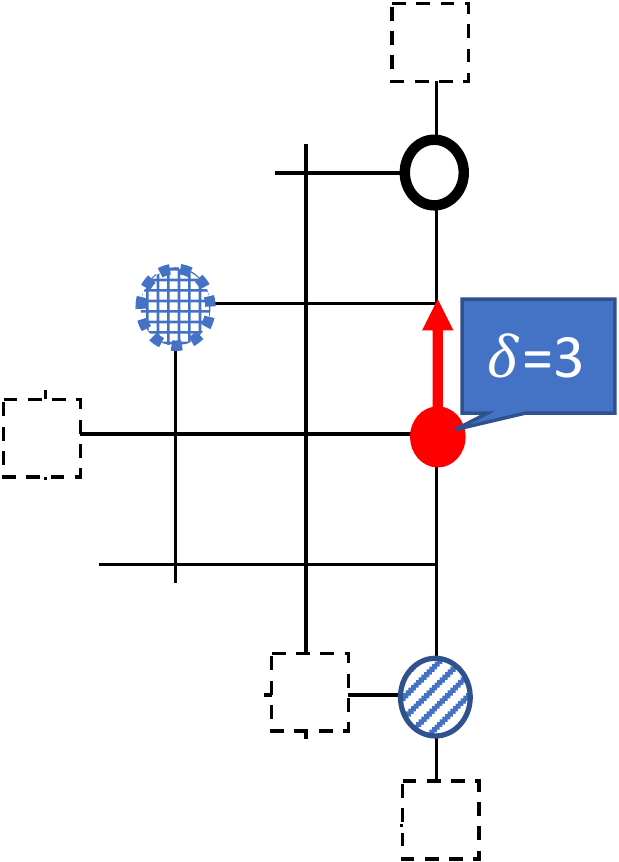}}\\
    \subfigure[{\sf MovP13}]{\includegraphics[height=2.1cm]{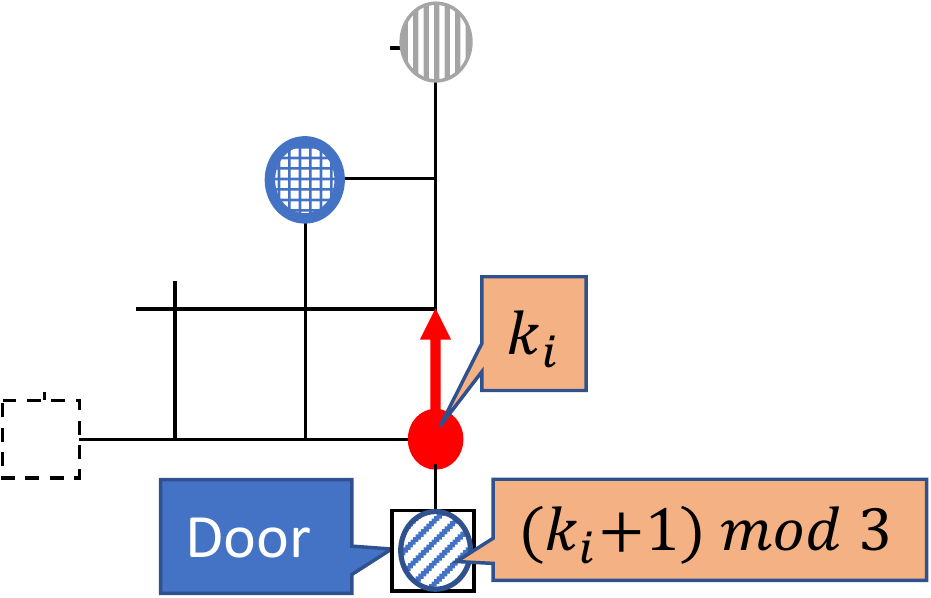}}\hspace{0.4cm}
    \subfigure[{\sf MovP14}]{\includegraphics[height=2.1cm]{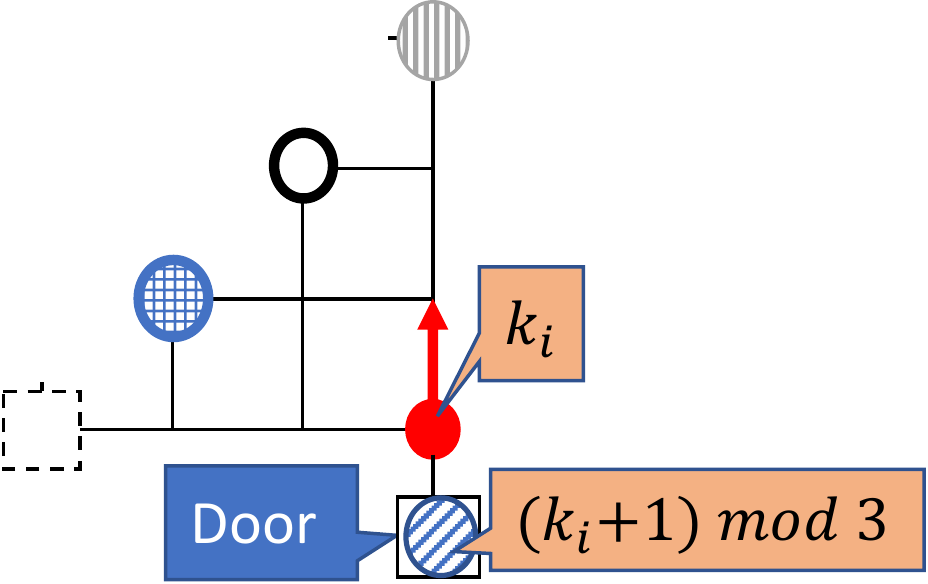}}\hspace{0.4cm}
    \subfigure[{\sf GoCo1}]{\includegraphics[height=2.1cm]{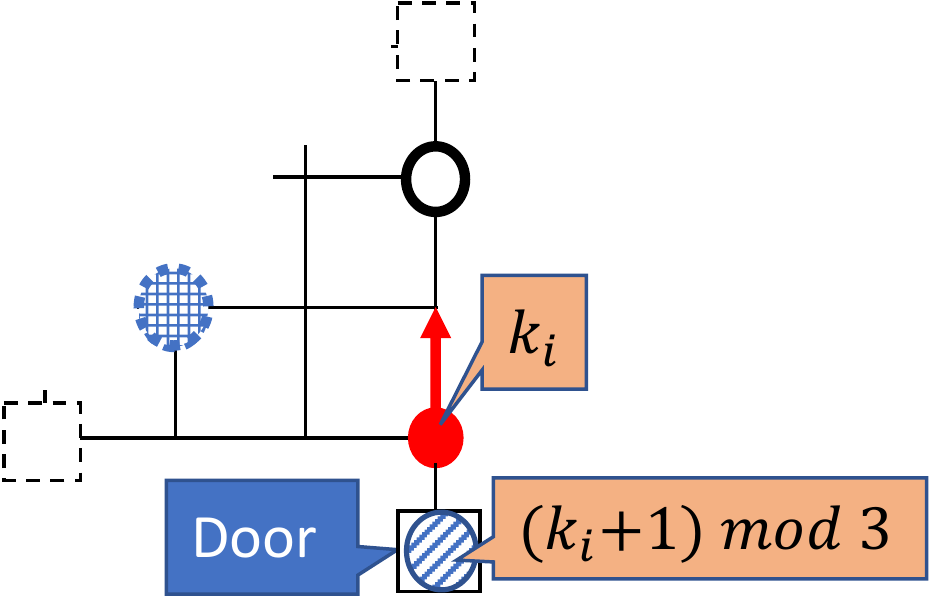}}
    \caption{Definition of views in ${\cal M}_1^\prime$ while $p_1$.}
    \label{fig:p1M3}
\end{figure}

\begin{figure}[t]
    \centering
    \subfigure[{\sf OnCP2}]{\includegraphics[height=1.5cm]{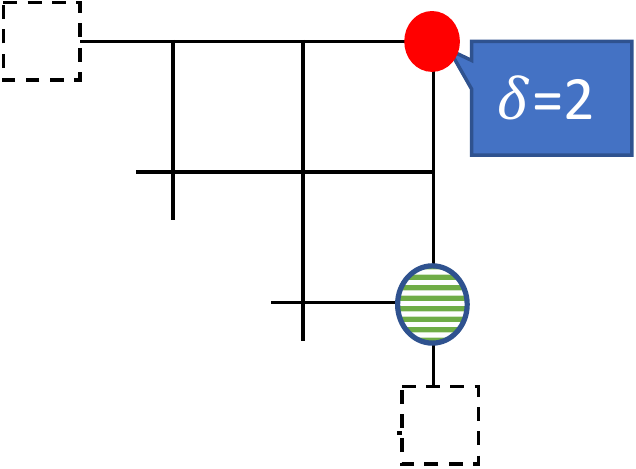}}\hspace{1cm}
    \subfigure[{\sf P2Stop}]{\includegraphics[height=2.9cm]{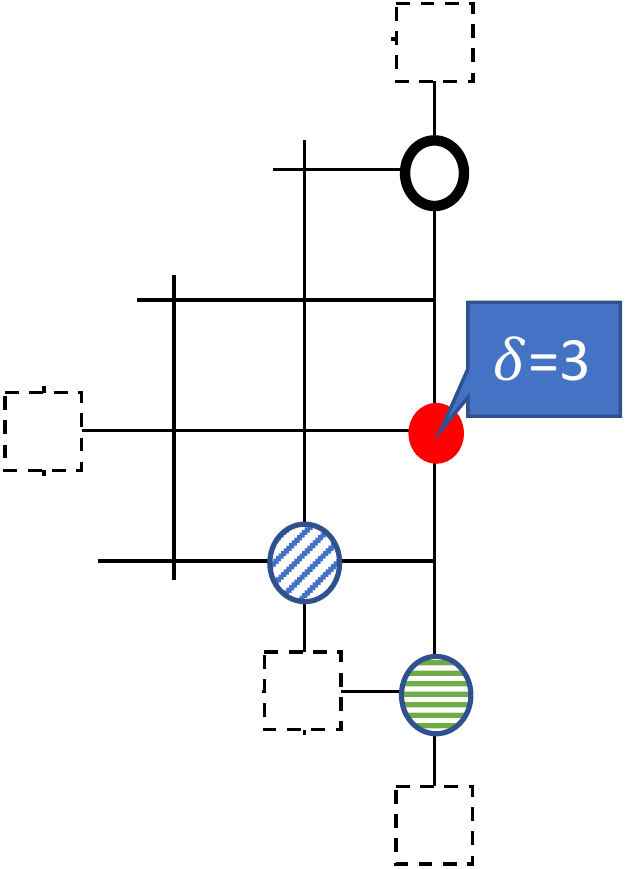}}\\
    \subfigure[{\sf P2StopA}]{\includegraphics[height=2.3cm]{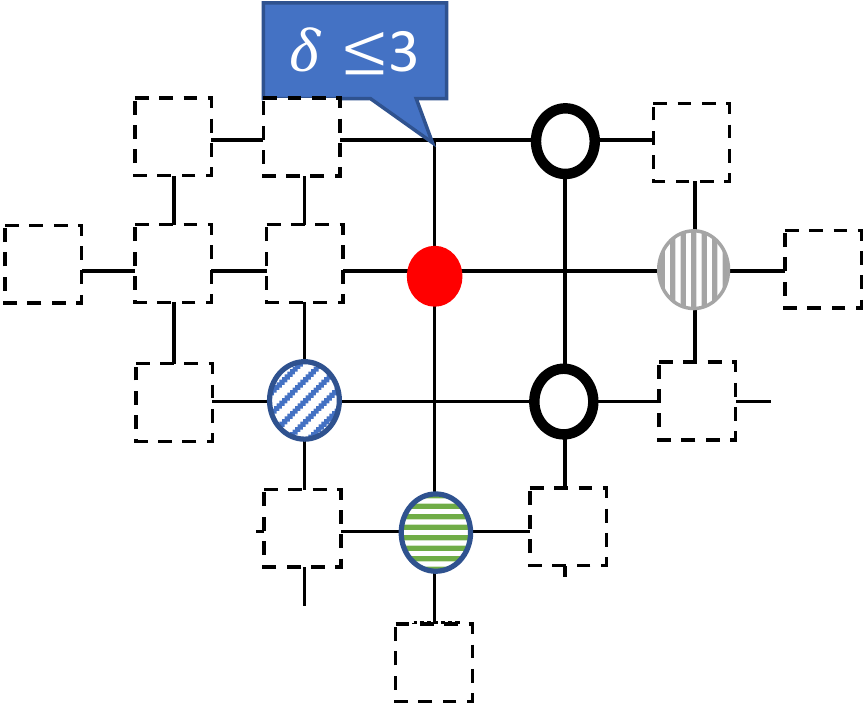}}\hspace{1cm}
    \subfigure[{\sf P2StopB}]{\includegraphics[height=2cm]{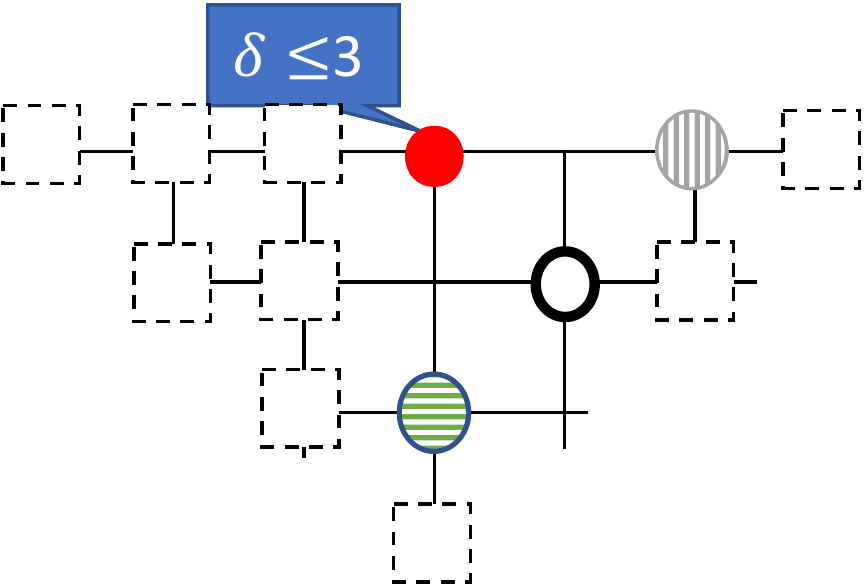}}\hspace{1cm}
    \subfigure[{\sf P2StopC}]{\includegraphics[height=2.9cm]{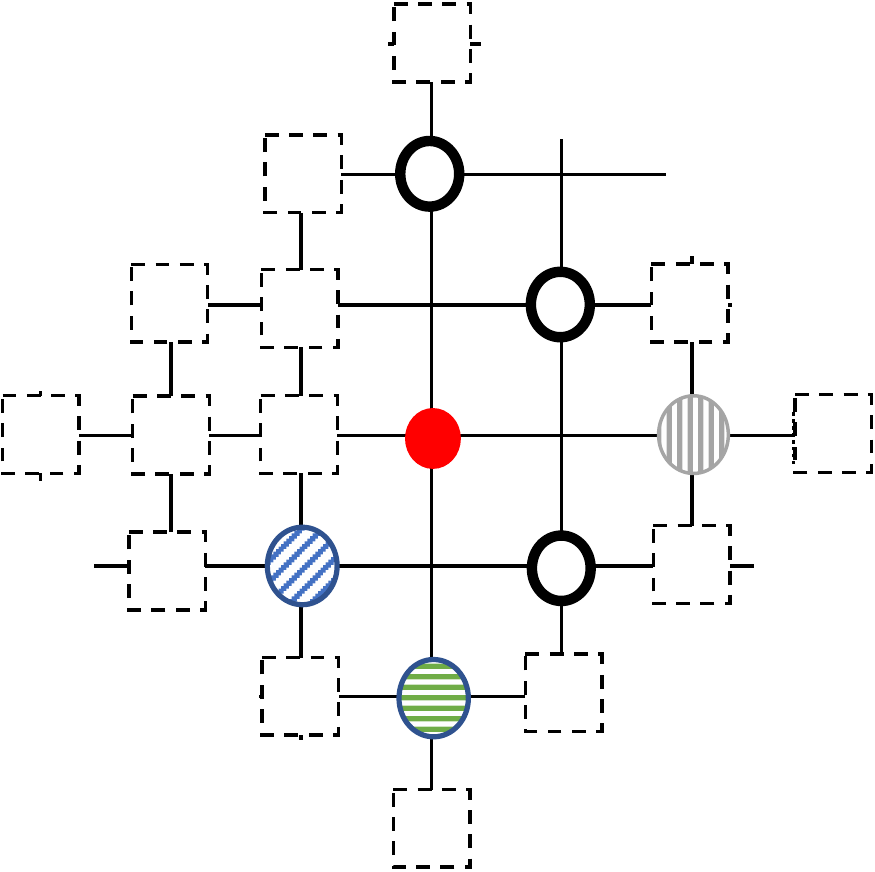}}
    \caption{Definition of views in ${\cal F}_2^\prime$ while $p_2$.}
    \label{fig:p2F3}
\end{figure}
\begin{figure}[t]\centering
    \subfigure[{\sf MovP2}]{\includegraphics[height=2.9cm]{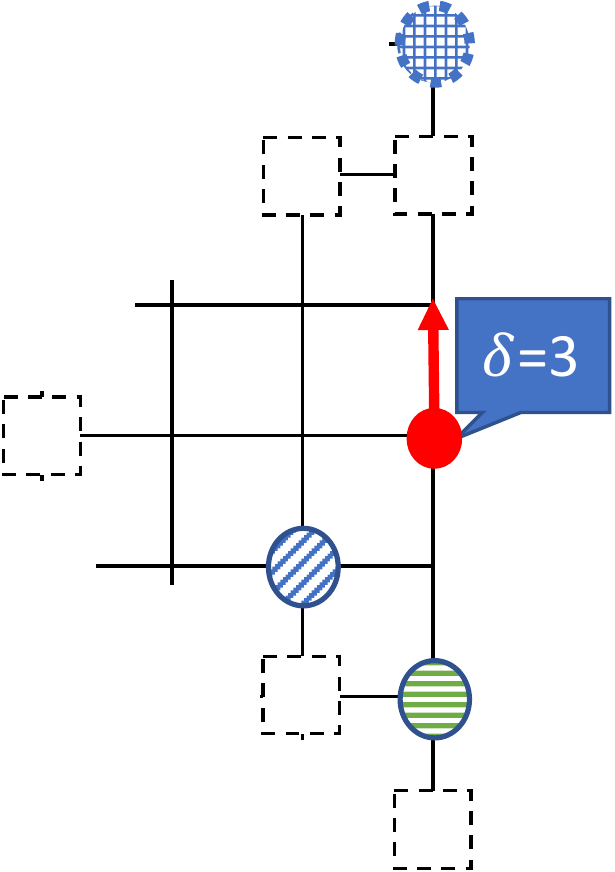}}\hspace{1cm}
    \subfigure[{\sf MovP2A}]{\includegraphics[height=2.9cm]{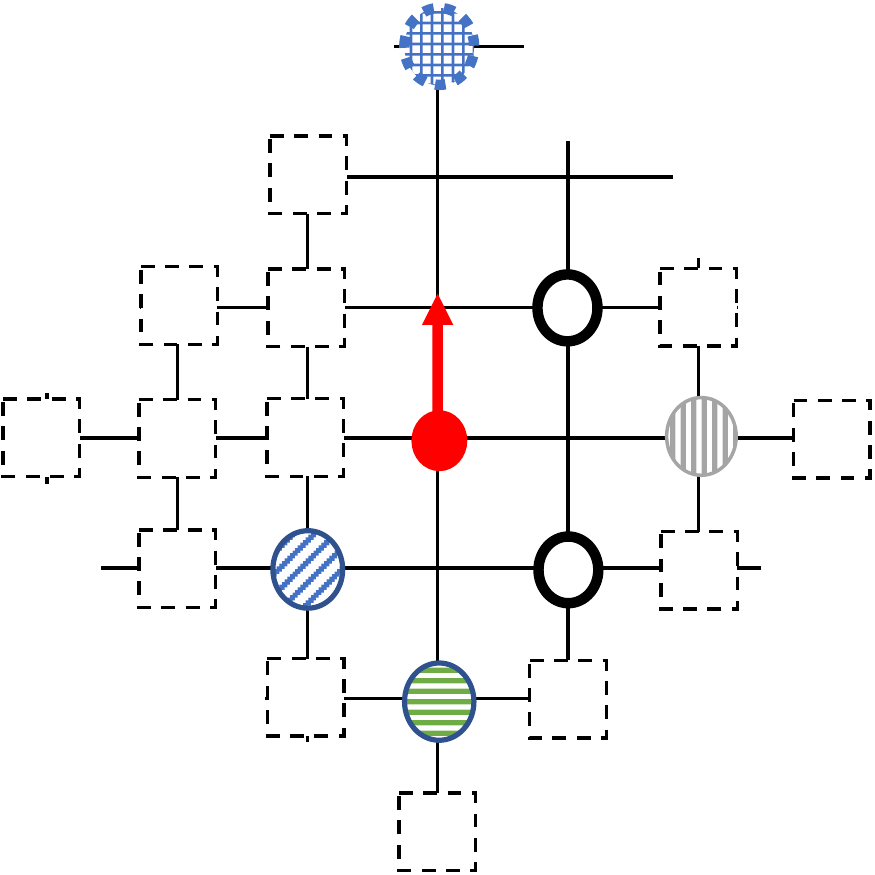}}\hspace{1cm}
    \subfigure[{\sf MovP2B}]{\includegraphics[height=2.9cm]{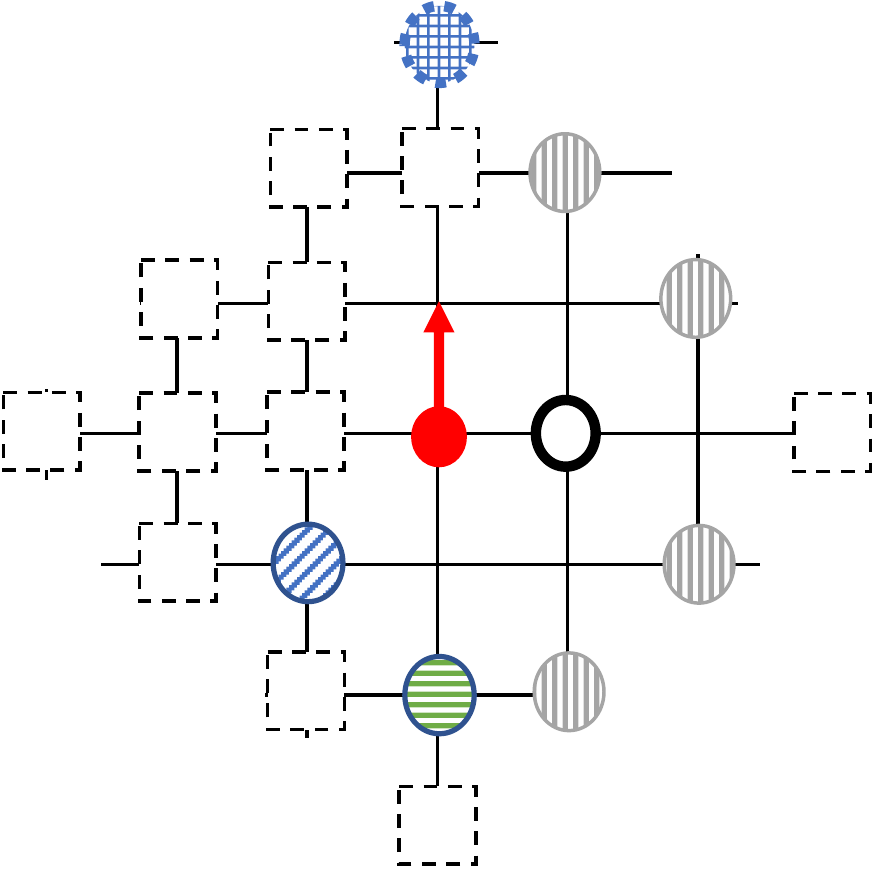}}
    \caption{Definition of views in ${\cal M}_2^\prime$ while $p_2$.}
    \label{fig:p2M3}
\end{figure}

\subsubsection{Proof of Correctness}

Without loss of generality, let $L$ be the size of the border chosen by the first robot on the Door corner in {\sf StartP10} in ${\cal M}_1^\prime$. 
Let $l$ be the other size of the border.
In the same way as Algorithm 1, we define ``the first border", ``the second border", and lines.

In the following, we first show that each robot can recognize its successor and robots cannot collide.

\begin{lemma}\label{predecessor}
Each non-Finished robot except the first robot can recognize its predecessor and successor if it keeps two neighboring non-Finished robots.
\end{lemma}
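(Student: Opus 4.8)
The plan is to prove the statement in three steps: (i) determine how the light values $k_i$ label the robots along the line, (ii) bound the visibility so that only the immediate predecessor and successor can appear as non-Finished neighbors, and (iii) separate the two by the cyclic comparison of $k$-values.

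First I would establish the \emph{labeling invariant}: numbering the robots $r_1, r_2, \ldots$ in order of entry, $k_i \equiv i-1 \pmod 3$ for all $i$. The base case holds since $r_1$ keeps $k_1=0$ by initialization. For the inductive step, when $r_i$ sits on the Door node its view is {\sf Door0} and it applies Rule~0-1, i.e., ${\it SetC}(r_i)$, which reads the color $p_1(k_{i-1})$ of its predecessor $r_{i-1}$ still waiting on the Door corner and sets $c(r_i):=p_1((k_{i-1}+1)\bmod 3)$. The wait conditions built into the Door-corner views ({\sf ColP1A1}, {\sf ColP1B1}, {\sf StartP10}, {\sf StartP11}, {\sf MovP13}, {\sf MovP14}, {\sf GoCo1}) guarantee that $r_{i-1}$ does not leave before $k_i$ is set, so the read is well defined; and since $k_i$ is never altered afterward, the invariant persists. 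Consequently the predecessor $r_{i-1}$ of any $r_i$ carries $k_{i-1}=(k_i-1)\bmod 3$ and the successor $r_{i+1}$ carries $k_{i+1}=(k_i+1)\bmod 3$, two values that are distinct because $2\not\equiv 0\pmod 3$.

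Next I would show that the only non-Finished robots that can lie in ${\it view}(r_i)$ are $r_{i-1}$ and $r_{i+1}$. By the routing strategy the robots advance in a single line along the borders and lines, with consecutive robots at grid distance at least two, and this line is straight except at the grid corners where it turns by a right angle. Hence the portion of the line between $r_i$ and $r_{i\pm 2}$ contains at most one such turn, so its Manhattan length --- which is the grid distance --- is at least $2+2=4>3=\phi$, and robots even farther along the line are farther still. Thus $r_{i\pm 2}$ and beyond are out of sight. This conclusion survives asynchrony: because $k$-values are immutable and the Finished flag is monotone, any color $r_i$ observes reports the true $k$-value of that neighbor even when the observation is outdated.

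Combining the two steps, if $r_i$ keeps two neighboring non-Finished robots then they are exactly $r_{i-1}$ and $r_{i+1}$, carrying the distinct values $(k_i-1)\bmod 3$ and $(k_i+1)\bmod 3$. Under the cyclic order fixed in the footnote --- in which $(k_i-1)\bmod 3$ is the value ``smaller'' than $k_i$ and $(k_i+1)\bmod 3$ the value ``larger'' than $k_i$ --- robot $r_i$ declares the smaller-valued neighbor its predecessor and the larger-valued one its successor, as required. I expect the main obstacle to be the distance bound of the second step: one must check that the snaking line never folds back tightly enough to bring $r_{i\pm 2}$ (which, by the labeling invariant, carry exactly the successor's and predecessor's $k$-values) within distance $\phi=3$, the delicate cases being the corner turns and the hand-off where the snake leaves one line and enters the next. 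Establishing that the non-Finished robots always occupy an L-shaped path with a single turn is what makes the ``at most one turn'' claim rigorous.
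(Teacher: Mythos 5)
Your steps (i) and (iii) are exactly the paper's proof: the paper establishes the labeling invariant (each robot's $k$-value equals its entry order modulo~3) via ${\it SetC}$ on the Door node together with the wait conditions at the Door corner ({\sf StartP10}, {\sf StartP11}, {\sf MovP13}, {\sf MovP14}, {\sf GoCo1}, {\sf ColP1A1}, {\sf ColP1B1}), and then concludes by the cyclic comparison of $k$-values from the footnote. Notably, the paper does \emph{not} attempt anything like your step (ii) inside this lemma: the statement is deliberately conditional (``if it keeps two neighboring non-Finished robots''), so the fact that the kept neighbors are precisely the predecessor and the successor is part of the hypothesis here, and is established only afterwards, in Lemmas~\ref{3collide1} and~\ref{3collide2}, by a case analysis of the view types rather than by a distance bound.

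The genuine problem is that your step (ii) is false, and the place where it fails is not one of the two cases you flag (corner turns, line hand-off) but the Door region. Your bound rests on consecutive robots being at grid distance at least two, but the Door node is \emph{adjacent} to the Door corner, so the spacing there is one. Concretely, the algorithm routinely produces the configuration: $r_i$ on the first border at distance two from the Door corner, its successor $r_{i+1}$ on the Door corner waiting for its own successor to set its color, and $r_{i+2}$ already placed on the Door node. Then the distance from $r_i$ to $r_{i+2}$ is $2+1=3=\phi$, so $r_{i+2}$ \emph{is} inside ${\it view}(r_i)$, and it carries $k_{i+2}=(k_i+2)\bmod 3=(k_i-1)\bmod 3$ --- exactly the predecessor's value. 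Hence $k$-values alone cannot disambiguate near the Door (if $r_{i-1}$ is already Finished, the cyclic rule would even declare $r_{i+2}$, a robot \emph{behind} $r_i$, to be its predecessor); what actually disambiguates there is the geometry of the Door corner and Door node encoded in the dedicated view types. So step (ii) cannot be repaired by a sharper ``one-turn'' distance argument; the sound structure is the paper's, namely prove only (i) and (iii) under the lemma's hypothesis, and defer the question of which robots can appear in a view to the later view-by-view lemmas.
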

\begin{proof}
By the assumption, each robot $r_i$ initializes its $c(r_i)$ to $p_1(0)$.
The view of the first robot $r_1$ on the Door node is {\sf Door1} in ${\cal D}^\prime$, thus it moves to the Door corner with $c(r_1)=p_1(0)$ by Rule 0-2.
After that, by the assumption, its successor $r_2$ appears on the Door node with $c(r_2)=p_1(0)$.
Thus, $r_1$ cannot move until its view becomes {\sf StartP10} in ${\cal M}_1^\prime$, i.e., $r_2$ has $c(r_2)=p_1((k_1+1)\mod 3))=p_1((0+1)\mod 3)=p_1(1)$.
Thus, ${\it view}(r_2)$ on the Door node becomes {\sf Door0} and $r_2$ sets $c(r_2)$ to $p_1(1)$ by Rule 0-1.
Then, ${\it view}(r_1)$ becomes {\sf StartP10} in ${\cal M}_1^\prime$, $r_1$ selects the first border arbitrarily and moves on the border by Rule 3.

Consider a robot $r_i$ sets $c(r_i)$ to $p_1(k_i)$ on the Door node in {\sf Door0}.
Then, its predecessor $r_h$ on the Door corner has $c(k_h)$ where $k_i=(k_h+1)\mod 3$ by the definition of ${\it SetC}(r_i)$.
After $r_i$ enters the grid, on the Door corner, it cannot move until its successor $r_j$ sets $c(r_j)$ to $p_1(k_j)$ where $k_j=(k_i+1)\mod 3$ by the definitions of {\sf StartP11}, {\sf MovP13}, {\sf MovP14}, {\sf GoCo1} (in ${\cal M}_1^\prime$), {\sf ColP1A1} or {\sf ColP1B1} (in ${\cal C}^\prime$).
Therefore, each robot $r_i$ on the grid has its order modulo 3 as its value $k_i$, and the value $k_i$ is not changed after that.
Thus, if each robot keeps two neighboring non-Finished robots, it can recognize its neighboring non-Finished robot with smaller (resp. larger) $k$ value than its own as its predecessor (resp. successor), lemma holds.\qed
\end{proof}

\begin{lemma}\label{3collide1}
While the light color is $p_1(k_i)$, each robot $r_i$ 
can recognize its successor, and cannot collide.
\end{lemma}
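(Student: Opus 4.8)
The plan is to adapt the first-border portion of the proof of Lemma~\ref{collide} to the label-free setting of Algorithm 2, recovering the direction of travel and the identity of the successor from the order values $k_i$ (through Lemma~\ref{predecessor}) together with the enlarged visibility $\phi=3$ and the distance-three gap maintained toward the predecessor.

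First I would fix the geometry while $c(r_i)=p_1(k_i)$: such a robot sits on the Door corner or on the first border, and I claim it can only advance toward the first corner. Inspecting the view types of ${\cal M}_1^\prime$ ({\sf StartP10}, {\sf StartP11}, {\sf MovP10}--{\sf MovP12}, {\sf GoCo0}, {\sf MovP13}, {\sf MovP14}, {\sf GoCo1}), every move triggered by Rule 3 (and the entry via Rule 0-2) points $r_i$ in the direction in which its predecessor---or, once that predecessor is Finished, the Finished robots near the corner---lies within distance three; there is no rule that sends $r_i$ backward. Since the first robot fixes this direction in {\sf StartP10} and every later robot follows the predecessor it sees, the first border is one-way and no robot can overtake another.

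Second, for successor recognition I would invoke Lemma~\ref{predecessor}: after $r_i$ has set $k_i$ on the Door node, its successor is exactly the neighboring non-Finished robot of order $(k_i+1)\bmod 3$ (for the first robot, which has no predecessor, this is simply its unique non-Finished neighbor, of order $1$). It then remains to check that the successor never leaves $r_i$'s field of view. Because $r_i$ advances one node at a time and its successor trails it at the prescribed gap, the successor stays within two or three hops of $r_i$ throughout the phase; as $\phi=3$, it is always in ${\it view}(r_i)$, so the identification of Lemma~\ref{predecessor} applies continuously.

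Third, for collision freedom I would run the outdated-view argument. A robot about to move holds a view in ${\cal D}^\prime$, ${\cal C}^\prime$, or ${\cal M}_1^\prime$. On the one-way first border the only neighbor that can move toward $r_i$ is its successor (the predecessor always recedes), and the gap conditions forbid the successor from closing onto $r_i$ before $r_i$ has itself advanced; hence, even when $r_i$'s view is outdated, the real configuration still realizes a moving view type with the same forward-empty destination, so $r_i$'s computed target is still free. I expect the main obstacle to be the case analysis certifying this view type by view type, and in particular the Door-corner boundary, where the successor sits on the Door node and has not yet fixed its $k$ value, and the {\sf GoCo0}/{\sf GoCo1} transitions, where Finished robots first enter the view: there I must argue that the distance-three-from-predecessor invariant survives the asynchronous interleaving, so that a trailing successor is never mistaken for a leading predecessor and the stop-adjacent-to-Finished behavior prevents $r_i$ from stepping onto an occupied corner node.
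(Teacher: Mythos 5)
Your proposal is correct and follows essentially the same route as the paper's proof: one-way movement on the first border derived from the ${\cal M}_1^\prime$ view definitions and the distance-three/distance-two gap invariants, successor recognition via Lemma~\ref{predecessor} combined with $\phi=3$ keeping the trailing successor in view, and an outdated-view argument for collision freedom, with the Door-corner bootstrapping (successor setting its $k$ value before the predecessor may leave) singled out exactly where the paper spends its detailed case analysis. The only minor difference is that the paper proves the stronger claim that a robot whose view is in ${\cal D}^\prime$, ${\cal C}^\prime$, or ${\cal M}_1^\prime$ can never become outdated (no neighbor in its view can move and no robot can enter its view), whereas you allow outdatedness but argue the destination stays free, which is the style of argument the paper uses for Algorithm~1.
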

\begin{proof}
If there exists an outdated robot $r_o$ that is to move according to its outdated view, the view type is in ${\cal D}^\prime$, ${\cal C}^\prime$, ${\cal M}_1^\prime$, and ${\cal M}_2^\prime$ by the definition of the algorithm.
Thus, if a collision with $r_o$ occurs, then $r_o$'s view type is in ${\cal D}^\prime$, ${\cal C}^\prime$, ${\cal M}_1^\prime$, and ${\cal M}_2^\prime$.
In that case, because a Finished robot does not move forever, it could be that a non-Finished robot in $r_o$'s view moved, 
or that another non-Finished robot came into the visible region of $r_o$.

The first robot $r_1$ keeps its color $c(r_1)=p_1(0)$.
On the Door node, the view of $r_1$ becomes {\sf Door1} in ${\cal D}^\prime$, and $r_1$ moves to the Door corner.
After that, $r_1$ can move only when its view becomes {\sf StartP10} in ${\cal M}_1^\prime$, i.e., $c(r_2)$ has to be set to $p_1(1)$ by ${\it SetC}(r_2)$, where $r_2$ is $r_1$'s successor.
Thus, the view of $r_2$ on the Door node becomes {\sf Door0}, and eventually $r_2$ sets its color to $p_1(1)$.
Then, $r_1$ selects one border as the first border arbitrarily in {\sf StartP10} and moves.
Because the distance from $r_2$ becomes two, ${\it view}(r_1)$ becomes {\sf MovP10} and $r_1$ moves one hop.
Then, because there is no rule to move for $r_1$ when the distance from $r_2$ is three, $r_1$ cannot move.
Thus, ${\it view}(r_2)$ becomes {\sf Door1} in ${\cal D}^\prime$ and $r_2$ enters the grid.
After that, $r_1$ can move because the distance from $r_2$ is two, i.e., {\sf MovP10} in ${\cal M}_1^\prime$ by Rule 3.
Thus, by Rule 3, $r_2$ can move from the Door corner only when ${\it view}(r_2)$ becomes {\sf StartP11} in ${\cal M}_1^\prime$.
That is, when $r_2$ can move, the distance between $r_1$ and $r_2$ is three, and $r_2$'s successor $r_3$ has $c(r_3)=p_1(2)$ by the definition of {\sf StartP11}.
By the definition of ${\cal M}_1^\prime$, they move only on the first border according to the degree of nodes until they arrive at the end of the first border.
While they move on the first border, only $r_1$ and $r_3$ are neighbors for $r_2$, and $r_3$'s successors also follow $r_3$ in the same way as $r_2$.
When the view of robots become {\sf OnCP1} in ${\cal C}^\prime$ (i.e., they arrive at the end of the first border), they change their colors from $p_1(k_i)$ to $p_2(k_i)$ in the same order as they entered the grid.

By the same argument, each robot $r_i$ moves only on the first border using the degree of nodes while $c(r_i)=p_1(k_i)$ holds, by the definition of the views in ${\cal M}_1^\prime$.
Then, on the first border, if $r_i$ is not on the Door node or the Door corner, $r_i$ can move only when the distance from its predecessor $r_h$ is three and from its successor $r_j$ is two. 
That is, while $r_i$ moves on the first border, $r_j$ follows $r_i$.
Then, by the definition of the views in ${\cal M}_1^\prime$, while $r_i$ moves on the first border, there are at most two non-Finished neighboring robots $r_h$ and $r_j$ for $r_i$ and they are kept by $r_i$'s movement, i.e., robots move on the first border keeping in the order they entered the grid.
By the definition of the algorithm, only when $r_h$ becomes Finished two hops away by ${\cal F}_1^\prime$ or ${\cal F}_2^\prime$,
the number of non-Finished neighboring robots for $r_i$ becomes one, but $r_i$ keeps $r_j$ with $k_i<k_j$ in its view and recognizes $r_j$ as its successor.
By Lemma~\ref{predecessor}, on the first border, each non-Finished robot can always recognize its successor, that is, each robot can recognize its direction.
Therefore, the first border is one-way.
Thus, while $c(r_i)=p_1(k_i)$ holds, if the view is in ${\cal D}^\prime$, ${\cal C}^\prime$, or ${\cal M}_1^\prime$, $r_i$ cannot become outdated as any non-Finished robot cannot come into $r_i$'s visible region, and any non-Finished robots in $r_i$'s view cannot move.
That is, each robot cannot collide with other robots.

For each robot $r_i$, when its view becomes in ${\cal F}_1^\prime$ on the first border, $r_i$ changes its color to $F$ by Rule 1.
When its view belongs to ${\cal C}^\prime$ on the first border, $r_i$ changes its color from $p_1(k_i)$ to $p_2(k_i)$ and changes its direction to a line by Rule 2.
Thus lemma holds.\qed
\end{proof}

\begin{lemma}\label{3collide2}
While the light color is $p_2(k_i)$, each robot $r_i$ can recognize its successor, and cannot collide.
\end{lemma}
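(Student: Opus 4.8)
The plan is to reproduce the structure of the proof of Lemma~\ref{3collide1}, transposed from the first border (traversed while $c(r_i)=p_1(k_i)$) to an arbitrary line (traversed while $c(r_i)=p_2(k_i)$). As in Lemma~\ref{3collide1}, I would first observe that a robot with color $p_2(k_i)$ is enabled only by Rules~4 and~5, so its view necessarily lies in ${\cal F}_2^\prime$ or ${\cal M}_2^\prime$; since the ${\cal F}_2^\prime$ rule (Rule~4) merely recolors to $F$ and does not move, any outdated move performed while in the $p_2$ phase must stem from a view in ${\cal M}_2^\prime$. The recognition of the successor is inherited directly from Lemma~\ref{predecessor}: the value $k_i$ is frozen once set on the Door node and is carried unchanged inside $p_2(k_i)$, so comparing $k$-values still identifies the unique neighboring non-Finished robot of larger order as the successor.

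Next I would trace the trajectory of $r_i$ once it has turned onto a line. A robot reaches a line either from the end of the first border (view {\sf OnCP1} in ${\cal C}^\prime$, turning onto the second border, i.e.\ the $0$-line) or from one of the collect views {\sf ColP1A0}, {\sf ColP1A1}, {\sf ColP1B0}, {\sf ColP1B1} in ${\cal C}^\prime$ (turning onto an $m$-line with $m>0$). In either case, by the definition of the views in ${\cal M}_2^\prime$ ({\sf MovP2}, {\sf MovP2A}, {\sf MovP2B}), $r_i$ advances along the line using the node degrees and the Finished robots already placed on the adjacent $(m-1)$-line as a guide, and no rule lets $r_i$ leave the line; hence the line is one-way. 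As on the first border, I would then establish the invariant that, as long as $r_i$ is not at a line endpoint, $r_i$ is enabled only when its non-Finished predecessor is exactly three hops away (or has already become Finished, two hops away, by ${\cal F}_1^\prime$ or ${\cal F}_2^\prime$) and its successor is exactly two hops away. This mutual distance condition forces the robots on the line to advance in the same alternating wave used on the first border, so that the pairwise distance between consecutive non-Finished robots stays between two and three and the order in which they entered the grid is preserved.

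Finally, I would conclude exactly as in Lemma~\ref{3collide1}: since a Finished robot never moves, and among $r_i$'s non-Finished neighbors only its successor is ever allowed to move toward $r_i$ (and only after $r_i$ has itself stepped away to distance three), no new non-Finished robot can enter $r_i$'s visibility region and no robot already in its view can approach it to distance one. Therefore an outdated view in ${\cal C}^\prime$ or ${\cal M}_2^\prime$ corresponds to the same view type as the actual configuration, $r_i$'s pending move remains safe, and no collision occurs. The main obstacle I anticipate is the turning step together with the use of the neighboring Finished line as a reference: unlike the first border, the $m$-line traversal relies on the alternation of {\sf MovP2A} and {\sf MovP2B} induced by the checkers pattern on the $(m-1)$-line, so the delicate part is to verify that the distance-three-to-predecessor / distance-two-to-successor invariant is indeed maintained across this alternation, at the instant the robot turns off the first border onto the line, and at the line endpoints where the view switches into ${\cal F}_2^\prime$ (including the $(L-1)$-line adjacent to the Door corner). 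Handling these boundary cases carefully under $\phi=3$ visibility, using the waffle-circle predecessor markers, is where the bulk of the case analysis will lie.
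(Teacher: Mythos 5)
Your proposal follows essentially the same route as the paper's proof: you start from the moment the robot turns off the first border (view in ${\cal C}^\prime$, split into the {\sf OnCP1} case for the $0$-line and the {\sf ColP1A0}/{\sf ColP1A1}/{\sf ColP1B0}/{\sf ColP1B1} cases for $m$-lines), use the distance-two-to-successor / distance-three-to-predecessor enabling conditions of ${\cal M}_2^\prime$ to show each line is one-way with entry order and visibility of the successor preserved, inherit successor recognition from Lemma~\ref{predecessor} via the frozen $k$-values, and conclude that no non-Finished robot can enter $r_i$'s view or move toward it, so outdated views are of the same type as the current configuration and no collision occurs. The boundary issues you flag (the turning step, the {\sf MovP2A}/{\sf MovP2B} alternation along the Finished $(m-1)$-line, and the endpoints where the view enters ${\cal F}_2^\prime$) are precisely the cases the paper's proof disposes of with this same case analysis, so your plan is correct and matches it.
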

\begin{proof}
Consider the time $t$ when each robot $r_i$ changes its color to $p_2(k_i)$ on the first border.
Then, its view is in ${\cal C}^\prime$ by Rule 2, and $r_i$ moves to a line. 
By the proof of Lemma~\ref{3collide1} and the definition of the views in ${\cal C}^\prime$, $r_i$'s successor $r_j$ is two hops behind at $t$. 
After that, by the definition of views in ${\cal M}_2^\prime$, $r_i$ can move only when the distance from $r_j$ is two and the distance from its non-Finished predecessor $r_h$ (if exists) is three.
Thus, after $r_i$ moves by the view in ${\cal M}_2^\prime$, $r_i$ cannot move unless $r_j$ moves. 
\begin{itemize}
\item If ${\it view}(r_i)$ is {\sf OnCP1} at $t$, $r_i$ moves to the $0$-line (i.e., the second border) and $r_j$ also follows $r_i$. 
After that, ${\it view}(r_i)$ becomes {\sf MovP2} in ${\cal M}_2^\prime$ until $r_i$ arrives at the diagonal corner (i.e., {\sf OnCP2} in ${\cal F}_2^\prime$) or $r_h$ becomes Finished on $0$-line (i.e., {\sf P2Stop} in ${\cal F}_2^\prime$). 
By the definition of {\sf MovP2} in ${\cal M}_2^\prime$, $r_i$ moves on the second border according to the degree of nodes.
By the definition of the algorithm, there is no rule to make $r_i$ stray from the second border.
Then, by the definition of {\sf MovP2} in ${\cal M}_2^\prime$, $r_i$ keeps the distance from $r_j$ two or three hops and has at most two non-Finished neighboring robots, while $r_i$ moves on the second border. 
By this distance, these non-Finished neighboring robots are kept by the movement. 
Because robots on the second border keep the same order as when they entered the grid, only when $r_h$ becomes Finished by ${\cal F}_2^\prime$ (i.e., {\sf P2Stop} or {\sf OnCP2}), or $r_i$ is the first robot, the number of non-Finished neighboring robots for $r_i$ becomes one. 
Then, $r_i$ can recognize $r_j$ as its successor, because $r_j$ is always in ${\it view}(r_i)$ and $k_i<k_j$ holds.
Thus, by Lemma~\ref{predecessor}, $r_i$ can always recognize $r_j$ as its successor, and the second border is one way. 

\item If ${\it view}(r_i)$ is {\sf ColP1A0} (resp. {\sf ColP1A1}) at $t$, $r_i$ moves on a line except $0$-line and $(L-1)$-line (resp. $(L-1)$-line) and $r_j$ also follows $r_i$. 
Without loss of generality, let the line be $m$-line where $m>0$.  
Then, by the definition of {\sf ColP1A0} (resp. {\sf ColP1A1}), robots on $(m-1)$-line are Finished and $(m+1)$-line is empty (if it exists on the grid). 
Thus, after that, because $r_j$ follows $r_i$, ${\it view}(r_i)$ becomes {\sf MovP2A} or {\sf MovP2B} in ${\cal M}_2^\prime$ until $r_i$ arrives at the end of the line (i.e., {\sf P2StopA} or {\sf P2StopB} in ${\cal F}_2^\prime$) or $r_h$ becomes Finished on $m$-line (i.e., {\sf P2StopC} in ${\cal F}_2^\prime$).
By the definition of the algorithm, there is no rule to make $r_i$ stray from $m$-line.
By the definitions of {\sf MovP2A} and {\sf MovP2B} in ${\cal M}_2^\prime$, $r_i$ keeps the distance from $r_j$ two or three hops, and has at most two non-Finished neighboring robots, while $r_i$ moves on $m$-line.
By this distance, these non-Finished neighboring robots are kept by the movement. 
Because robots on $m$-line keep the same order as when they entered the grid, only when $r_h$ becomes Finished on $m$-line by ${\cal F}_2^\prime$ (i.e., {\sf P2StopA}, {\sf P2StopB} or {\sf P2StopC}), or $r_i$ is the first robot for $m$-line (i.e., $r_h$ is Finished on the intersection of the first border and $(m-1)$-line in {\sf ColP1A0} (resp. {\sf ColP1A1})), the number of non-Finished neighboring robots for $r_i$ becomes one. Then, $r_i$ also recognizes $r_j$ as its successor, because $r_j$ is always in ${\it view}(r_i)$ and $k_i<k_j$ holds.
Thus, by Lemma~\ref{predecessor}, $r_i$ can always recognize $r_j$ as its successor, and $m$-line is one way.

\item If ${\it view}(r_i)$ is {\sf ColP1B0} (resp. {\sf ColP1B1}) at $t$, $r_i$ moves on a line except $0$-line and $(L-1)$-line (resp. $(L-1)$-line) and $r_j$ also follows $r_i$. 
After that, ${\it view}(r_i)$ becomes {\sf MovP2B} or {\sf MovP2A} in ${\cal M}_2^\prime$ until $r_i$ arrives at the end of the line (i.e., {\sf P2StopA} or {\sf P2StopB} in ${\cal F}_2^\prime$) or $r_h$ becomes Finished on the same line (i.e., {\sf P2StopC} in ${\cal F}_2^\prime$).
By the same discussion as above, $r_i$ can always recognize $r_j$ as its successor, and the line is one way.
\end{itemize}
Therefore, in any case, while $r_i$ has $p_2(k_i)$, 
if the view is in ${\cal C}^\prime$ or ${\cal M}_2^\prime$, then $r_i$ cannot become outdated as any non-Finished robots cannot come into $r_i$'s visible region, and any non-Finished robots in $r_i$'s view cannot move.
Thus, $r_i$ cannot collide with other robots while $r_i$ has $p_2(k_i)$, and the lemma holds. \qed
\end{proof}

\begin{lemma}\label{predecessor2}
Each non-Finished robot can recognize its successor. 
\end{lemma}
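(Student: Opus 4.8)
The plan is to obtain this unconditional statement by consolidating the two color-specific recognition results already established, namely Lemma~\ref{3collide1} and Lemma~\ref{3collide2}. First I would observe that a non-Finished robot is, by definition, one whose light is not $F$; hence its color is necessarily of the form $p_1(k_i)$ or $p_2(k_i)$ for some $k_i\in\{0,1,2\}$. This dichotomy on the color is exhaustive over all non-Finished robots, so a simple case analysis on the current light color is enough to conclude.

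In the case $c(r_i)=p_1(k_i)$, Lemma~\ref{3collide1} already asserts that $r_i$ can recognize its successor, and in the case $c(r_i)=p_2(k_i)$, Lemma~\ref{3collide2} asserts exactly the same. Since every non-Finished robot falls into precisely one of these two cases, and since a robot that switches from $p_1(k_i)$ to $p_2(k_i)$ by Rule~2 is covered by the first case before the switch and by the second case afterward, successor recognition holds throughout the entire non-Finished lifetime of the robot. In particular, the recognition criterion used in both lemmas is uniform: the neighboring non-Finished robot $r_j$ whose order satisfies $k_i<k_j$ is identified as the successor, so the identification survives even after the predecessor becomes Finished and only one non-Finished neighbor remains.

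The one point that deserves a remark---and the closest thing to an obstacle---is that Lemma~\ref{predecessor} explicitly excluded the first robot $r_1$, whereas the present statement is meant to include it. I would resolve this by recalling that $r_1$ keeps $c(r_1)=p_1(0)$ and has no predecessor, so its only possible non-Finished neighbor is its successor $r_2$, which carries $k_2=1>0=k_1$. This situation was already treated directly inside the proof of Lemma~\ref{3collide1}, where $r_1$ correctly identifies $r_2$ via the relation $k_1<k_2$. Hence $r_1$ is covered as well, and combining the two color cases over all non-Finished robots, $r_1$ included, establishes the lemma.
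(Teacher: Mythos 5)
Your proposal is correct and follows essentially the same route as the paper: the paper's proof is exactly the case split on the light color, citing Lemma~\ref{3collide1} for $p_1(k_i)$ and Lemma~\ref{3collide2} for $p_2(k_i)$. Your additional remark about the first robot (excluded from Lemma~\ref{predecessor} but handled inside the proof of Lemma~\ref{3collide1}) is a welcome extra precision that the paper leaves implicit, but it does not change the argument.
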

\begin{proof}
By Lemma~\ref{3collide1} (resp. Lemma~\ref{3collide2}), each non-Finished robot $r_i$ can always recognize its successor while $c(r_i)=p_1(k_i)$ (resp. $c(r_i)=p_2(k_i)$) holds. 
Thus, the lemma holds.
\qed\end{proof}

\begin{lemma}\label{3collide}
Robots cannot collide when executing Algorithm 2. 
\end{lemma}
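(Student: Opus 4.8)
The plan is to combine the two preceding lemmas, Lemma~\ref{3collide1} and Lemma~\ref{3collide2}, which already handle collision avoidance in each of the two possible color regimes. The statement to prove, Lemma~\ref{3collide}, asserts that robots cannot collide when executing Algorithm 2, and this is simply the conjunction of the two cases according to the color of the moving robot. Every robot that performs a \emph{Move} (or \emph{Enter\_Grid}) operation does so from a view type in ${\cal D}^\prime$, ${\cal C}^\prime$, ${\cal M}_1^\prime$, or ${\cal M}_2^\prime$, which are exactly the view classes triggering Rules 0-2, 2, 3, and 5. Crucially, Rules 0-2, 2, and 3 fire only when the robot's color is $p_1(k_i)$, while Rule 5 fires only when the color is $p_2(k_i)$. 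Hence every moving robot is in one of the two color regimes, and the corresponding lemma applies.

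First I would observe that a collision could only be caused by a robot $r_o$ that actually moves, so it suffices to rule out collisions for each moving robot. I would then split on the color of $r_o$ at the moment it takes its (possibly outdated) view. If $c(r_o)=p_1(k_o)$, then by Lemma~\ref{3collide1} the robot $r_o$ cannot collide, since that lemma establishes both that $r_o$ correctly recognizes its successor and that no non-Finished robot can enter $r_o$'s visible region nor can any non-Finished robot already in its view move toward it — so the view cannot become outdated in a way that produces a collision. Symmetrically, if $c(r_o)=p_2(k_o)$, then Lemma~\ref{3collide2} yields the same conclusion for the $p_2$ regime, covering movement along the second border and along each $m$-line. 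Since the color of $r_o$ must be either $p_1(k_o)$ or $p_2(k_o)$ (the color $F$ never moves, as a Finished robot has no applicable rule), these two cases are exhaustive.

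I would conclude by noting that the two lemmas together cover every rule that issues a movement: Rules 0-2, 2, and 3 are covered by the $p_1$ analysis, and Rule 5 is covered by the $p_2$ analysis; Rule 0-1 only executes \emph{SetC} and does not move the robot, and Rules 1 and 4 only change the color to $F$ and likewise do not move. Therefore no execution of Algorithm 2 can produce two robots on the same node, and the lemma follows. The proof is essentially a bookkeeping combination, so the heavy lifting has already been done in Lemma~\ref{3collide1} and Lemma~\ref{3collide2}; the only subtlety here, which I would state explicitly, is confirming that the rule-to-color correspondence makes the $p_1$/$p_2$ case split genuinely exhaustive over all movement-producing rules, so that no moving configuration escapes the earlier analysis.
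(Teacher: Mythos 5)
Your proof is correct and follows essentially the same route as the paper: the paper's own proof also just invokes Lemma~\ref{3collide1} for the $p_1(k_i)$ regime and Lemma~\ref{3collide2} for the $p_2(k_i)$ regime, and closes by noting that Finished robots never move. Your additional bookkeeping about which rules fire under which colors is a fine (if more verbose) way of making the case split explicitly exhaustive, but it adds nothing beyond the paper's two-line argument.
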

\begin{proof}
By Lemma~\ref{3collide1} (resp. Lemma~\ref{3collide2}), while the light color is $p_1(k_i)$ (resp. $p_2(k_i)$), robots cannot collide.
Because each robot cannot move after it becomes Finished, the lemma holds.\qed
\end{proof}

Next, we show that Algorithm 2 constructs a maximum independent set.

\begin{lemma}\label{31st}
The first robot $r_1$ moves to the diagonal corner, and $c(r_1)$ becomes $F$ on the corner. 
\end{lemma}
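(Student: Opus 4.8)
The plan is to mirror the proof of Lemma~\ref{1st} for Algorithm~1, adapting it to the seven-color, $\phi=3$ setting where the previous node is no longer directly recognizable but the successor (and hence the walking direction) is recovered via Lemma~\ref{predecessor2}. First I would trace the life of $r_1$ from the Door node: its initial color is $p_1(0)$, its view on the Door node is {\sf Door1} in ${\cal D}^\prime$, so by Rule~0-2 it performs ${\it Enter\_Grid}(r_1)$ and lands on the Door corner. At this point I invoke the synchronization mechanism already established in the proof of Lemma~\ref{3collide1}: $r_1$ cannot move from the Door corner until its successor $r_2$ has entered {\sf Door0} and set $c(r_2)=p_1(1)$ via Rule~0-1, making ${\it view}(r_1)={\sf StartP10}$ in ${\cal M}_1^\prime$. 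Only then does $r_1$ choose a border arbitrarily and begin moving by Rule~3.

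Next I would argue that $r_1$ traverses the entire first border. By Lemma~\ref{3collide1} the first border is one-way and no other robot can pass $r_1$, so as long as $r_1$ is not at the end of the border its view stays within ${\cal M}_1^\prime$ (namely {\sf MovP10}, cycling through the move-types as its successor follows at the prescribed distance), and it keeps moving by Rule~3 according to the node degrees. Since $r_1$ is the first robot it has no predecessor, so the only gating condition on its movement is the distance to its successor, which is repeatedly restored to two as $r_2$ (and its followers) advance; thus $r_1$ makes progress infinitely often and, the border being finite, eventually reaches its end. There ${\it view}(r_1)$ becomes {\sf OnCP1} in ${\cal C}^\prime$, and by Rule~2 it recolors to $p_2(0)$ and turns onto the second border.

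I would then repeat the same reasoning on the second border using Lemma~\ref{3collide2}: the second border is likewise one-way, ${\it view}(r_1)$ becomes {\sf MovP2} in ${\cal M}_2^\prime$, and $r_1$ advances by Rule~5 toward the diagonal corner. Reaching the diagonal corner, ${\it view}(r_1)$ becomes {\sf OnCP2} in ${\cal F}_2^\prime$, and since $c(r_1)=p_2(0)$, Rule~4 sets $c(r_1):=F$, completing the claim.

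The main obstacle, and the point where this proof genuinely differs from Lemma~\ref{1st}, is justifying that $r_1$ can always determine its walking direction without port numbers: the argument must lean on Lemma~\ref{predecessor2} (or directly on the successor-recognition established inside Lemmas~\ref{3collide1} and~\ref{3collide2}) so that $r_1$ consistently moves \emph{away} from its successor rather than back toward the Door. A subtler wrinkle is the activation/waiting interplay at the Door corner and the very first hop: I must confirm that the ${\it SetC}$ handshake does not deadlock $r_1$ and that $r_1$'s first move is correctly governed by {\sf StartP10} before any ambiguity about the two incident borders can arise. Once the direction-recovery and the handshake are pinned down, the remainder is a routine transcription of the Algorithm~1 argument through the primed view classes.
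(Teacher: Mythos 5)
Your proposal is correct and follows essentially the same route as the paper: the paper's proof likewise leans on Lemmas~\ref{3collide1} and~\ref{3collide2} (one-way borders, order preservation, successor recognition) and then traces $r_1$ through {\sf Door1} $\to$ {\sf StartP10} $\to$ {\sf MovP10} $\to$ {\sf OnCP1} $\to$ {\sf MovP2} $\to$ {\sf OnCP2}, finishing with Rule~4. The extra details you spell out (the ${\it SetC}$ handshake at the Door corner and the distance-two gating before {\sf OnCP1}/{\sf OnCP2} can apply) are exactly the parts of the proofs of Lemmas~\ref{predecessor} and~\ref{3collide1} that the paper invokes by citation rather than repeating.
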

\begin{proof}
By the proofs of Lemmas~\ref{3collide1} and \ref{3collide2}, while robots move on the grid, they keep the order they entered the grid.

By the proof of Lemma~\ref{3collide1}, $r_1$ eventually arrives at the end of the first border, and then
$r_1$'s successor $r_2$ is on the node three hops behind. 
When the distance between $r_1$ and $r_2$ becomes two, then ${\it view}(r_1)$ becomes {\sf OnCP1} in ${\cal C}^\prime$.

After that, by the proof of Lemma~\ref{3collide2}, $r_1$ eventually arrives at the diagonal corner because $r_1$ is the first robot.
When the distance between $r_1$ and $r_2$ becomes two, ${\it view}(r_1)$ becomes {\sf OnCP2} in ${\cal F}_2^\prime$.  
By Rule 4, because $c(r_1)=p_2(k_1)$, it changes its color to $F$ on the corner.

Thus, the lemma holds.\qed
\end{proof}

\begin{lemma}\label{31stB}
The first $\lceil l/2 \rceil$ robots move to the second border, and their colors become $F$.
Additionally, nodes on the second border are empty or occupied by a robot alternately from the diagonal corner. 
\end{lemma}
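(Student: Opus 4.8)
The plan is to mirror the structure of the analogous Lemma~\ref{1stB} from Algorithm~1, adapting it to the seven-color setting, and to lean heavily on the two collision/recognition lemmas (Lemmas~\ref{3collide1} and \ref{3collide2}) together with Lemma~\ref{31st}. The target claim has two parts: (i) the first $\lceil l/2\rceil$ robots all reach the second border (the $0$-line) and become Finished, and (ii) the Finished positions alternate with empty nodes starting from the diagonal corner. I would prove both by induction on the index of the robot, using as the base case Lemma~\ref{31st}, which already establishes that $r_1$ reaches the diagonal corner and turns $F$.

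First I would set up the inductive step for an arbitrary successor $r_i$ with $2\le i\le\lceil l/2\rceil$. By the order-preservation established inside Lemmas~\ref{3collide1} and \ref{3collide2}, $r_i$ travels the first border behind its predecessor $r_{i-1}$ keeping distance three (and distance two from its own successor), so $r_i$ eventually reaches the end of the first border with view {\sf OnCP1} in ${\cal C}^\prime$, changes color to $p_2(k_i)$ by Rule~2, and turns onto the $0$-line. I would then invoke Lemma~\ref{3collide2}: on the second border $r_i$ proceeds via {\sf MovP2} in ${\cal M}_2^\prime$ while its non-Finished predecessor $r_{i-1}$ is still ahead, and it cannot stray from the line. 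The key is that $r_i$ stops exactly when it sees a Finished robot two hops ahead: its view becomes {\sf P2Stop} in ${\cal F}_2^\prime$ and Rule~4 sets $c(r_i):=F$. Since by the induction hypothesis $r_{i-1}$ is Finished exactly two hops from the diagonal corner's existing occupant, $r_i$ halts two hops behind $r_{i-1}$, which gives the alternation claim. As in Lemma~\ref{1stB}, I must treat the boundary case $l=3$ separately, where a robot reaching the first-border endpoint sees the pattern {\sf OnCP1F} in ${\cal F}_1^\prime$ and Finishes immediately by Rule~1 without entering {\sf P2Stop}; and I must handle the parity of $l$ at the terminal robot, noting that the $\lceil l/2\rceil$-th robot Finishes via {\sf OnCP1F} when $l$ is odd and otherwise turns onto the $0$-line and Finishes via {\sf P2Stop}.

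The counting part, that exactly $\lceil l/2\rceil$ robots settle on the $0$-line, then follows: consecutive Finished robots sit two hops apart starting from the diagonal corner, so on a border of length $l$ the occupied nodes are the even-indexed ones, yielding $\lceil l/2\rceil$ robots; the very next robot to arrive no longer finds an empty reachable node on the $0$-line and instead begins the $1$-line (handled in the subsequent lemma).

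The main obstacle I anticipate is not the qualitative routing—that is essentially forced by the view definitions—but rather verifying that the \emph{distance-three vs. distance-two} gating in the seven-color model produces the same clean alternation as the simpler three-color model, and that no outdated view lets a robot overshoot. Concretely, because a robot may now pause at distance three from its predecessor and only advance once the predecessor has moved or Finished, I would need to argue carefully that when $r_{i-1}$ becomes Finished, $r_i$'s view transitions correctly from {\sf MovP2} to {\sf P2Stop} (and not to some intermediate state that allows an extra step that would break the even-spacing). This hinges on Lemma~\ref{predecessor2}, which guarantees $r_i$ always recognizes $r_{i-1}$ as its successor-side/predecessor-side neighbor via the $k$-value ordering, so that the stopping condition fires at precisely the two-hop separation; I would make this explicit rather than appeal to it in passing.
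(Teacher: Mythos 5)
Your proposal follows essentially the same route as the paper's proof: induction on the robot index with Lemma~\ref{31st} as the base case (the paper phrases the induction as ``the successors of $r_2$ behave the same way as $r_2$''), the same view transitions ({\sf OnCP1} $\to$ {\sf MovP2} $\to$ {\sf P2Stop} via Rules 2, 5, 4), and the same treatment of the $l=3$ and odd-$l$ boundary cases via {\sf OnCP1F}. One refinement the paper makes explicit, and which settles the gating concern you raise at the end: in the views of ${\cal F}_2^\prime$ and in {\sf OnCP1F} the successor must be present at distance two, so the trigger for $r_i$ to become Finished is not merely seeing its Finished predecessor two hops ahead but the arrival of its own successor at distance two --- this, together with the absence of any rule moving a robot onto a node adjacent to an occupied node of the border, is exactly what makes the two-hop spacing propagate down the chain.
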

\begin{proof}
By Lemma~\ref{31st}, the first robot $r_1$ eventually becomes Finished on the diagonal corner.
Then, by the definition of {\sf OnCP2} in ${\cal F}_2^\prime$ for $r_1$, the distance between $r_1$ and its successor $r_2$ is two.

Consider the execution of $r_2$ after $c(r_1)$ becomes $F$.
If $l$ is more than three, ${\it view}(r_2)$ becomes {\sf P2Stop} in ${\cal F}_2^\prime$ when the distance between $r_2$ and its successor $r_3$ becomes two.
Then, by Rule 4, $c(r_2)$ becomes $F$.
If $l$ is three, then $r_2$ is at the end of the first border, thus ${\it view}(r_2)$ becomes {\sf OnCP1F} in ${\cal F}_1^\prime$ when the distance between $r_2$ and $r_3$ becomes two.
Then, $c(r_2)$ becomes $F$ by Rule 1.
Note that, in both cases, the distance between $r_1$ and $r_2$ remains two hops.

For the successors of $r_2$, we can discuss their movements in the same way as $r_2$.
By the definitions of {\sf OnCP1F} in ${\cal F}_1^\prime$ and {\sf P2Stop} in ${\cal F}_2^\prime$, when robots become $F$ on the second border, the distance between a robot and its successor is two hops because there is no rule to move to the adjacent node of the occupied node on the border.
Therefore, on the second border, beginning with the diagonal corner, every even node is occupied, and the number of robots is $\lceil l/2 \rceil$.
If $l$ is odd, when the $\lceil l/2 \rceil$-th robot $r_i$ arrives at the end of the first border and the distance between $r_i$ and its successor becomes two, $r_i$'s view becomes {\sf OnCP1F} in ${\cal F}_1^\prime$ and $r_i$ changes its color to $F$ by Rule 1.
Otherwise, $r_i$ changes its color to $p_2(k_i)$ and moves to the second border.

Thus, the lemma holds.\qed
\end{proof}

\begin{lemma}\label{31-line}
From the $(\lceil l/2 \rceil+1)$-th to the $l$-th robots, each robot moves to the $1$-line, and its color becomes $F$.
Additionally, nodes on the $1$-line are empty or occupied by a robot alternately, beginning with an empty node.
\end{lemma}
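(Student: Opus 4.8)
The plan is to mirror the proof of Lemma~\ref{1-line} for Algorithm 1, substituting the primed view types of Algorithm 2 and relying on the successor-recognition and collision-freedom guarantees already established for this algorithm. Concretely, I would start from Lemma~\ref{31stB}, which tells us that the first $\lceil l/2\rceil$ robots are Finished on the $0$-line, occupied alternately from the (occupied) diagonal corner; in particular the node where the first border meets the $0$-line is occupied iff $l$ is odd. Let $r_i$ be the $(\lceil l/2\rceil+1)$-th robot and $r_j$ the $(\lceil l/2\rceil)$-th one, its predecessor. By Lemma~\ref{3collide1}, $r_i$ travels the first border while its color is $p_1(k_i)$, keeping distance $3$ from its predecessor and $2$ from its successor, and by Lemma~\ref{predecessor2} it always recognizes both neighbours through their $k$-values; hence it neither strays from the first border nor collides.

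The core is a case analysis on the parity of $l$, exactly as in Lemma~\ref{1-line} but with the Algorithm 2 views. If $l$ is odd, the end of the first border is occupied, so as $r_i$ approaches it the view becomes {\sf GoCo0} in ${\cal M}_1^\prime$; it advances one hop (Rule 3), its view becomes {\sf ColP1A0} in ${\cal C}^\prime$, and by Rule 2 it turns, recolours to $p_2(k_i)$, and steps onto the $1$-line. If $l$ is even, that end is empty while its $0$-line neighbour is occupied, so $r_i$'s view is directly {\sf ColP1B0} in ${\cal C}^\prime$, and Rule 2 again recolours it to $p_2(k_i)$ and moves it onto the $1$-line. (I use the ``$0$'' variants, since by Lemma~\ref{3collide2} these apply to lines other than the $0$-line and the $(L-1)$-line, and the $1$-line is such an interior line for $L\geq 3$.) In both cases, because the $0$-line is occupied alternately (Lemma~\ref{31stB}), $r_i$ then alternates between {\sf MovP2A} and {\sf MovP2B} in ${\cal M}_2^\prime$ (Rule 5), advancing along the $1$-line toward the border parallel to the first one while keeping the prescribed spacing from its successor; since $r_i$ is the first robot to reach this line and the diagonal corner is Finished (Lemma~\ref{31st}), its view eventually becomes {\sf P2StopA} in ${\cal F}_2^\prime$ and Rule 4 finishes it two hops from the diagonal corner. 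For the degenerate case $l=3$ the robot reaches {\sf P2StopA} immediately after the turn.

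Each remaining robot up to the $l$-th follows $r_i$ in the order it entered (Lemmas~\ref{3collide1},~\ref{3collide2}); by the same argument each one traverses the first border, turns onto the $1$-line, and moves until it sees its Finished predecessor two hops away on the line, at which point its view is {\sf P2StopC} in ${\cal F}_2^\prime$ and Rule 4 finishes it. When $l$ is even the last ($l$-th) robot instead stops on the intersection of the first border and the $1$-line with view {\sf P1Stop0} in ${\cal F}_1^\prime$ and is finished by Rule 1. Counting gives $\lfloor l/2\rfloor=l-\lceil l/2\rceil$ robots on the $1$-line, i.e. robots $(\lceil l/2\rceil+1)$ through $l$, as claimed. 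The alternation ``beginning with an empty node'' then follows structurally: an occupied $1$-line node forces its $0$-line neighbour to be empty by the distance-two (independent-set) constraint, and since the $0$-line begins occupied at the diagonal corner, the $1$-line node adjacent to the diagonal corner is empty and the pattern alternates from there.

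The main obstacle I anticipate is not the geometry---which transfers almost verbatim from Lemma~\ref{1-line}---but faithfully justifying, at each transition, that $r_i$ still recognizes its predecessor and successor through the $k$-value discipline of Algorithm 2, because here $r_i$ has a non-Finished predecessor (unlike on the first line) and must correctly interpret the moment that predecessor becomes Finished on the $(m-1)$-line. This is where the distance-$3$ spacing together with Lemma~\ref{predecessor2} must be invoked with care, and where I would also verify that the small cases ($l=3$ and the parity-dependent placement of the final robot) do not break the classification of the views into ${\cal M}_1^\prime$, ${\cal C}^\prime$, ${\cal M}_2^\prime$, ${\cal F}_1^\prime$, and ${\cal F}_2^\prime$.
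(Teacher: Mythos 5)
Your proposal is correct and follows essentially the same route as the paper's own proof: the same parity-of-$l$ case split ({\sf GoCo}$\to${\sf ColP1A0} when $l$ is odd, {\sf ColP1B0} when $l$ is even), the same {\sf MovP2A}/{\sf MovP2B} alternation ending in {\sf P2StopA} for the first robot on the $1$-line and {\sf P2StopC} for its followers, and the same special treatment of the $l$-th robot finishing by Rule 1 on the intersection of the first border and the $1$-line when $l$ is even. Two naming-level slips to fix: the paper's odd case admits {\sf GoCo0} \emph{or} {\sf GoCo1} (the latter being the Door-corner variant, needed when $L=3$), and the $l$-th robot's final view is {\sf P1Stop1}, not {\sf P1Stop0} (the paper reserves {\sf P1Stop0} for the Door corner at the very end of the execution, in Lemma~\ref{32hops}).
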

\begin{proof}
By Lemma~\ref{31stB}, $\lceil l/2 \rceil$ robots on $0$-line eventually become Finished.
By the definitions of ${\cal F}_1^\prime$ and ${\cal F}_2^\prime$, except on the Door corner, each robot can change its color to $F$ only when the distance from its successor is two.

Let $r_i$ be the $(\lceil l/2 \rceil+1)$-th robot, $r_h$ be the $(\lceil l/2 \rceil)$-th robot (i.e., $r_h$ is the predecessor of $r_i$), and $r_j$ be the $(\lceil l/2 \rceil+2)$-th robot (i.e., $r_j$ is the successor of $r_i$).
$r_i$ and $r_j$ move from the Door node in the same way as $r_h$ while $c(r_h)\neq F$.
Because robots on $0$-line (including $r_h$) become Finished eventually and then the distance between $r_i$ and $r_h$ is two, one of the following two cases occurs: When the distance between $r_i$ and $r_j$ becomes two, \emph{(1)} if $l$ is odd, ${\it view}(r_i)$ becomes {\sf GoCo0} or {\sf GoCo1} in ${\cal M}_1^\prime$, because the end of the first border is occupied by $r_h$, or \emph{(2)} if $l$ is even, ${\it view}(r_i)$ becomes {\sf ColP1B0}, because the end of the first border is empty but its adjacent node on the second border is occupied by $r_h$.

In case \emph{(1)}, by Rule 3, $r_i$ moves to the node in front of the end of the first border.
Then, after $r_j$ comes to the node two hops behind by {\sf MovP10}, ${\it view}(r_i)$ becomes {\sf ColP1A0} in ${\cal C}^\prime$.
Then, by Rule 2, $c(r_i)$ becomes $p_2(k_i)$ and $r_i$ moves to $1$-line.
After that, when $r_j$ comes to the node two hops away from $r_i$ by {\sf MovP11}, if $l=3$, ${\it view}(r_i)$ becomes {\sf P2StopA} in ${\cal F}_2^\prime$ and $r_i$ changes its color to $F$ by Rule 4.
Otherwise, because $r_i$ can see Finished robots on $0$-line, ${\it view}(r_i)$ becomes {\sf MovP2A} in ${\cal M}_2^\prime$.
Then, because the nodes on $0$-line are occupied alternately, ${\it view}(r_i)$ becomes {\sf MovP2B} and {\sf MovP2A} (in ${\cal M}_2^\prime$) alternately by the execution of Rule 5.
Thus, $r_i$ moves toward the other side border that is parallel to the first border by Rule 5 and $r_j$ follows $r_i$.
Finally, ${\it view}(r_i)$ eventually becomes {\sf P2StopA} in ${\cal F}_2^\prime$ because the diagonal corner is occupied by a Finished robot (Lemma~\ref{31st}). 
Then, by Rule 4, $c(r_i)$ eventually becomes $F$.
Because $l$ is odd, $\lfloor l/2 \rfloor-1$ successors of $r_i$ follow $r_i$, and eventually their views become {\sf P2StopC} in ${\cal F}_2^\prime$, and they change their colors to $F$ by Rule 4 on $1$-line.

In case \emph{(2)}, $r_i$ also changes its color to $p_2(k_i)$, and moves to $1$-line by Rule 2.
After that, because $r_i$ can see Finished robots on $0$-line, ${\it view}(r_i)$ becomes {\sf MovP2B} in ${\cal M}_2^\prime$. 
Then, in the same way as for case \emph{(1)}, $l/2-1$ robots including $r_i$ become Finished on $1$-line.
After that, the view of the next robot $r_l$ ($l$-th robot) becomes {\sf P1Stop1} in ${\cal F}_1^\prime$ on the intersection of the first border and $1$-line, and $r_l$ becomes Finished by Rule 1. 

Thus, the lemma holds.\qed
\end{proof}

\begin{lemma}\label{32hops}
The distance between any two robots on the grid is two hops after every robot becomes Finished.
\end{lemma}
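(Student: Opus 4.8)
The plan is to mirror the proof of Lemma~\ref{2hops} for Algorithm~1, replacing the port-number-based successor recognition with the $k$-value mechanism already established in Lemmas~\ref{predecessor}, \ref{predecessor2}, \ref{3collide1}, and \ref{3collide2}. The target is to show that the final occupied configuration is a checkers pattern, in which every two occupied nodes lie at distance exactly two. First I would set up an induction on lines. Lemmas~\ref{31stB} and \ref{31-line} provide the base cases: the $0$-line (second border) is occupied alternately beginning, at the diagonal corner, with an occupied node, and the $1$-line is occupied alternately beginning, at its end adjacent to the first border, with an empty node. Thus consecutive lines have opposite occupation parity, which is precisely the required checkers offset.

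I would then prove by induction on $m$, for $0 < m < L-1$, that once the robots destined for $m$-line become Finished, the nodes on $m$-line are occupied alternately with parity opposite to that of $(m-1)$-line. The inductive step reuses the movement analysis of Lemma~\ref{3collide2}: a robot reaching $m$-line through a view in ${\cal C}^\prime$ (namely {\sf ColP1A0}, {\sf ColP1A1}, {\sf ColP1B0}, or {\sf ColP1B1}) traverses the line via {\sf MovP2A}/{\sf MovP2B} in ${\cal M}_2^\prime$ and, by Rule~5, keeps moving as long as its successor is two hops behind and its non-Finished predecessor (if any) three hops away. It stops and turns $F$ only on a view in ${\cal F}_2^\prime$, which forces distance two from its same-line predecessor.

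The crucial no-adjacency argument is identical in spirit to Algorithm~1. Suppose for contradiction that two adjacent Finished robots appear on $m$-line; then there is a robot $r_r$ stuck at a node of $m$-line adjacent to an occupied node, sitting one hop from a Finished robot on $(m-1)$-line. But by the inductive hypothesis the nodes on $(m-1)$-line are occupied alternately, so $r_r$'s view is of type {\sf MovP2B}, and $r_r$ can still move by Rule~5 before becoming Finished, contradicting its being stuck. Hence the occupation on $m$-line alternates, necessarily offset by one from $(m-1)$-line, preserving the inductive hypothesis for the next line. Finally I would treat the terminating $(L-1)$-line, the border connected to the Door corner, exactly as in Lemma~\ref{2hops}: once $(L-2)$-line is Finished, the parities of $l$ and $L$ determine the view from the Door node (cf.\ Fig.~\ref{fig:checkers}). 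If both dimensions have the same parity the view is {\sf Door1}, the robot enters, obtains view {\sf ColP1B0} or {\sf ColP1B1} on the Door corner, eventually reaches {\sf P1Stop0} or {\sf P1Stop1} in ${\cal F}_1^\prime$, and becomes Finished at distance two from its predecessor; otherwise the view is {\sf Door2}, the Door corner yields {\sf ColP1A0} or {\sf ColP1A1}, and the last empty node adjacent to the Door corner is filled by a Finished robot arriving along $(L-1)$-line. In both cases no further robot can enter, since no rule applies at the Door node thereafter.

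The step I expect to be the main obstacle is the inductive no-adjacency argument for a generic $m$-line: I must verify that every way a robot can arrive on $m$-line and every intermediate stopping view in ${\cal F}_2^\prime$ is consistent with the alternating hypothesis on $(m-1)$-line, so that {\sf MovP2B} really is available whenever a robot sits next to a same-line occupied node, and that the extra view variants carrying the $k$-value (the $0$/$1$ suffixes) never produce a configuration in which a robot becomes Finished one hop from another. The boundary-line parity bookkeeping is routine once the induction is in place, and combining the base cases, the induction, and this boundary analysis yields that every line is alternately occupied with the correct offset, so all occupied nodes are at pairwise distance two and the lemma follows.
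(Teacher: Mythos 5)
Your proposal follows essentially the same route as the paper's proof: the distance-two guarantee coming from the definitions of ${\cal F}_1^\prime$ and ${\cal F}_2^\prime$, the line-by-line alternation argument grounded in Lemmas~\ref{31stB} and \ref{31-line}, the contradiction via a stuck robot $r_r$ that would still have view {\sf MovP2B} and thus could move by Rule~5, and the end-game case analysis of the $(L-1)$-line through the Door-node views. The only slip is cosmetic: in Algorithm~2 the same-parity end-game view from the Door node is {\sf Door3} or {\sf Door4} (not {\sf Door1}, which is the ordinary entering view), after which the Door corner sees {\sf ColP1B1} and eventually {\sf P1Stop0}; your argument's structure is otherwise identical to the paper's.
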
 
\begin{proof}
%
By the definitions of ${\cal F}_1^\prime$ and ${\cal F}_2^\prime$, the distance between a robot $r_i$ and its predecessor $r_j$ is two hops after each robot becomes Finished if $r_i$ and $r_j$ are on the same line.
Thus, when the robots on $m$-line ($0<m<L-1$) become Finished, if there are two adjacent Finished robots to the contrary, then there is a robot $r_r$ on $m$-line that cannot move from the node that is adjacent to a node occupied by a Finished robot on $(m-1)$-line.
However, by the same argument as in Lemmas~\ref{31stB} and \ref{31-line}, if $m$ is odd (resp. even), the nodes on $m$-line are occupied alternately beginning with an empty node (resp. occupied node) because the nodes on $(m-1)$-line are also occupied alternately beginning with an occupied node (resp. empty node).
Thus, before such $r_r$ becomes Finished, $r_r$ has a view of type {\sf MovP2B} and can move by Rule 5, i.e., such $r_r$ cannot exist.

Now, to consider the end of the execution of the algorithm, we consider $(L-1)$-line when nodes on $(L-2)$-line are occupied by Finished robots. 
The $(L-1)$-line is a border connected to the Door corner.
Then, if both $l$ and $L$ are odd or both are even, the view from the Door node becomes {\sf Door4} or {\sf Door3}, otherwise {\sf Door2} in ${\cal D}^\prime$ (See Fig.~\ref{fig:checkers}). 
Note that, in the case of {\sf Door3}, the last robot $r_i$ on the $(L-2)$-line becomes Finished by {\sf P2StopC} in ${\cal F}_2^\prime$ when its successor $r_j$ arrives at the Door corner.
Then, after $r_j$ moves two hops (i.e., by {\sf ColP1B1} in ${\cal C}^\prime$ and {\sf MovP2B} in ${\cal M}_2^\prime$ respectively), the view from the Door node becomes {\sf Door4} for $r_j$'s successor.
\begin{itemize}
\item If the view from the Door node is {\sf Door4} or {\sf Door3}, the robot on the Door node moves to the Door corner by Rule 0-2. 
Then, the view from the Door corner is {\sf ColP1B1} in ${\cal C}^\prime$.
By the same discussion as above (Fig.~\ref{fig:checkers}), the view from the Door corner eventually becomes {\sf P1Stop0} in ${\cal F}_1^\prime$, thus the final robot on the Door corner becomes Finished by Rule 1.
Then, any other robots cannot enter into the grid because there is no such rule.
\item If the view from the Door node is {\sf Door2}, the view from the Door corner is {\sf ColP1A1} in ${\cal C}^\prime$.
Then, the empty node $v$ that is adjacent to the Door corner is eventually occupied by a Finished robot on $(L-1)$-line (Fig.~\ref{fig:checkers}).
After that, 
any other robots on the Door node cannot enter into the grid because there is no such rule. 
\end{itemize}
Thus, the lemma holds.\qed
\end{proof}

\begin{lemma} 
Every robot on the grid is eventually Finished.
\end{lemma}
\begin{proof}
By the proofs of Lemmas~\ref{predecessor}-\ref{32hops}, the transitions of the view type of each robot are shown as Fig.~\ref{fig:Alg2}. 
Thus, the lemma holds.\qed
\end{proof}
\begin{figure}[t]
    \centering
    \subfigure[For $0$-line]{\includegraphics[width=0.38\textwidth]{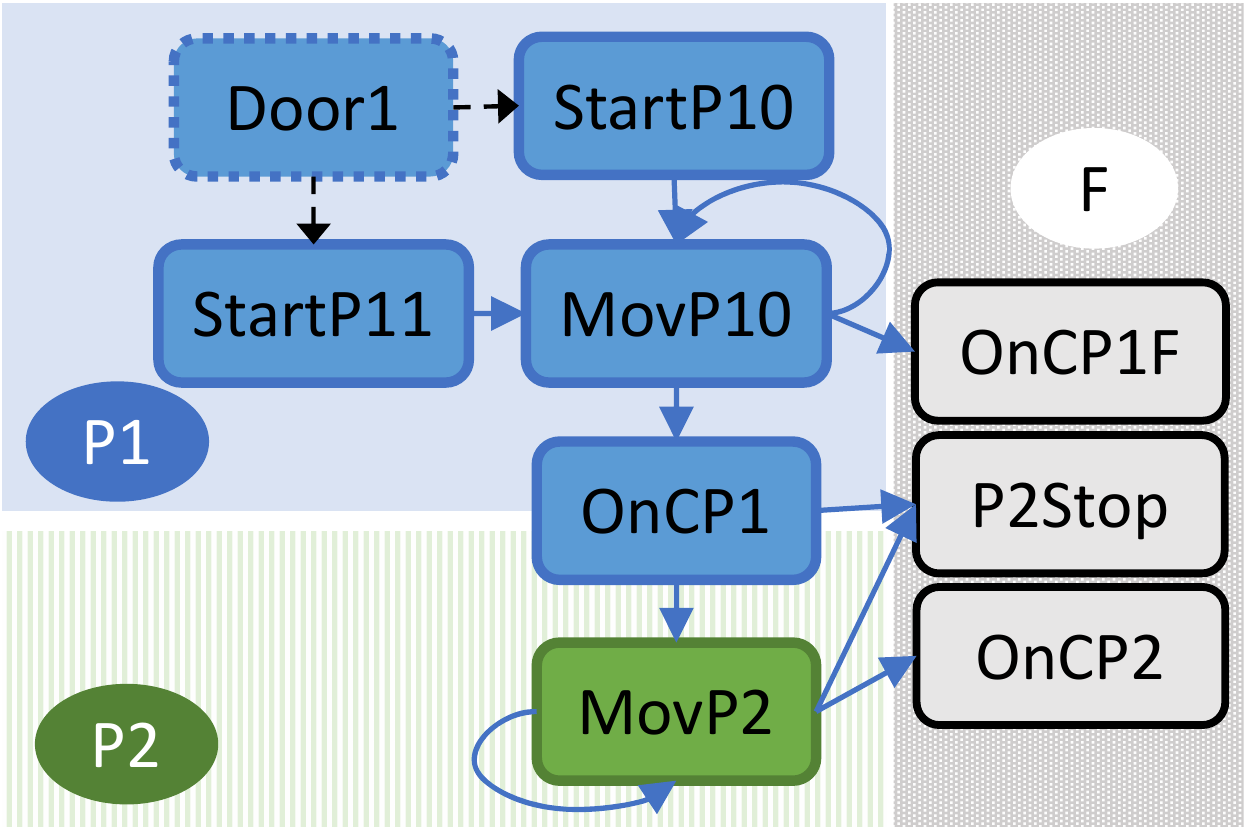}}\hspace{0.15cm}
    \subfigure[For $(L-2)$-line and $(L-3)$-line]{\includegraphics[width=0.53\textwidth]{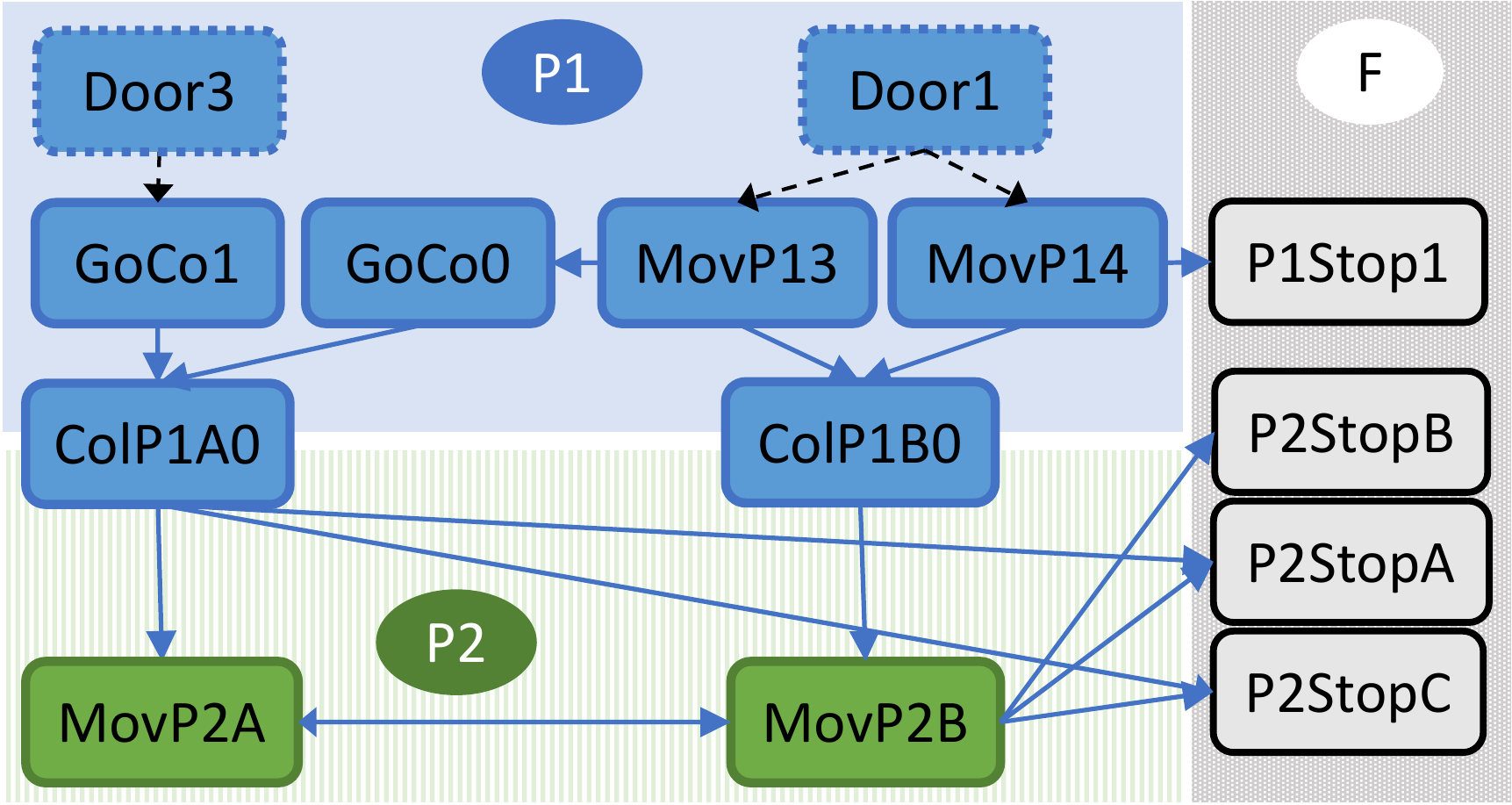}}\\
    \subfigure[For $(L-1)$-line]{\includegraphics[width=0.43\textwidth]{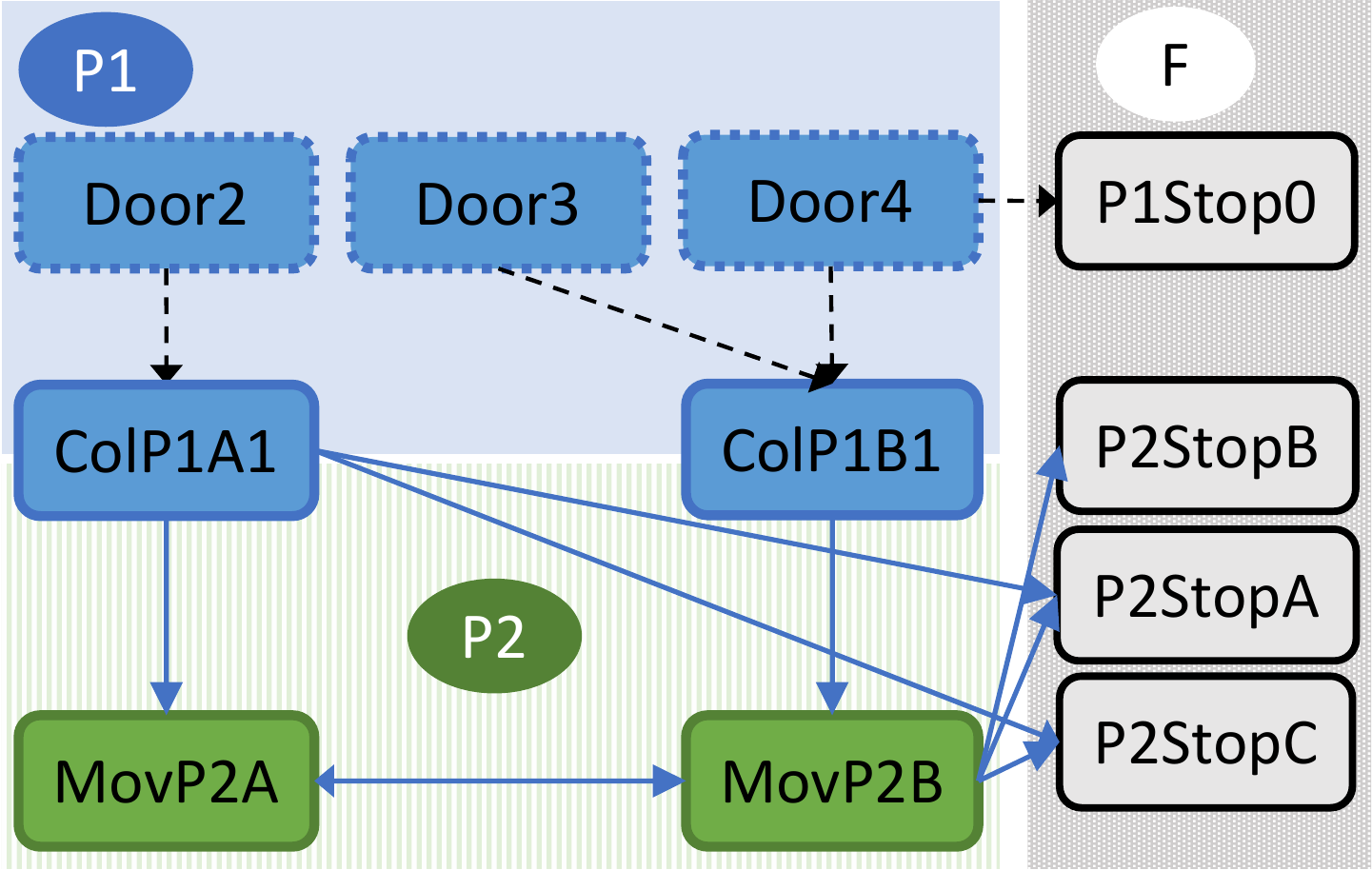}}\hspace{0.15cm}
    \subfigure[For other lines]{\includegraphics[width=0.52\textwidth]{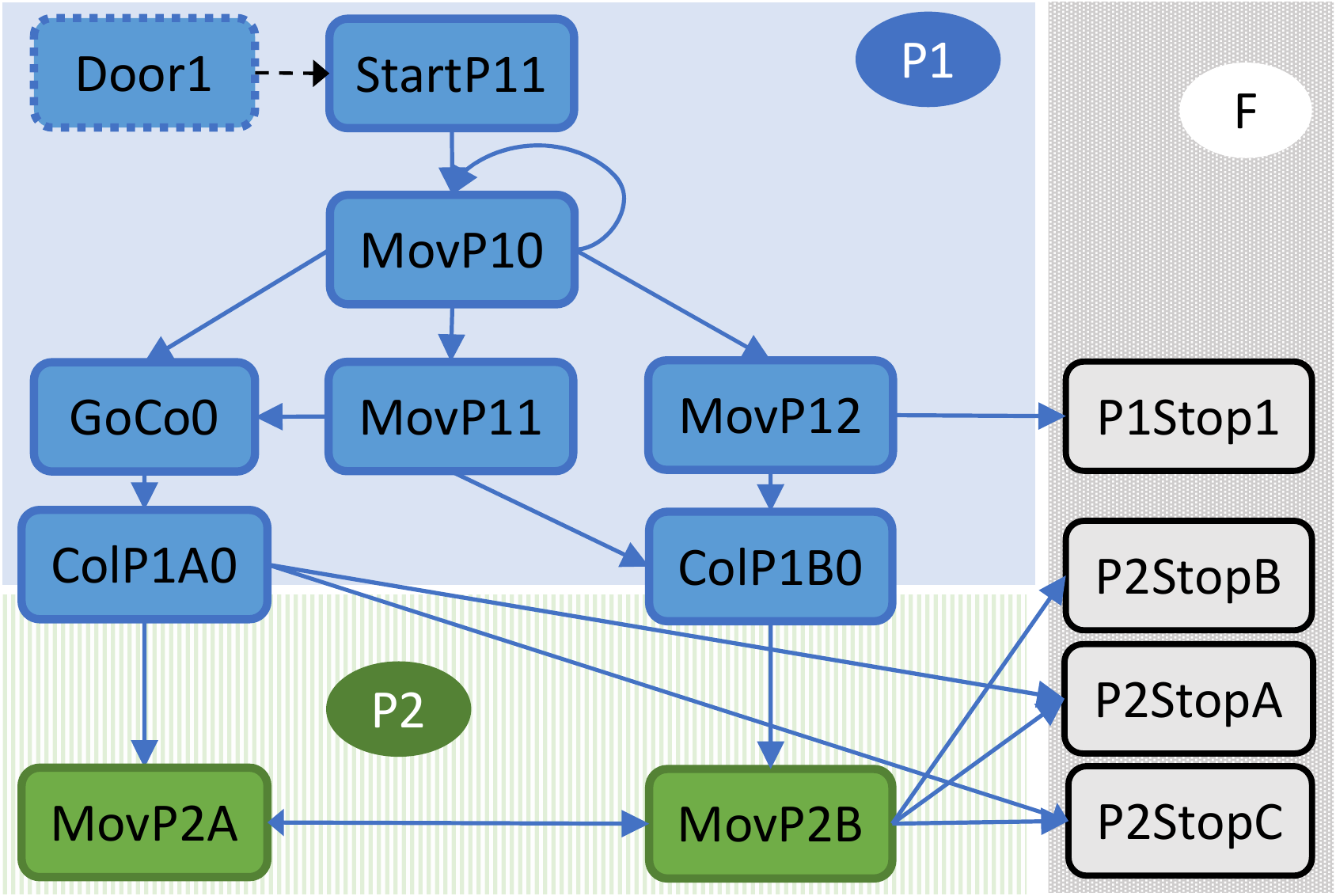}}
    \caption{View type transitions of Algorithm 2. Each solid arrow (resp. dotted arrow) represents a transition by a {\it Move} (resp. {\it Enter\_Grid}) operation. We omitted the view {\sf Door0}.}
    \label{fig:Alg2}
\end{figure}

By Lemma~\ref{32hops}, distances between any two occupied nodes are two.
Thus, by the same discussion as Theorem~\ref{1MIS},
we are now able to state our main result:

\begin{theorem}\label{MIS}
Algorithm 2 constructs a maximum independent set by occupied locations on the grid.
\end{theorem}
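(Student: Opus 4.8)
The plan is to reduce Theorem~\ref{MIS} entirely to the geometric characterization already established for Algorithm~1, just as the text preceding the statement suggests. The essential observation is that Theorem~\ref{1MIS} did not actually rely on the port-numbering assumption in its final argument: it took as input only the fact (Lemma~\ref{2hops}) that every pair of occupied nodes is at distance exactly two once all robots are Finished, and then purely combinatorially concluded that the occupied set is a checkers pattern realizing a maximum independent set. Therefore the task here is to supply the same distance-two guarantee for Algorithm~2 and invoke the identical counting argument.

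First I would cite Lemma~\ref{32hops}, which is the exact analogue of Lemma~\ref{2hops} for Algorithm~2, to assert that after every robot becomes Finished, the distance between any two occupied nodes is two. Second, I would recall that on a grid the only configurations in which all occupied positions are pairwise at distance two (and each occupied node has its grid-neighbors unoccupied while being maximal) are the checkers patterns depicted in Fig.~\ref{fig:checkers}. Third, I would replay the case analysis of Theorem~\ref{1MIS}: when at least one of $l,L$ is even, any checkers pattern has equally many occupied and unoccupied nodes and hence is automatically a maximum independent set; when both $l$ and $L$ are odd, the two complementary checkers patterns differ in cardinality by one, and the maximum is the one placing robots in the four corners. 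Here I would point to the end-of-execution analysis inside Lemma~\ref{32hops} (the {\sf Door2}, {\sf Door3}, {\sf Door4} cases) showing that Algorithm~2 terminates precisely in the corner-occupying pattern, so it constructs the larger of the two, i.e., the maximum independent set.

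Because all the genuine work is already discharged by Lemmas~\ref{3collide1}--\ref{32hops}, the proof itself is essentially one sentence: by Lemma~\ref{32hops} the occupied nodes are pairwise at distance two, and by the same combinatorial reasoning as in Theorem~\ref{1MIS} this yields a maximum independent set. The only subtlety I would flag is making sure that the final pattern is the correct one in the odd-odd case. This is the single point where Algorithm~2 could in principle differ from Algorithm~1, since the parity-based stopping behavior at the $(L-1)$-line is governed by the new door-view types {\sf Door2}/{\sf Door3}/{\sf Door4} rather than {\sf Door1}/{\sf Door2}. I would therefore be careful to note that Lemma~\ref{32hops} already verified the termination configuration occupies the corners, so no additional argument is needed beyond reusing Theorem~\ref{1MIS}'s maximality count.

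I do not expect any serious obstacle, precisely because the heavy lifting (collision-freedom, successor recognition via the $k_i$ labels, the alternating-occupancy invariant line by line, and the parity-correct termination) was completed in the preceding lemmas. The main thing to get right is the \emph{framing}: presenting the result as a transfer of the Theorem~\ref{1MIS} argument under the substitution of Lemma~\ref{32hops} for Lemma~\ref{2hops}, rather than re-deriving the checkers-pattern classification from scratch. If I wanted to be fully self-contained I might briefly restate why distance-two forces a checkers pattern on a grid, but given the explicit pointer ``by the same discussion as Theorem~\ref{1MIS}'' in the excerpt, the intended proof is simply to invoke Lemma~\ref{32hops} and defer to the earlier maximality analysis.
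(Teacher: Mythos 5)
Your proposal is correct and matches the paper's own argument: the paper proves Theorem~\ref{MIS} exactly by invoking Lemma~\ref{32hops} for the pairwise distance-two property and then appealing to ``the same discussion as Theorem~\ref{1MIS}'' for the checkers-pattern classification and the odd-odd corner-occupancy case. Your extra care in flagging that the corner-occupying pattern is guaranteed by the end-of-execution analysis in Lemma~\ref{32hops} is a sound (and slightly more explicit) rendering of what the paper leaves implicit.
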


By the proofs of Lemmas~\ref{31stB}--\ref{32hops}, nodes on the even-numbers (resp. odd-numbers) lines are occupied by $\lceil l/2\rceil$ (resp. $\lfloor l/2 \rfloor$) robots.
By the same discussion as Lemma~\ref{number}, the following lemma holds.
\begin{lemma}\label{3number}
When a maximum independent set is constructed, $\lceil n/2 \rceil$ robots are on the grid.
\end{lemma}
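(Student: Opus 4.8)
The plan is to reduce the counting to the structural description of the final configuration already supplied by Lemmas~\ref{31stB}--\ref{32hops}, and then carry out a short parity computation, exactly mirroring the argument of Lemma~\ref{number}. First I would invoke those lemmas to record that, once every robot is Finished, the occupied nodes form a checkers pattern in which each even-numbered line carries $\lceil l/2\rceil$ robots and each odd-numbered line carries $\lfloor l/2\rfloor$ robots. The point is that Algorithm~2 routes robots along the very same one-way lines and applies the same stopping conditions as Algorithm~1 (with the port-numbering replaced by the $k_i$-ordering mechanism of Lemma~\ref{predecessor}), so the per-line occupancy is identical; this is precisely what lets me reuse the earlier counting verbatim.

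Second, I would count lines by parity. The lines are numbered $0,1,\dots,L-1$. If $L$ is even there are $L/2$ even lines and $L/2$ odd lines, so the total is $(L/2)(\lceil l/2\rceil+\lfloor l/2\rfloor)=lL/2$, using the identity $\lceil l/2\rceil+\lfloor l/2\rfloor=l$. If $L$ is odd there are $(L+1)/2$ even lines and $(L-1)/2$ odd lines, which gives $l\lfloor L/2\rfloor+\lceil l/2\rceil$ after the same identity.

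Third, I would compare these totals against $\lceil n/2\rceil$ with $n=lL$ in each parity combination. When $L$ is even, $n$ is even and the count $lL/2$ equals $n/2=\lceil n/2\rceil$; when $L$ is odd and $l$ is even the count again collapses to $lL/2=n/2$; and when both $L$ and $l$ are odd the count becomes $(lL+1)/2=(n+1)/2=\lceil n/2\rceil$, which is the corner-occupied configuration singled out as the maximum independent set in Theorem~\ref{1MIS}. In every case the value coincides with $\lceil n/2\rceil$, establishing the lemma.

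The main obstacle here is essentially bookkeeping rather than a genuine difficulty: the substantive content --- that robots alternate occupied and empty along every line and that each line is saturated with the correct count beginning from the appropriate parity --- is already furnished by Lemmas~\ref{31stB}--\ref{32hops}, so the only care required is to keep the ceiling and floor conventions consistent across the four parity combinations and to confirm that the odd-odd case indeed selects the larger (corner-occupied) checkers pattern. Given those checks, the equality with $\lceil n/2\rceil$ follows immediately, just as in Lemma~\ref{number}.
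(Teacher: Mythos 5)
Your proposal is correct and follows exactly the paper's route: the paper likewise cites Lemmas~\ref{31stB}--\ref{32hops} for the per-line counts ($\lceil l/2\rceil$ on even lines, $\lfloor l/2\rfloor$ on odd lines) and then appeals to the same parity computation as Lemma~\ref{number}, which yields $lL/2$ for $L$ even and $l\lfloor L/2\rfloor+\lceil l/2\rceil$ for $L$ odd, both equal to $\lceil n/2\rceil$ since $n=lL$. Your write-up is simply a more explicit version of that same argument, spelling out the case checks the paper leaves implicit.
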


Each robot $r_i$ sets its value $k_i$ at most once by Rule 0-1.
By the same discussion of Theorem~\ref{time}, the following theorem holds.
\begin{theorem}\label{3time}
The time complexity of Algorithm 2 is $O(n(L+l))$ steps.
\end{theorem}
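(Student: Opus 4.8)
The plan is to mirror the analysis of Theorem~\ref{time}, counting the total number of rule executions (\emph{i.e.}, steps) summed over all robots, and to argue that the only new ingredient of Algorithm~\ref{code:alg2} with respect to Algorithm~\ref{code:alg1}---the color-setting operation of Rule 0-1---contributes only a lower-order term. Since a step is the execution of a rule and not an idle activation in which no guard is satisfied, the activations in which a robot merely waits (for the distance to its successor to become two, or for its predecessor to become Finished) do not count; it therefore suffices to bound, for each robot, the number of rules it actually triggers.

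First I would bound the number of steps of a single robot. By the proofs of Lemmas~\ref{3collide1} and~\ref{3collide2}, each robot moves along the first border and then along exactly one line, so its trajectory is no longer than that of the first robot $r_1$. By Lemma~\ref{31st}, $r_1$ traverses the first border ($L-1$ hops) and then the second border up to the diagonal corner ($l-1$ hops), performing $L+l-2$ Move operations (via Rules 2, 3, and 5); in addition, $r_1$ executes Rule 0-2 once to enter the grid and changes its color at most twice (once to $p_2(k_1)$ by Rule 2 and once to $F$ by Rule 1 or Rule 4). Hence $r_1$ executes $O(L+l)$ rules, and, being the robot with the longest route, it bounds every other robot.

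It remains to account for Rule 0-1. As remarked before the statement, each robot sets its value $k_i$ at most once, so Rule 0-1 adds at most one step per robot and leaves the per-robot bound at $O(L+l)$. Finally, by Lemma~\ref{3number}, the constructed configuration contains $\lceil n/2 \rceil = O(n)$ robots, each of which enters and finishes within $O(L+l)$ steps; summing gives $O(n)\cdot O(L+l)=O(n(L+l))$.

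I expect the only delicate point to be the bookkeeping under the asynchronous scheduler: one must verify that every activation in which a robot is blocked triggers no rule and is thus not counted as a step. This follows directly from the guards of Algorithm~\ref{code:alg2} together with the one-way, bounded-distance movement established in Lemmas~\ref{3collide1} and~\ref{3collide2}, so no argument genuinely beyond that of Theorem~\ref{time} is required.
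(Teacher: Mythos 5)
Your proposal is correct and follows essentially the same route as the paper: bound the per-robot number of rule executions by that of the first robot, which is $O(L+l)$, note that Rule 0-1 adds at most one extra step per robot, and multiply by the $\lceil n/2\rceil = O(n)$ robots given by Lemma~\ref{3number}. The paper's own proof is just a terser version of this argument (it invokes the discussion of Theorem~\ref{time} plus the one-shot nature of Rule 0-1), so your additional bookkeeping about idle activations and move counts is merely a more explicit rendering of the same idea.
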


\section{Conclusion}
We proposed two algorithms to construct a maximum independent set on an unknown size grid in the case that the Door node is connected to a corner node.
One of our algorithms uses only three colors for each robot light and $\phi=2$, but it assumes port numbering.
The other uses seven colors for each robot light and $\phi=3$, and it executes in a completely anonymous graph.
Both of the time complexity are $O(n(L+l))$ steps.


Some interesting questions remain open:
\begin{itemize}
    \item Are there any algorithms for the case where each robot has no light or two light colors? Following the results by Hesari et al.~\cite{HFNOS14c} for the continuous line setting, we conjecture their impossibility result for oblivious (\emph{a.k.a.} no-light robots) can be extended to the discrete asynchronous and unoriented setting. 
    \item Are there any algorithms for the case where the visibility range is less than two? 
\item Are there any algorithms for other assumptions of the Door node? For example, the Door node can be connected to another node, and there may be multiple Door nodes.
    \item Are there any algorithms that can tolerate maximum independent set reconfiguration in the case of robot crashes? We conjecture that, assuming a failing robot turns off its light (that is, crash failures can be detected by other robots), it is possible to extend our algorithm to adjust the remaining robots and introduce new ones so that the maximum independent set is reconstructed. 
\end{itemize}
Additionally, we plan to design algorithms for the case of a maximal independent set placement, and minimum dominating set placement, that requires fewer robots.

\bibliographystyle{splncs04}
\bibliography{biblio2}
\end{document}